\newcolumntype{L}[1]{>{\raggedright\let\newline\\\arraybackslash\hspace{0pt}}m{#1}}
\newcolumntype{C}[1]{>{\centering\let\newline\\\arraybackslash\hspace{0pt}}m{#1}}
\newcolumntype{R}[1]{>{\raggedleft\let\newline\\\arraybackslash\hspace{0pt}}m{#1}}
\newtheorem{prop}{Proposition}[section]
\newtheorem{lem}{Lemma}[section]
\theoremstyle{plain}
\newtheorem{resu}{Result}[section]
\theoremstyle{plain}
\newcommand{\pgftextcircled}[1]{                                                                    
    \setbox0=\hbox{#1}%
    \dimen0\wd0%
    \divide\dimen0 by 2%
    \begin{tikzpicture}[baseline=(a.base)]%
        \useasboundingbox (-\the\dimen0,0pt) rectangle (\the\dimen0,1pt);
        \node[circle,draw,outer sep=0ex,inner sep=0.1ex] (a) {#1};
    \end{tikzpicture}
}
\let\oldsqrt\sqrt
\def\sqrt{\mathpalette\DHLhksqrt}
\def\DHLhksqrt#1#2{%
\setbox0=\hbox{$#1\oldsqrt{#2\,}$}\dimen0=\ht0
\advance\dimen0-0.2\ht0
\setbox2=\hbox{\vrule height\ht0 depth -\dimen0}%
{\box0\lower0.4pt\box2}}
\newcommand{\vectorr}[1]{\mathbf{#1}}
\newcommand{\sech}[1]{\textnormal{sech} #1}
\newcommand{\bsub}{\begin{subequations}}
\newcommand{\esub}{\end{subequations}$\!$}
\newcommand{\eps}{\varepsilon}
\newcommand{\lam}{\lambda}
\newcommand{\kap}{\kappa}
\newcommand{\FF}{{\mathcal F}}
\newcommand{\CC}{{\mathcal C}}
\newcommand{\blackged}{\hfill$\blacksquare$}
\newcommand{\whiteged}{\hfill$\square$}
\newcounter{proofcount}
\renewenvironment{proof}[1][\proofname.]{\par
 \ifnum \theproofcount>0 \pushQED{\whiteged} \else \pushQED{\blackged} \fi%
 \refstepcounter{proofcount}
 \normalfont 
 \trivlist
 \item[\hskip\labelsep
       \itshape
   {\bf#1}]\ignorespaces
}{%
 \addtocounter{proofcount}{-1}
 \popQED\endtrivlist
}
\def\Sref {\S\ref}
\begin{document}
\title{Stripe to spot transition in a plant root hair initiation model}
\author{ 
 \textsc{V.F.~Bre\~na--Medina, D.~Avitabile, A.R.~Champneys, M.J.~Ward}}

\maketitle
\begin{abstract}
A generalised Schnakenberg reaction-diffusion system with 
source and loss terms and a spatially dependent coefficient of the
nonlinear term is studied both numerically and analytically in two
spatial dimensions. The system has been proposed as a model of hair
initiation in the epidermal cells of plant roots. Specifically the
model captures the kinetics of a small G-protein ROP, which can occur
in active and inactive forms, and whose activation is believed to be
mediated by a gradient of the plant hormone auxin. Here the model is
made more realistic with the inclusion of a transverse co-ordinate.
Localised stripe-like solutions of active ROP occur for high enough
total auxin concentration and lie on a 
complex bifurcation diagram of single and multi-pulse solutions. 
Transverse stability computations, confirmed by numerical simulation 
show that, apart from a boundary stripe,
these 1D solutions typically undergo a transverse instability into
spots.  The spots so formed typically drift and undergo secondary
instabilities such as spot replication. A novel 2D numerical continuation
analysis is performed that shows the various stable hybrid spot-like
states can coexist.  
The parameter values studied lead to a natural singularly perturbed, 
so-called semi-strong interaction regime. This scaling enables an
analytical explanation of the initial instability, by describing 
the dispersion relation of a certain non-local eigenvalue problem.  The
analytical results are found to agree favourably with the numerics.
Possible biological implications of the results are discussed.

\end{abstract}

\section[Introduction]{Introduction} 
\label{int}

An earlier paper \cite{bcwg} by three of the present authors along with
Grierson analysed a mathematical model first derived by Payne and Grierson 
\cite{payne01} for a prototypical morphogenesis occurring at a
sub-cellular level. Specifically, the model accounts for the kinetics of
a family of small G-proteins known collectively as the rho-proteins of
plants, or ROPs for short. The model is intended to  
describe the observed initiation of hair-like
protrusions in the epidermal cells of the roots of the model plant
{\it Arabidopsis thaliana} (see \cite{jones01,mjones02} and other
references in \cite{bcwg} for details).  The hairs themselves are
crucial for anchorage and for nutrient uptake, and when fully formed
comprise the majority of the surface area of the plant. In wild type,
a single hair is formed in each root hair cell, at a set distance
about 20\% of the way along the cell from its basal end (i.e.~end closest
to root tip). The
formation of a single localised patch of activated ROP is the
precursor for such a strong symmetry breaking in the cell and is
triggered as a newly formed root hair cell reaches a critical
length. At the same time, the overall concentration of the pre-eminent
plant hormone {\em auxin} increases throughout the cell  and, due to
the nature of how it is actively pumped, there is a gradient of auxin
from high concentrations at the basal end to lower at the apical. The effect
of auxin is postulated to account for a spatially-dependent gradient of
the activation of the ROP.

In \cite{bcwg} many features of the root hair initiation process were 
shown to be captured by the model. The spatial domain of the 
long, thin root-hair cell was approximated 
by a one-dimensional spatial domain with the
diffusion of the activated ROP being much slower, accounting for the
fact that this form is bound to the membrane whereas inactivated ROP is
free to diffuse within the cell. In particular, it was found that for 
small cell lengths and low auxin concentrations the active ROP is
confined to a boundary patch. There is then a critical threshold in 
length and/or auxin for which a single interior patch forms. This
process is hysteretic, in that if auxin-levels 
were instantaneously decreased, the patch would remain. Moreover,
if auxin or cell length are decreased too rapidly a second instability
can occur, resulting in the formation of multiple-patch states. These
states appear to capture the pattern of root hairs seen in several mutant
varieties. The purpose of this paper is to see how those results survive
in a more realistic geometry. 

The model in question takes the form of a two-component
reaction-diffusion (RD) system that can be written in dimensionless
form as
\begin{subequations}\label{eq:original}
\label{eq:ROPb}
	\begin{gather}
 U_t = \varepsilon^2\Delta_s U  +\alpha(x)U^2V- U + \frac{1}{\tau\gamma}V 
 \,,\label{eq:ROPb1}\\	
 \tau V_t = D\Delta_s V -V+1-\tau\gamma\left(\alpha(x)U^2V -U\right)-
\beta\gamma U\,. \label{eq:ROPb2}
	\end{gather}
\end{subequations}

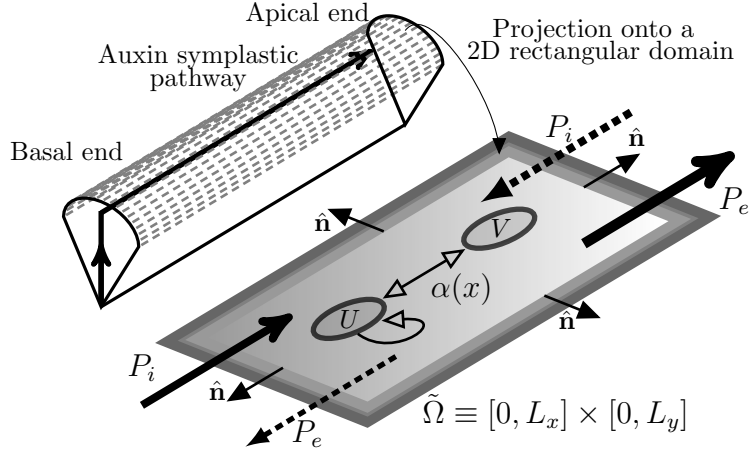
\begin{figure}[t!]
\begin{center}
\begin{tikzpicture}[scale=0.5]
	\begin{scope}[yshift=-180,yslant=0.6,xslant=-1]
		\fill [join=square,black!60] (-3.2,-3.2)  rectangle (6.2,2.45);
		\fill [join=square,black!50] (-2.9,-2.9)  rectangle (5.9,2.15);
		\fill [join=square,black!40] (-2.8,-2.8)  rectangle (5.8,2.05);
		\shade [right color = white, left color = black!40, join=square] (-2.5,-2.5)  rectangle (5.5,1.75);
		\fill[ellipse,black,opacity=0.1] (3.5,0) circle[x radius = 0.8 cm, y radius = 0.5 cm]; 
		\draw[ellipse,black,opacity=0.7,line width=2pt] (3.5,0) circle[x radius = 0.8 cm, y radius = 0.5 cm];
		\node at (3.55,0) (V) {$\large V$};
		\fill[ellipse,black,opacity=0.1] (-0.5,0) circle[x radius = 0.8 cm, y radius = 0.5 cm]; 
		\draw[ellipse,black,opacity=0.7,line width=2 pt] (-0.5,0) circle[x radius = 0.8 cm, y radius = 0.5 cm];
		\node at (-0.5,-0.03) (U) {$\large U$};
		\draw (0.4,0) -- (2.6,0) [>=open triangle 45,auto,thick,bend left=90,<->,thick,line width=1pt,color=black];
		\node at (1.5,-1) {\large $\alpha(x)$};
		\draw (-0.2,-0.57) edge [>=open triangle 45,auto,loop below,<-,thick,line width=1pt,color=black, distance=1.4cm,out=-180 in=-90] (-0.8,-0.54);
		\draw (5.5,-0.25) -- (7,-0.25) [-triangle 45,thick,line width=1pt]; 
		\node at (7.4,0.25) {\textcolor{black}{$\hat{\vectorr n}$}};
		\draw (-4,-0.25) -- (-2.5,-0.25) [triangle 45-,thick,line width=1pt]; 
		\node at (-3.8,0.2) {\textcolor{black}{$\hat{\vectorr n}$}};
		\draw (2.2,1.77) -- (2.2,3.27) [-triangle 45,thick,line width=1pt]; 
		\node at (1.8,3) {\textcolor{black}{$\hat{\vectorr n}$}};
		\draw (2.2,-2.5) -- (2.2,-4) [-triangle 45,thick,line width=1pt]; 
		\node at (1.8,-3.5) {\textcolor{black}{$\hat{\vectorr n}$}};
		\draw (-1,1) -- (-5,1) [stealth'-,thick,color=black,line width=3pt]; 
		\node at (-4,2) {\textcolor{black}{\large $P_i$}};
		\draw (8,1) -- (4,1) [-latex,thick,densely dashed,color=black,line width=3pt]; 
		\node at (6.8,1.7) {\textcolor{black}{\large $P_i$}};
		\draw (-1,-1.75) -- (-5,-1.75) [-latex,densely  dashed,color=black,line width=2pt]; 
		\node at (-4,-2.4) {\large $P_e$};
		\draw (4,-1.75) -- (8,-1.75) [-stealth',thick,color=black,line width=4pt]; 
		\node at (6.75,-3) {\large $P_e$};
		\node at (-0.8,-5.8) {\large $\tilde\Omega\equiv[0,L_x]\times[0,L_y]$};
	\end{scope}
	\begin{scope}[yshift=-30,yslant=0.6,yslant=-1]
		\def\h{3.5}
		\draw[thick,color=black,line width=2pt] (1,0.1,6*\h) -- (1,2.55,6*\h);
		\draw[-stealth',thick,color=black,line width=2pt] (1,0.1,6*\h) -- (1,1.7,6*\h);
		\draw[-stealth',thick,color=black,line width=2pt] (1,2.5,6*\h) -- (1,2.5,0.64*\h);
		\foreach \t in {0,10,...,180}
			\draw[gray,densely dashed,line width=1pt] ({1+cos(\t)},{2+0.8*sin(\t)},0)
		--({1+cos(\t)},{2+0.8*sin(\t)},{6*\h});
			\draw[black,very thick] (1,0,0) 
		\foreach \t in {0,5,...,180}
			{--({1+cos(\t)},{2+0.8*sin(\t)},0)}--cycle;
			\draw[black,very thick] (1,0,{6*\h}) 
		\foreach \t in {0,10,...,180}
			{--({1+cos(\t)},{2+0.8*sin(\t)},{6*\h})}--cycle;
			\draw[black,very thick] (1,0,6*\h) -- (1,0,0); 
		\node at (-1.6,1.9) {\textcolor{black}{Apical end}};
		\node at (-8,-4.3) {\textcolor{black}{Basal end}};
		\node at (-4.5,-0.4) {\textcolor{black}{Auxin symplastic}};
		\node at (-4.5,-1) {\textcolor{black}{pathway}};
		\draw[-latex, -triangle 45]
		(1,2.5) to[out=60,in=90] (3.5,0.2);
		\node at (6,4.5) {\textcolor{black}{Projection onto a}};
		\node at (6.2,4) {\textcolor{black}{2D rectangular domain}};
	\end{scope}
\end{tikzpicture}
\end{center}
\caption{Sketch of an idealised 3D RH cell, and cell membrane (densely
  dashed lines) projection onto a 2D rectangular domain. The
  longitudinal auxin gradient is shown (grey shade) as a consequence
  of in- (bold solid arrows) and out pump (dashed arrows) mechanisms;
  see~\cite{grien01,jones01,kramer04}. Influx and efflux
  permeabilities are depicted by $P_i$ and $P_e$ arrows respectively;
  auxin symplastic pathway is indicated by bold arrows in the 3D RH
  cell. Switching fluctuation is represented by blank-cusp solid arrows in
  $\tilde\Omega$.}
\label{fig:sketch}
\end{figure}

%
%
In dimensionless form the model is posed on a square
$(x,y)\in\Omega\equiv[0,1]\times[0,1]$, which has been rescaled from a
rectangular domain $\tilde\Omega\equiv[0,L_x] \times [0,L_y]$ with
$L_x=20\mu m$ and aspect ratio $s=(L_x/L_y)^2 = 5.5$, so that in
(\ref{eq:original}) we have defined $\Delta_s\equiv
\partial_{xx}+s\partial_{yy}$. From now onwards, this operator will be considered as such. The biochemical interaction this system models is for a ROP bounding on-and-off switching fluctuation, which is assumed to take place on the cell membrane (see~\cite{bcwg,payne01}), and RH cells are flanked by non-RH cells, from which no ROPs exchange have been reported, as far as we have knowledge. Thus, homogeneous
Neumann boundary conditions are assumed everywhere.  The quantities
$U(x,y,t)$ and $V(x,y,t)$ represent concentrations of the
membrane-bound active ROP and unbound inactive ROP respectively and
$\alpha(x)$ represents a monotone decreasing gradient of auxin, which
is assumed to be at steady state and to vary only in the $x$
direction. In particular, in this work we shall assume that
$$
\alpha(x) = e^{-\nu x} \qquad \mbox{with} \quad  \nu=1.5\,,
$$
which can be thought of as the outcome of a steady leaky diffusion
process within the cell. A sketch of an idealised 3D RH cell and its
cell membrane projection onto $\tilde\Omega$ can be seen in
Fig.~\ref{fig:sketch}. Other dimensionless parameters are defined in
terms of original parameters via
\begin{subequations}\label{eq:newpargabe}
 \begin{gather}\label{eq:newpar}
		\varepsilon^2 \equiv \frac{D_1}{L_x^2(c+r)}\,, 
\qquad D \equiv \frac{D_2}{L_x^2k_1} \,, \qquad 
\tau \equiv \frac{c+r}{k_1}\,, \qquad \beta \equiv\frac{r}{k_1}\,,
	\end{gather}
and the primary bifurcation parameter $\gamma$ in this system is given by
\begin{gather}\label{eq:gabe}
		\gamma \equiv \frac{(c+r)k_1^2}{k_2 b^2} \,.
\end{gather}
\end{subequations}
Here $D_1 \ll D_2$ are the diffusion constants for $U$ and $V$
respectively, $b$ is the rate of production of inactive ROP, $c$ is
the rate constant for deactivation, $r$ is the rate constant
describing active ROPs being used up in cell wall softening and
subsequent hair formation, and the active activation step is assumed
to be proportional to $k_1V+k_2\alpha(x)U^2V$. The activation and
overall auxin level within the cell, which is autocatalytic
acceleration induced by auxins, is represented by $k_1$ and $k_2$
respectively. The latter parameter plays an important role in some
numerical investigations here, due to gathering the main biological
hypothesis in the model. See \cite{bcwg,payne01} for more details.

The results in \cite{bcwg} concern a 1D domain in which $s=\infty$ and
the 2D Laplacian is replaced by $d^2/dx^2$.  In this paper we shall
extend the 1D analysis of \cite{bcwg} to study patterns in 2D. By
trivially extending the 1D localised spikes, in the transverse
direction, a localised stripe pattern is obtained.  Our main goal is
to study whether these stripe patterns are stable to 2D transverse
perturbations, and to shed light on any secondary instabilities that
occur. In particular we would like to see the extent to which a single
interior circular patch of ROP is the preferred solution for
sufficiently high auxin concentration, as this would be a more
accurate description of the biological process we seek to describe.

Spatially homogeneous RD systems similar to \eqref{eq:ROPb}, but
without the spatial inhomogeneity, have been studied extensively by a
number of authors. In 2D domains, patterns such as spots and stripes
have been found both numerically and analytically and their dynamics
uncovered.  In particular the so-called Gierer--Meinhardt system
\cite{gierer01,koch01,meinhardt01} admits a wide collection of spot
and stripe patterns.  Richer dynamics that also include spot
oscillations, snaking-bifurcation diagrams, and even spatiotemporal
chaos can occur for the so-called BVAM system \cite{aragon01} and the
Gray--Scott system \cite{nishi02,nishi01} among others. Such RD
systems arise as descriptions of pigmentation patterns on the skin of
fish and as models of other chemical and biological pattern formation
systems (see for example the book by Murray \cite{murra02} for an
overview).

In a similar singularly perturbed limit, Doelman \& van der Ploeg
\cite{doelman01} and Kolokolnikov \& Ward \cite{kolok02} have
undertaken a theoretical analysis of the transverse stability of an
interior localised stripe for the Gierer--Meinhardt model (for a
similar analysis for the Gray--Scott model see
\cite{kolok_gs,morgan}).  A novel feature of the present work is to
adapt these analyses to the case of a model with a spatial gradient,
and to extend the analysis to include boundary stripes.

Our study of \eqref{eq:ROPb} relies on a combination of numerical and
analytical methodologies.  Firstly, time-dependent numerical
simulations of the PDE system together with numerical computations of
the eigenvalue problem associated with transverse perturbations are
used to show that, generally, interior or boundary stripes are
unstable to transverse perturbations. This instability leads to the
formation of localised spots. Our numerical results show that the
spots drift in the direction of the auxin gradient, and can undergo a
secondary instability of spot self-replication. Numerical bifurcation
techniques in 2D are then used to compute intricate bifurcation
diagrams associated with steady-state spot patterns, stripe patterns,
and mixed-states consisting of a stripe and spots.

The outline of the paper is outlined as follows. In
\Sref{sec:initsimu} we perform full numerical simulations and detailed
numerical bifurcation analyses using parameter set one in
Table~\ref{tab:tab}. In addition, we numerically compute dispersion
relations for several scenarios. Then, in~\Sref{sec:bifurstripe} we
perform further simulations revealing a plethora of patterns, similar
to those that have been observed in time-dependent shape changing
domains for other RD systems, see e.g.~\cite{plaza01}.  In the
singularly perturbed limit $\varepsilon\to 0$, in~\Sref{sec:breakup} a
non-local eigenvalue problem (NLEP) is derived and analyzed in order
to determine theoretical properties of the dispersion relation
associated with the transverse stability of both an interior and a
boundary stripe.  The analytical results from this stability theory
are found to agree favourably with results from numerical simulations.
Finally, in~\Sref{sec:con}, some concluding remarks are given,
including possible biological interpretations of our results, and
suggestions for further work are given.

\begin{table}[t]
	\begin{center}
		\scalebox{0.91}{
		\begin{tabular}{L{2.6cm} L{2.3cm} | L{2.6cm} L{2.3cm} | L{2.8cm} L{2.5cm}}
			{\bf Parameter set:}\\[1ex]
			& {\bf One} & & {\bf Two} & & {\bf Three} \\[0.2ex]
			\toprule
			{\bf\itshape Original} & {\bf\itshape Re-scaled} & {\bf\itshape Original} & {\bf\itshape Re-scaled} & {\bf\itshape Original} & {\bf\itshape Re-scaled}\\[0.2ex]
			\midrule
			$D_1=0.1$ & $\eps^2 =3.6\times10^{-4}$ & $D_1=0.1$ & $\eps^2 =2.3\times10^{-5}$ & $D_1=0.075$ & $\varepsilon^2 =1.02\times10^{-4}$ \\
			$D_2 =10$ & $D=0.4$ & $D_2 =50$ & $D=0.5$ & $D_2 =20$ & $D=0.51$ \\ 
			$k_1 = 0.01$   &  $\tau = 11$ & $k_1 = 0.01$   &  $\tau = 44$  &  $k_1 = 0.008$   &  $\tau = 18.75$\\ 
			$b = 0.01$     &  $\beta = 1$ & $b = 0.005$     &  $\beta = 4$ & $b = 0.008$     &  $\beta = 6.25$\\  
			$c = 0.1$      &  & $c = 0.4$      &  & $c = 0.1$      &  \\
			$r = 0.01$     &  & $r = 0.04$     &  & $r = 0.05$     &  \\ 
			$k_2 \in [0.01, 1.0]$ &  $\gamma \in[11, 0.11]$ & $k_2 \in[0.045 , 40]$ &  $\gamma \in[39.1, 0.04]$ & $k_2 \in[10^{-3}, 2.934]$ &  $\gamma \in[ 150, 0.051]$ \\
			$L_x= 50$ & & $L_x=100$ &  & $L_x=70$ \\
			$L_y=20$ & $s=6.25$ & & & $L_y=29.848$ & $s=5.5$\\[1ex]
			\bottomrule
		\end{tabular}
		}
	\end{center} 
\caption{Three parameter sets in the original and dimensionless
  re-scaled variables. The fundamental units of length and time are
  $\mu\textrm{m}$ and sec, and concentration rates are measured by an
  arbitrary datum (con) per time unit; $k_2$ is measured by
  $\textrm{con}^2/\textrm{s}$, and diffusion coefficients units are
  $\mu \textrm{m}/\textrm{s}^2$.}\label{tab:tab}
\end{table}

\section{Numerical investigation}
\label{sec:initsimu}
We first present numerical computations that show stripe instabilities
for the ROP model \eqref{eq:ROPb} with parameter set one given
in Table~\ref{tab:tab}, which are equivalent to those used in
\cite{payne01}.  In terms of the operator $\Delta_s \equiv \partial_{xx}+ s \partial_{yy}$ where $s=(L_x/L_y)^2$ as is defined in~\Sref{int}, we recast \eqref{eq:ROPb} as
\begin{equation}\label{eq:RecastSys}
 \partial_t \begin{bmatrix} U \\ V \end{bmatrix}
  =
  \vectorr{D}
  \begin{bmatrix} \Delta_s & 0 \\ 0 & \Delta_s \end{bmatrix}
  \begin{bmatrix} U \\ V \end{bmatrix} +
  \begin{bmatrix} f(U,V,x) \\ g(U,V,x) \end{bmatrix},
  \quad 
  (x,y) \in \Omega \,; \qquad
\nabla_\vectorr{n} U = \nabla_\vectorr{n} V = 0\,,
   \quad 
  (x,y) \in \partial\Omega\,.
\end{equation}
Here $\vectorr{D}$ is a diagonal diffusion matrix, $\vectorr{n}$ is
the normal to $\partial \Omega$ at $(x,y)$ and where we have omitted
the dependence on control parameters for simplicity.  We present two
types of computation: time-dependent simulation of
\eqref{eq:RecastSys} and numerical continuation of the corresponding
steady states. Implementation details are given
in~\Sref{subsec:implementation}.

\subsection{Simulations}
\label{subsec:simulat}

\begin{figure}[t!]
	\begin{center}
		\centering
		\subfigure[]{\includegraphics[width=0.33\textwidth]{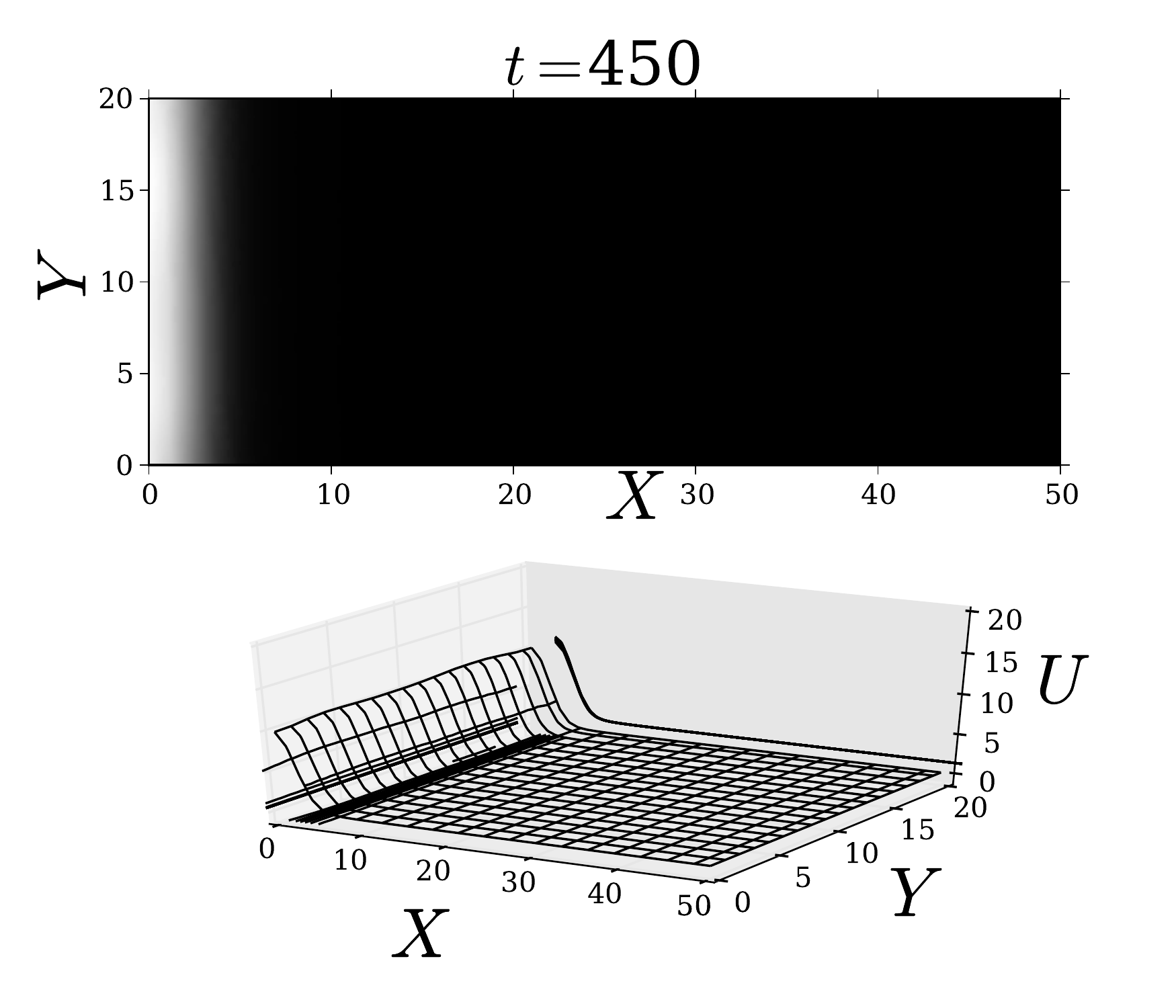}\label{sf:simh01b}}
		\centering
		\subfigure[]{\includegraphics[width=0.33\textwidth]{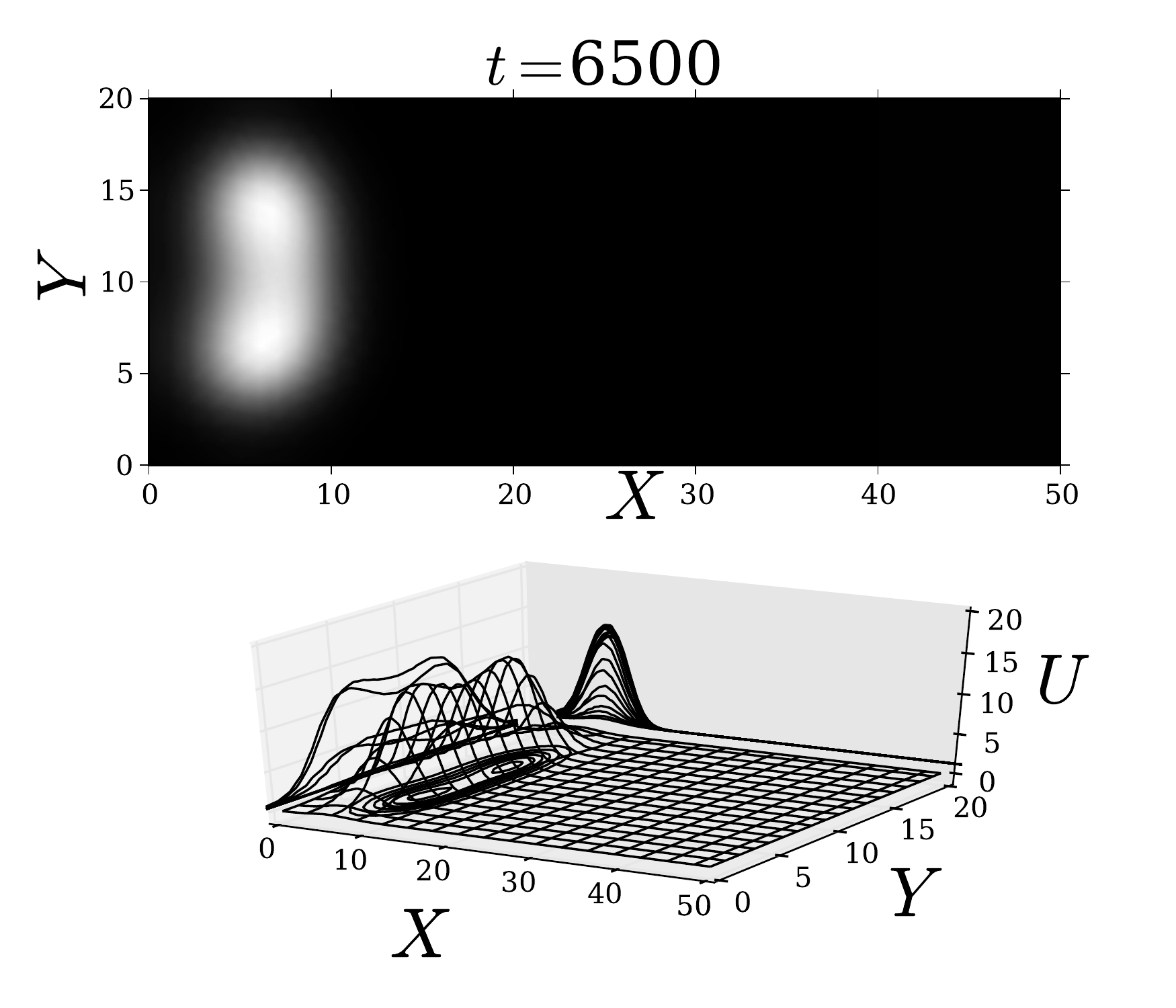}\label{sf:simh01d}}
		\centering
		\subfigure[]{\includegraphics[width=0.33\textwidth]{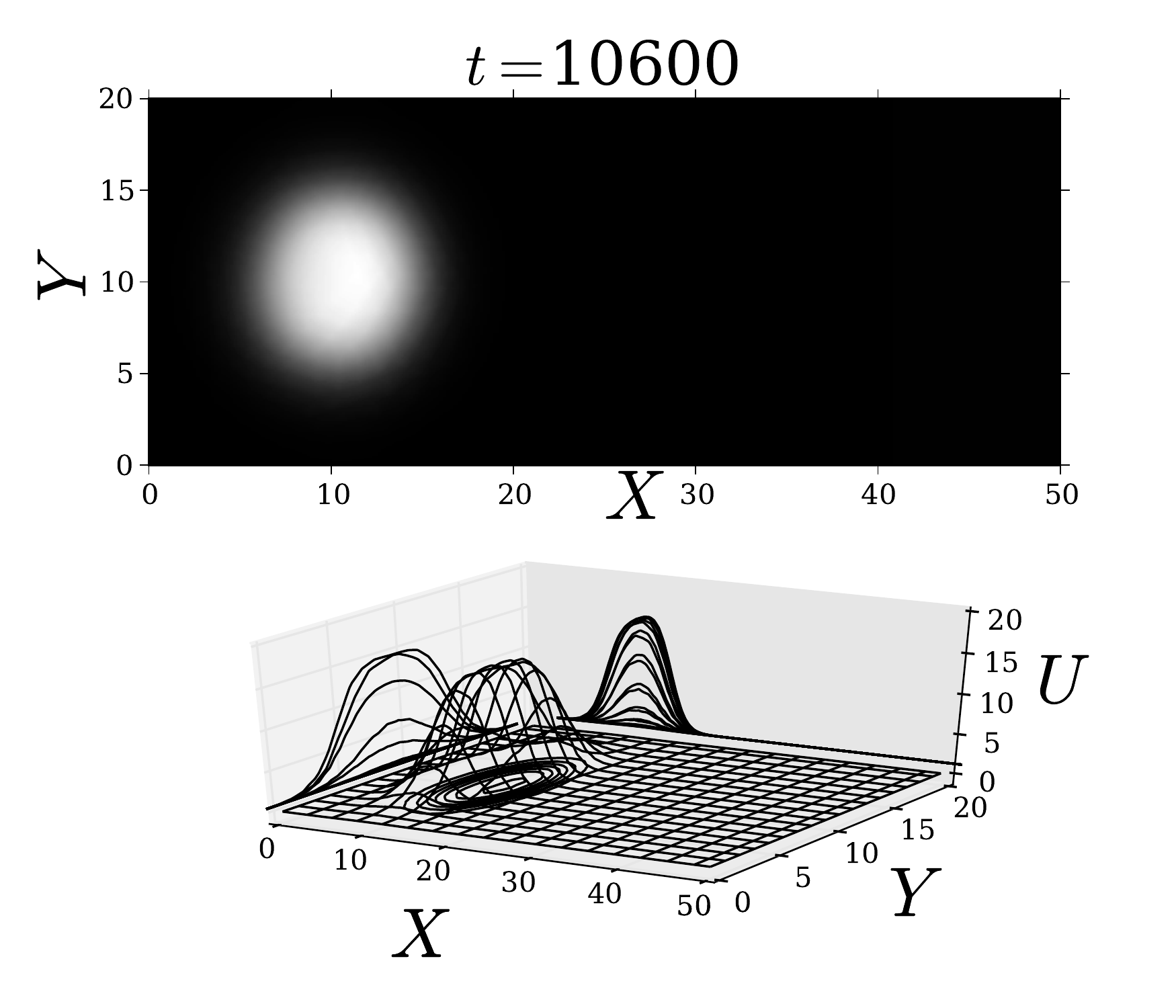}\label{sf:simh01e}}
		\centering
		\subfigure[]{\includegraphics[width=0.33\textwidth]{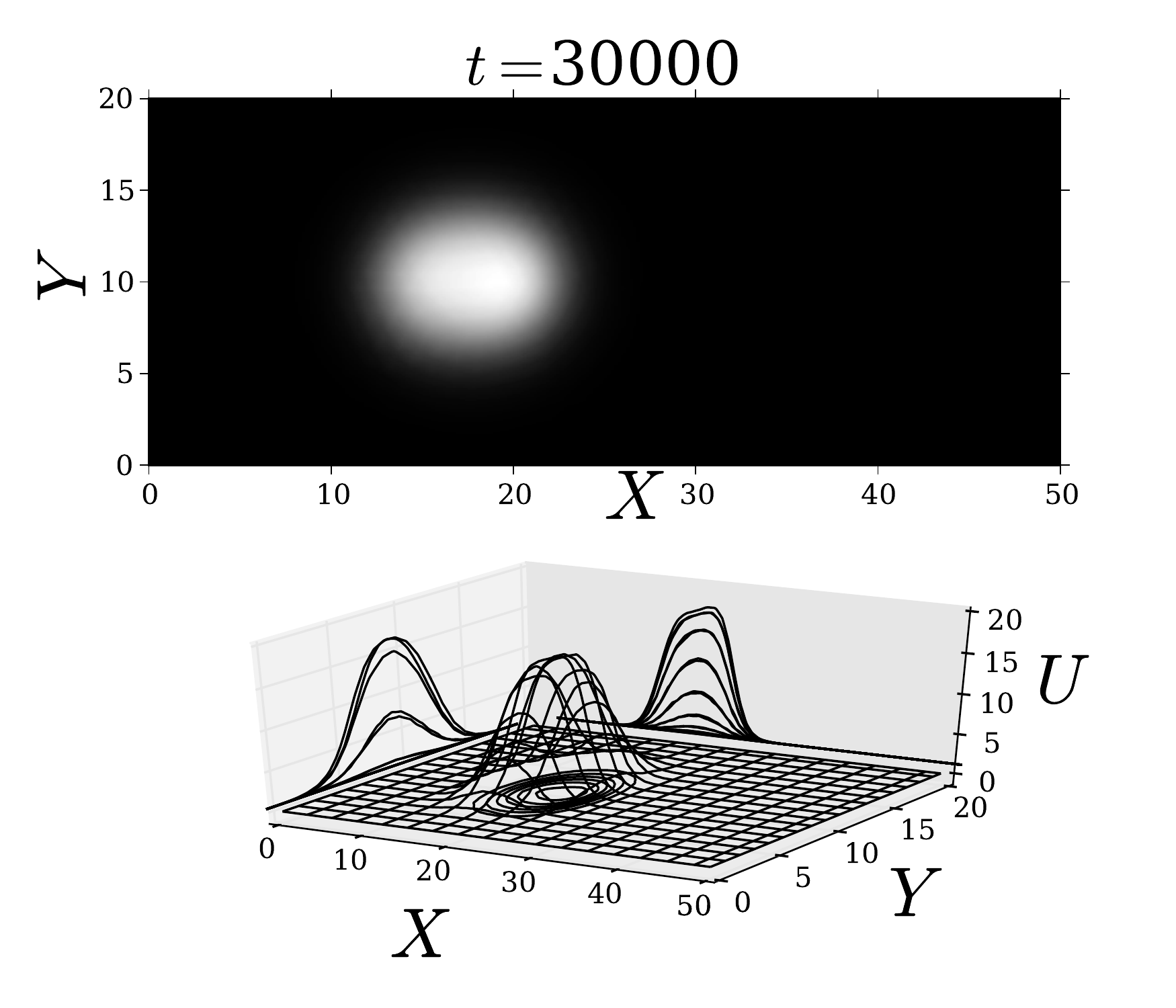}\label{sf:simh01f}}
	\end{center}
	\caption{Snapshots of a travelling front breaking up into a 
slowly travelling spot that gets pinned after a long time. 
	(a)~Front formed at the boundary. 
	(b)~Breakup into a peanut-shaped form. (c) Travelling spot. (d) Final 
pinned spot. Original parameter set one as given in Table~\ref{tab:tab} with $k_2=0.1$. Notice that the spot drifts very slowly in time.}
	\label{fig:simh01}
\end{figure}

As initial conditions for our time-dependent computations, we take
a small random perturbation to 
\begin{gather*}
  U_0\equiv\frac{1}{\gamma\beta}\,, \qquad V_0	\equiv \frac{\tau\beta\gamma}{\tau+\beta^2\gamma}\,,
\end{gather*}
which is an equilibrium to the homogeneous problem with $\alpha(x)
\equiv 1$.  As shown below, the monotonically decreasing auxin
gradient $\alpha(x)$, which is largest at $x=0$, has a strong
influence on the dynamics.  For $k_2=0.1$, full numerical results
of the solution at different times are shown in Fig.~\ref{fig:simh01}.
As time increases, a front is formed at the boundary.  This front,
resembling a boundary stripe (see Fig.~\ref{sf:simh01b}), then travels
towards the right where the auxin gradient is smaller. The stripe
breaks up into a transitory ``peanut-shape'' \cite{nishi02} (see
Fig.~\ref{sf:simh01d}), which then slowly drifts towards the right
boundary. The spot ultimately gets pinned at some distance from the
right boundary, as shown in Fig.~\ref{sf:simh01f}.  From this
simulation, as similarly observed in the 1D case in \cite{bcwg}, there
exists a separation of spatial and temporal scales. There are two
spatial scales, one local and one global, for the
$U$-concentration. Moreover, there is one time-scale associated with
the quick destabilization of the boundary stripe into a spot, referred
to as a breakup instability, and a second much longer time-scale
associated with the slowly drifting spot.  Although some aspects of
the spatio-temporal scales are inherited from the 1D case analyzed
in~\cite{bcwg}, the 1D theory cannot capture the stripe breakup nor
the formation, drift, and pinning of the localised spot.

\begin{figure}
	\begin{center}
		\centering
		\subfigure[]{\includegraphics[width=0.33\textwidth]{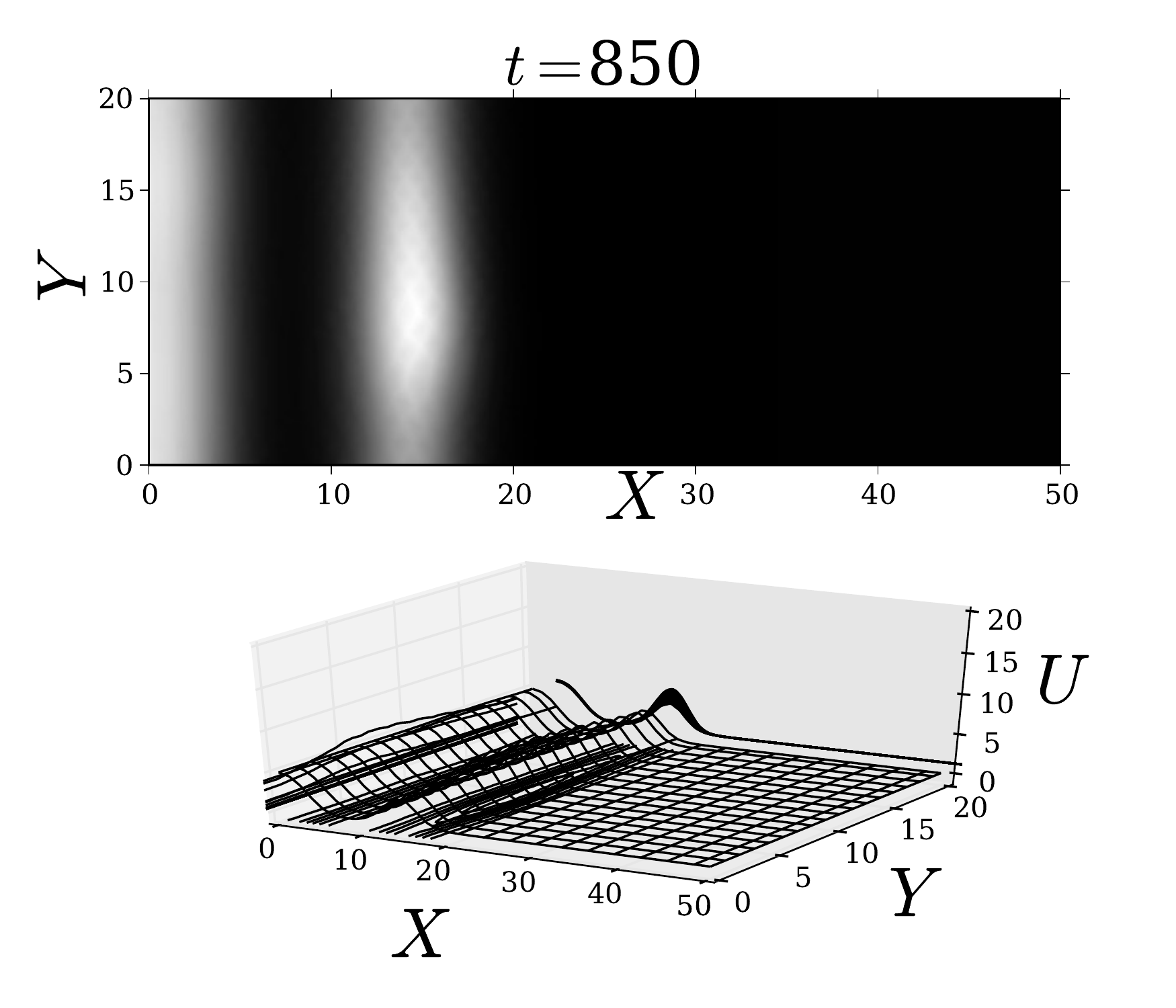}\label{sf:simh02a}}
		\centering
		\subfigure[]{\includegraphics[width=0.33\textwidth]{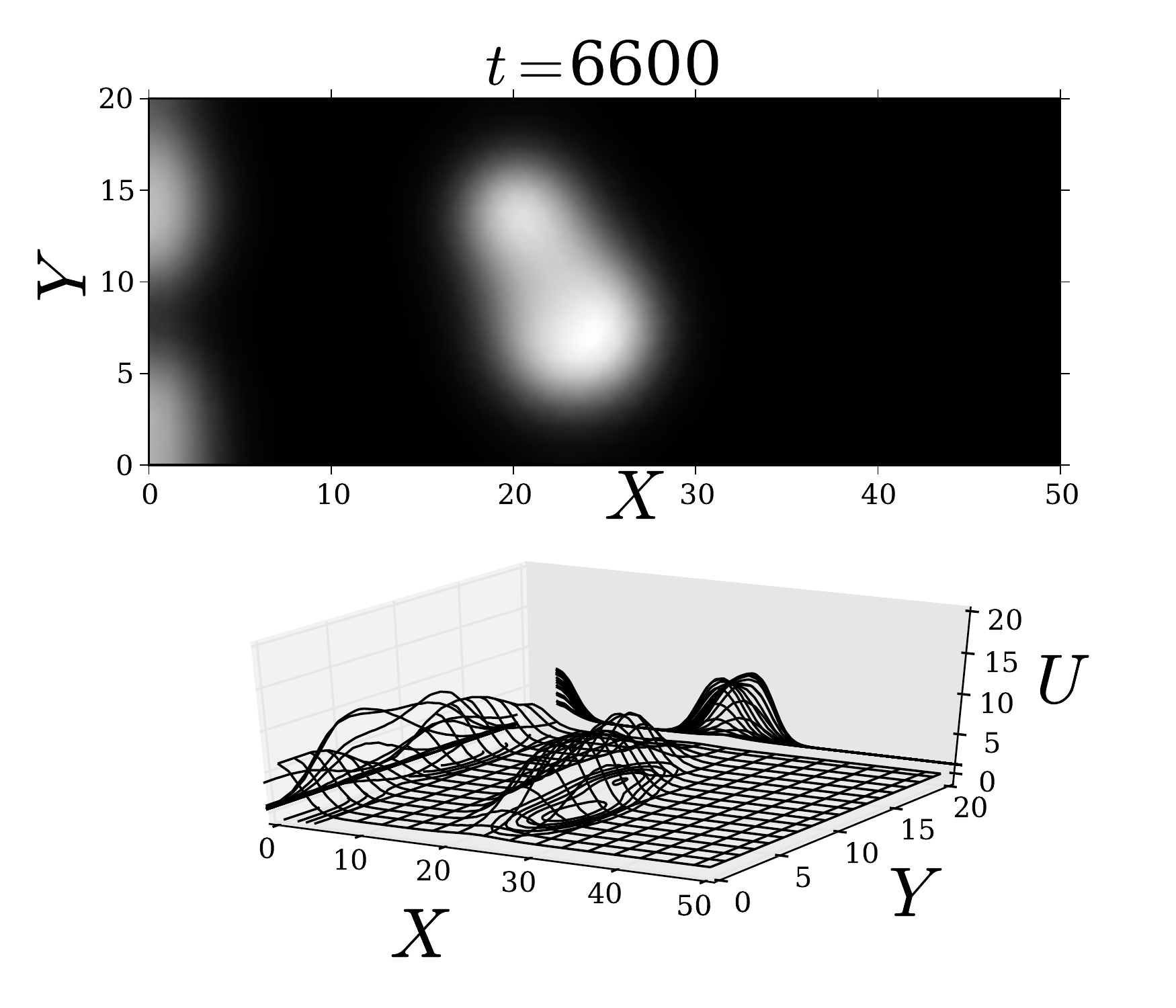}\label{sf:simh02d}}
		\centering
		\subfigure[]{\includegraphics[width=0.33\textwidth]{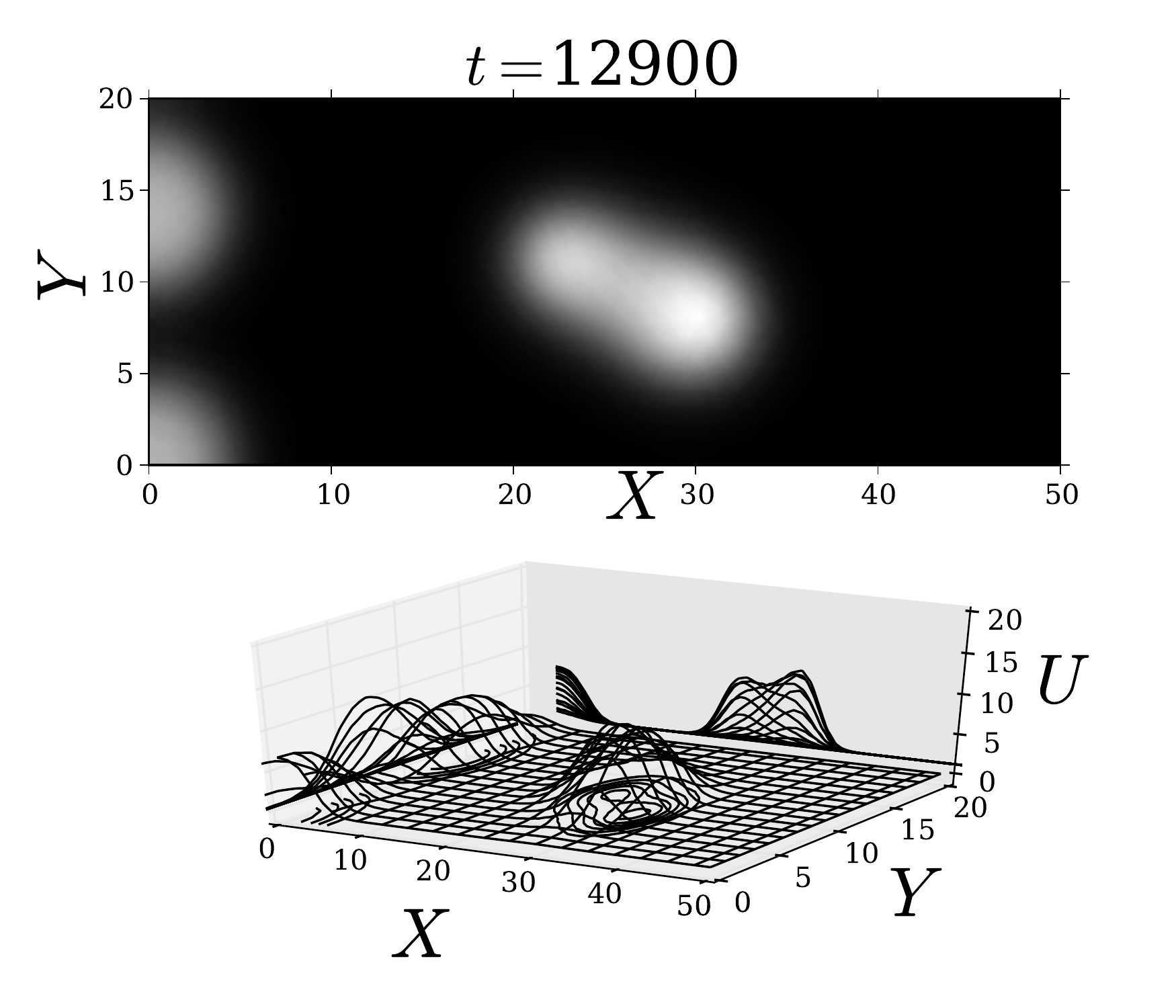}\label{sf:simh02e}}
		\centering
		\subfigure[]{\includegraphics[width=0.33\textwidth]{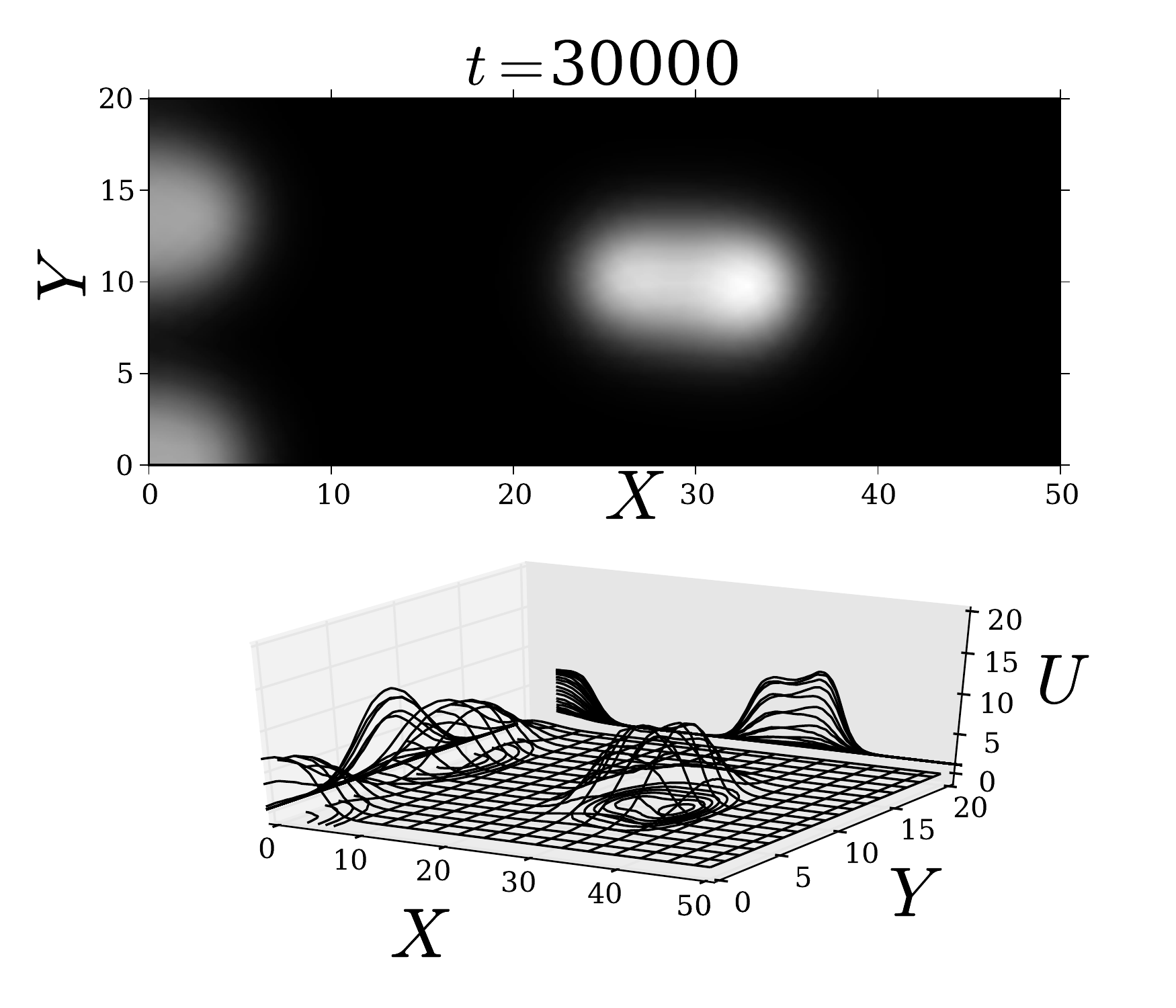}\label{sf:simh02f}}
	\end{center}
	\caption{Snapshots of two stripes breaking up into an
          asymmetrical array of spots. (a) Early localised stripes.
          (b)-(c) Stripes breaking apart, counterclockwise rotation
          and travelling peanut-form. (d) A pinned spot-like
          pattern. Original parameter set one as given in Table~\ref{tab:tab} with
          $k_2=0.4$.}
	\label{fig:simh02}
\end{figure}

To investigate how the bifurcation parameter $k_2$ affects the
dynamics we increase this parameter to $k_2=0.4$. The initial
conditions for the time-dependent computations are the same as above
for $k_2=0.1$. The numerical results at different times are shown in
Fig.~\ref{fig:simh02}. In Fig.~\ref{sf:simh02a} a stripe-like state is
formed at the boundary, with a second stripe quickly emerging further
towards the interior. Then, as these structures move away from each
other, both stripes break up into two half-spots at the boundary and a
counter-clockwise rotating peanut form, as shown in
Figures~\ref{sf:simh02d}--\ref{sf:simh02e}. This structure then aligns
itself longitudinally and drifts slowly towards the right (see
Fig.~\ref{sf:simh02f}).

In other singularly perturbed RD systems, localised structures can
exist in regions where the nonlinear terms dominate
(cf.~\cite{wei01}). In addition, since the system \eqref{eq:ROPb} is
somewhat similar in form to both the Schnakenberg and Brusselator
systems, we expect that both spot and self-replicating spot patterns
can occur (cf.~\cite{kolo01} and \cite{rozada02}).  The 2D simulations
shown above suggest that $\mathcal O(1)$ time-scale instabilities are
associated with the formation of localised spots from a stripe. This
type of breakup instability is analysed mathematically in
\Sref{sec:breakup} in a particular asymptotic limit.

\subsection{Bifurcation diagram for stripes}
\label{subsec:bifurdiag}

To gain further insight into the existence and stability of stripes,
we perform a numerical bifurcation analysis of stripe solutions using
$k_2$ as the main bifurcation parameter. Stripes are stationary
solutions $(u_s(x;y),v_s(x;y))^T$ to~\eqref{eq:RecastSys} that are
constant in $y$. Hence, they satisfy the 1D boundary-value problem
\begin{equation}\label{eq:stripes}
 \vectorr{D}
  \begin{bmatrix} \partial_{xx} & 0 \\ 0 & \partial_{xx} \end{bmatrix}
  \begin{bmatrix} u_s \\ v_s \end{bmatrix} +
  \begin{bmatrix} f(u_s,v_s,x) \\ g(u_s,v_s,x) \end{bmatrix}
  = 0\,,
\quad
  x \in (0,1)\,; \qquad
 \partial_x u_s = \partial_x v_s = 0\,,
\qquad  x = 0,1\,.
\end{equation}
To see this, notice that a parametric exploration of the 1D problem was performed previously (see Fig.~6 in~\cite{bcwg}) and solutions to the 1D system
can be trivially extended in $y$. In other words, let $(u_s(x),v_s(x))^T$ a steady solution of \eqref{eq:stripes}, in such fashion that extended solutions $(u_s(x;y),v_s(x;y))^T=(u_s(x),v_s(x))^T$, where $y$ is seen as a parameter providing the trivial extension. Which implies that $u_s(x;y)$ and $v_s(x;y)$ are also solutions of~\eqref{eq:stripes}. Therefore, the bifurcation diagram of such solutions is entirely equivalent to Fig.~6 in~\cite{bcwg}. However, the stability properties become dependent on perturbations in the $y$-direction. This can be seen as follows. We introduce
\begin{gather}\label{eq:perturb}
  \tilde U= u_s + e^{\lambda t + i m y}\varphi(x)\,, 
\qquad \tilde V = v_s + e^{\lambda t + i m y}\psi(x)\,,
\end{gather}
where $\varphi,\psi\ll1$. The wavenumber $m$ is determined by the homogeneous
Neumann boundary conditions at $y=0,1$. We thus require $m=k\pi$ for
$k\in\mathbb{Z}$, and the perturbation takes the form $\Re(e^{i m y})=\cos\left(k\pi
y\right)$. Upon substituting~\eqref{eq:perturb} into~\eqref{eq:RecastSys}, we obtain the
eigenvalue problem
\begin{equation}
\lambda
\begin{bmatrix} \varphi \\ \psi \end{bmatrix}
=
\begin{bmatrix} \varepsilon^2 \partial_{xx} - sm^2 + f_U(u_s,v_s,x) 
		& f_V(u_s,v_s,x) \\ 
                g_U(u_s,v_s,x) 
		& (D/\tau) \partial_{xx} - sm^2 + g_V(u_s,v_s,x) 
      \end{bmatrix}
\begin{bmatrix} \varphi \\ \psi \end{bmatrix}\,.
  \label{eq:stripesEigenval}
\end{equation}
Thus, we compute stripes numerically as solutions
to~\eqref{eq:stripes} and then study their linear
stability by solving~\eqref{eq:stripesEigenval}.
\begin{figure}
	\begin{center}
		\centering
			\subfigure[]{\includegraphics[height=0.25\textheight]{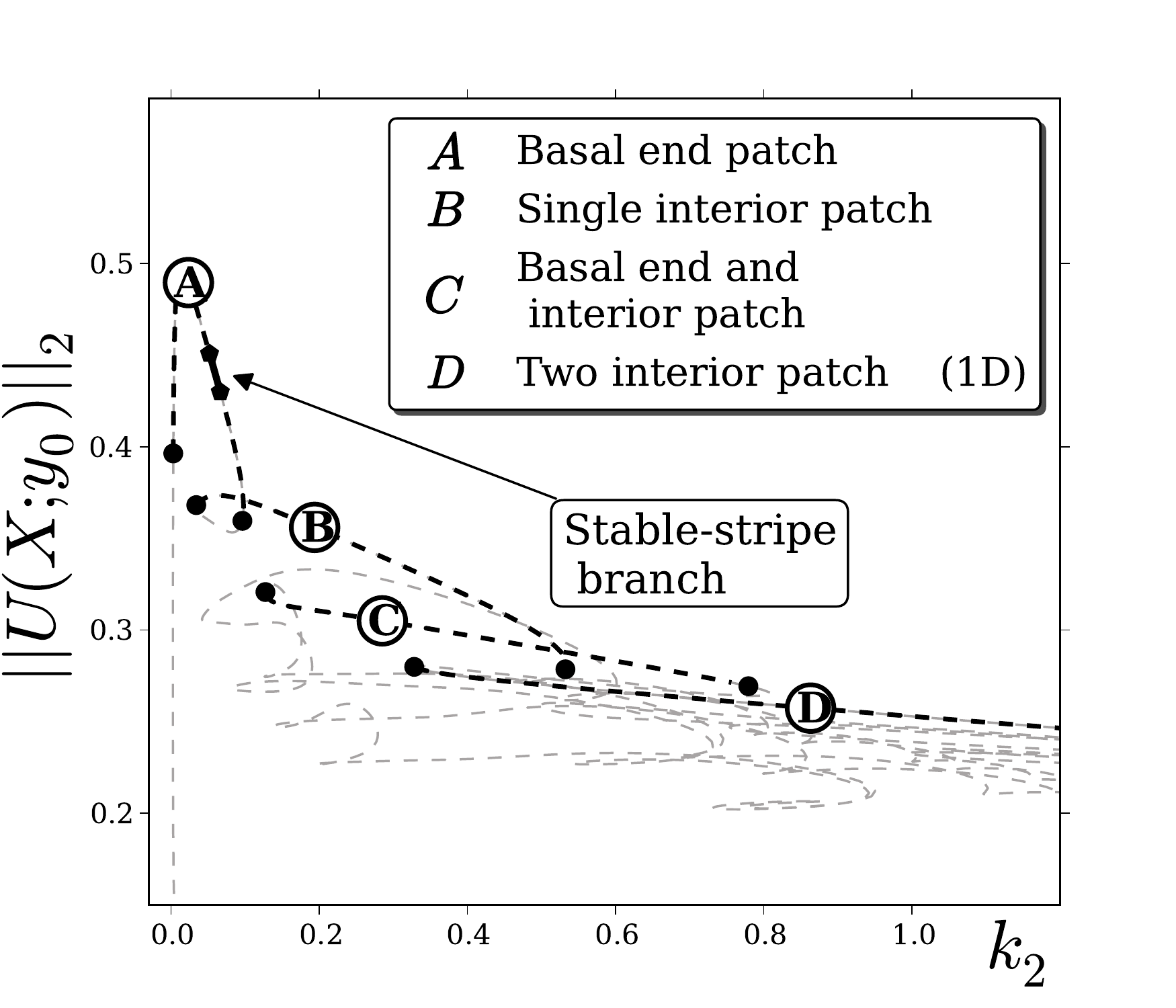}\label{sf:stribifdiaga}}
		\centering
			\subfigure[]{\includegraphics[height=0.25\textheight]{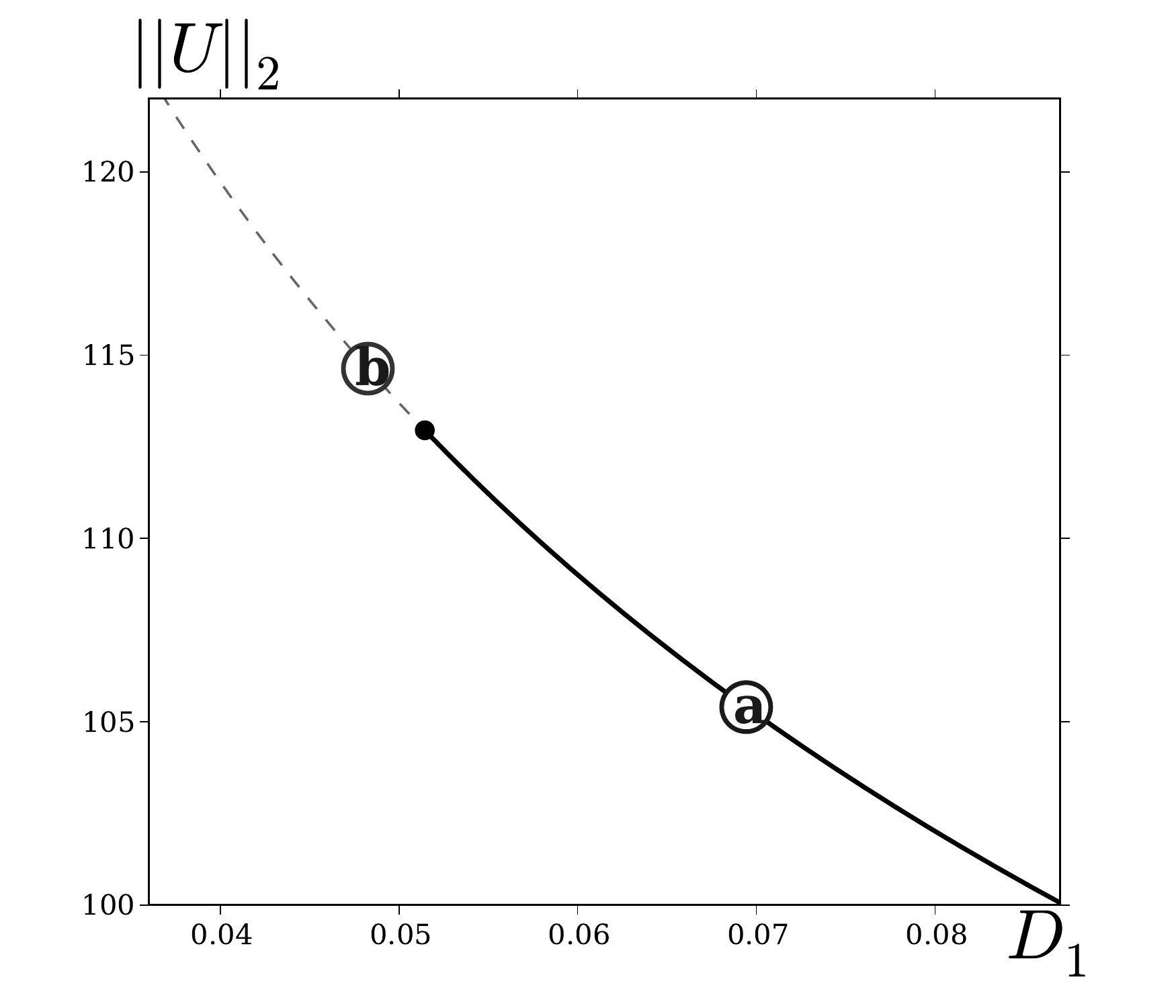}\label{sf:stribifdiagb}}
	\end{center}
	\caption{(a) Comparison of bifurcation diagrams between
          localised stripes and 1D-spike scenarios. Bold dashed
          portions of the diagram indicate where stable 1D solutions
          are unstable to transverse instabilities. A narrow stable
          window is found, given by the solid black curve. (b)
          Bifurcation diagram as $D_1$ varies from a solution in
          stable-stripe branch shown in~(a); $k_2=0.0463$. An
          eigenvalue crosses into the right-hand complex semi-plane at
          the filled black circle. Branch labelled by {\bf a} remains
          stable as $D_1$ is increased further (not shown). Original
          parameter set one as given in Table~\ref{tab:tab}.}
	\label{fig:stribifdiag}
\end{figure}

\begin{figure}[t!]
	\begin{center}
		\centering
		\subfigure[]{\includegraphics[width=0.25\textheight]{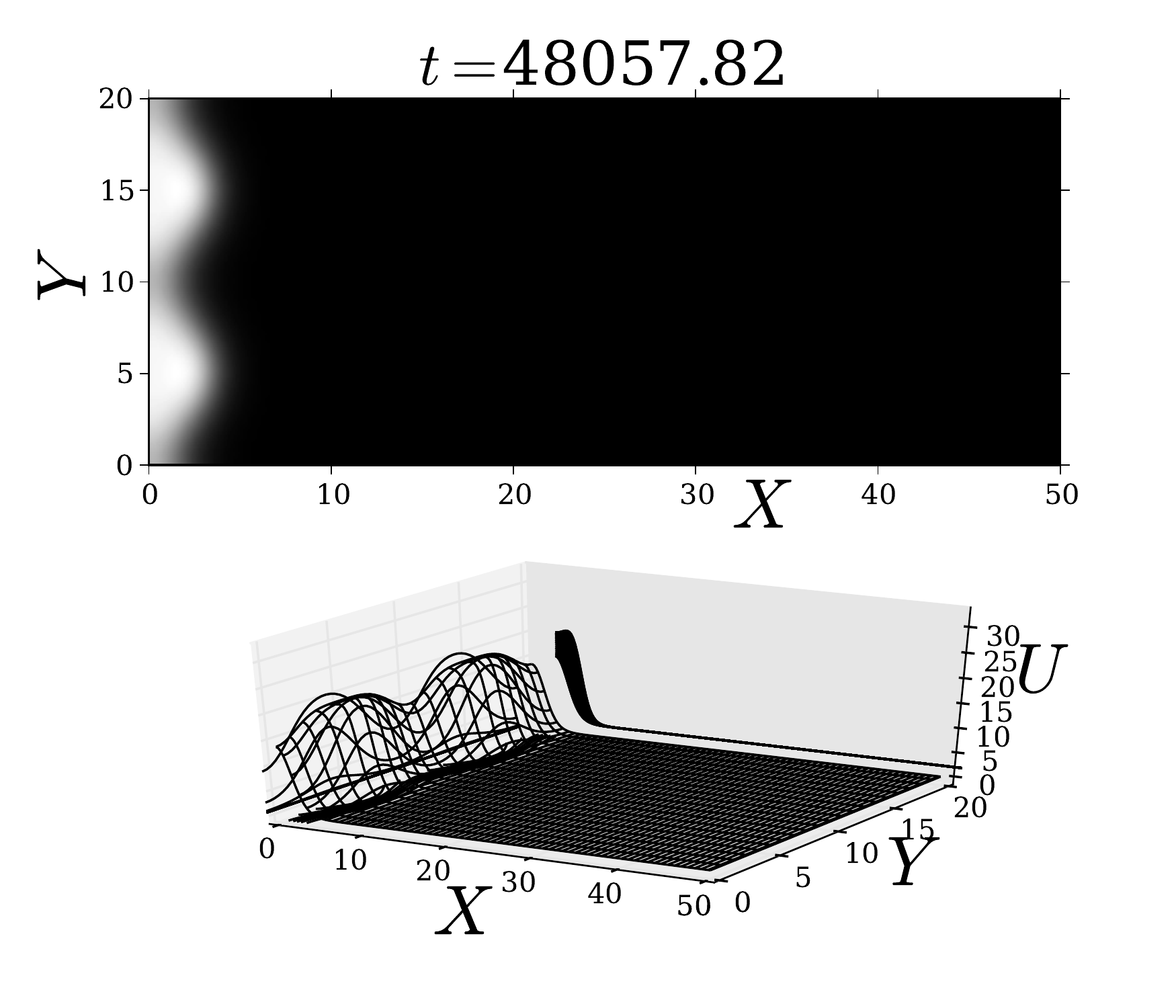}\label{sf:bstrpebreakupb}}
		\centering
		\subfigure[]{\includegraphics[width=0.25\textheight]{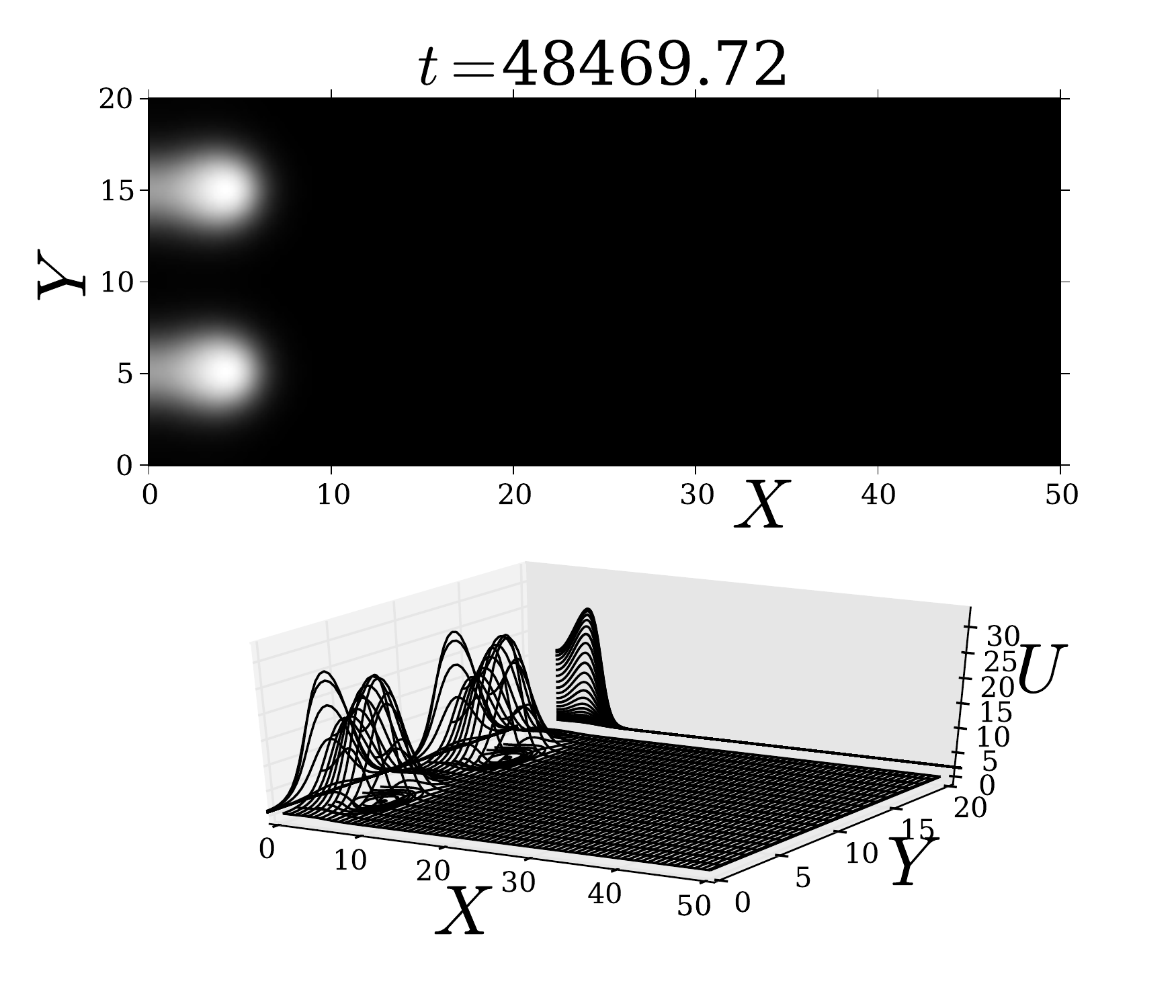}\label{sf:bstrpebreakupc}}
	\end{center}
	\caption{Relevant snapshots of transversal instability for unstable boundary stripe~{\bf b} in Fig.~\ref{sf:stribifdiagb}. (a)~Break up instability and (b)~two newly formed boundary spots travelling towards interior. Original parameter set one as given in Table~\ref{tab:tab} with $k_2=0.0463$ and $D_1=0.0492$.}
	\label{fig:bstrpebreakup}
\end{figure}
 
For the original parameter set one as given in Table~\ref{tab:tab},
the bifurcation diagram for stripes is depicted in
Fig.~\ref{sf:stribifdiaga}. We use the $L_2$-norm of the active
component $U$ for a fixed value of $y$ as a solution measure. We find
patterns with one boundary stripe (A), one interior stripe (B), one
boundary and one interior stripe (C), and two interior stripes
(D). All the solution branches, apart from a small segment (bold
line), are unstable. Even so, as $\varepsilon^2$ is directly
proportional to $D_1$, the stable extended pattern branch (solid black
curve ends in Fig.~\ref{sf:stribifdiaga}) becomes unstable as $D_1$
decreases. Even though the nature of this instability will be analysed
thoroughly in \Sref{sec:breakup}, this gives an insight on the
asymptotic limit, i.e. sharper boundary stripes are unstable. To shed
light on this, upon selecting a solution from the stable stripe-branch
as initial condition, we perform continuation on~$D_1$. As can be seen
in Fig.~\ref{sf:stribifdiagb}, there is a small critical value at
which boundary stripe solutions become unstable. In addition, we run a
time-step simulation upon taking an unstable boundary stripe solution
(labelled by {\bf b}) as the initial condition. This computation shows
the triggering of a break-up instability, which then gives rise to two
spots moving towards domain interior. In Fig.~\ref{fig:bstrpebreakup}
we give pertinent snapshots where the boundary stripe disintegrating
into spots can be seen, as well as spot dynamics for short times. The
transition from a boundary stripe to spot formation occurs on an
$\mathcal{O}(1)$ time-scale as is similarly shown in
Fig.~\ref{fig:simh01} and Fig.~\ref{fig:simh02}. This confirms that 1D
localised patterns tend to destabilise under transverse perturbations.

Moreover, the stability boundaries of the stable stripe-branch are
symmetry-breaking pitchfork (Turing) bifurcation points,
characteristic of a transition between one and three solutions as the
bifurcation parameter crosses a critical value
(cf.~\cite{golub01}). To motivate why these instabilities should occur
in spite of the fact that the location of the boundary stripe does not
vary with $k_2$, nor is there any gradient in the $y$-direction,
one effectively finds that transverse instabilities are inherited from
the 1D homogeneous problem. This homogeneous problem is readily
analysed and one finds Turing instabilities as $k_2$ varies
(see~\cite{acvf} for more details).  Summarising these numerical
results, we have:
  
\begin{resu}\label{prop:boundarystripe}
System \eqref{eq:ROPb} is {\em stripe-unstable} under transverse
perturbations.  In addition, there exists two pitchfork bifurcation
points that define a small stable boundary-stripe branch, which vanishes as $D_1$ decreases.
\end{resu}  
\begin{figure}[t!]
	\begin{center}
		\centering
		\subfigure[]{\includegraphics[width=0.33\textwidth]{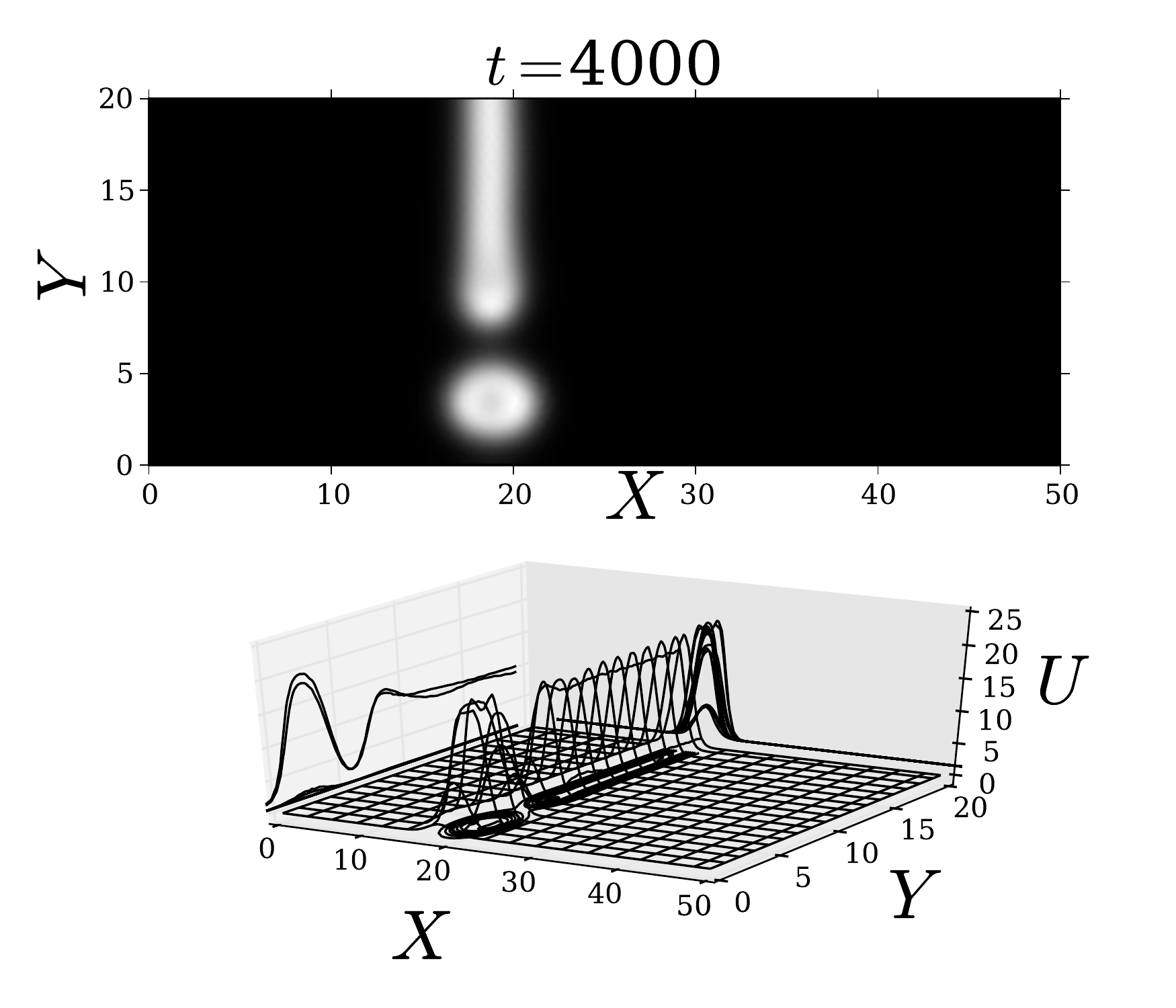}\label{sf:initialbreakupa}}
		\centering
		\subfigure[]{\includegraphics[width=0.33\textwidth]{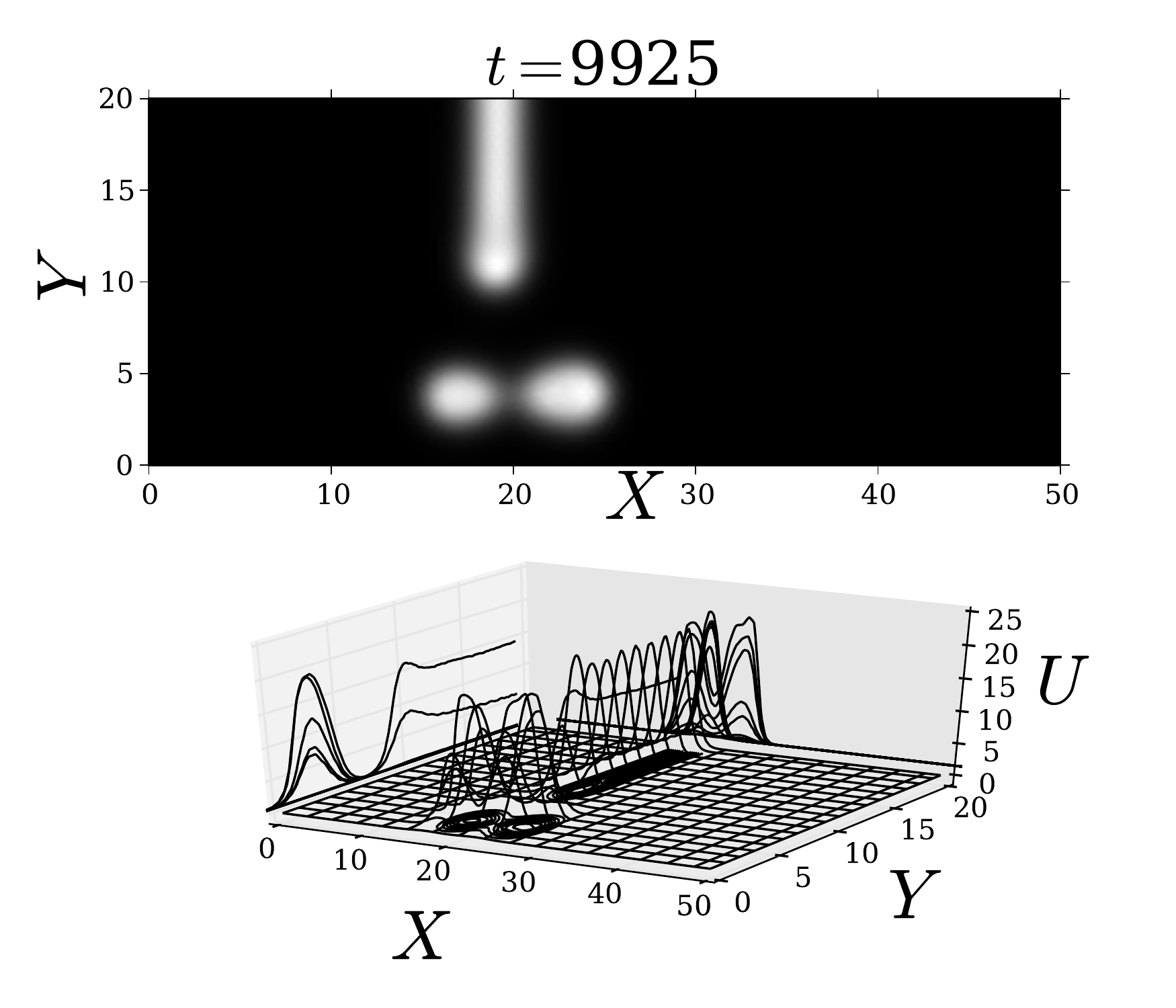}\label{sf:initialbreakupb}}
		\centering
		\subfigure[]{\includegraphics[width=0.33\textwidth]{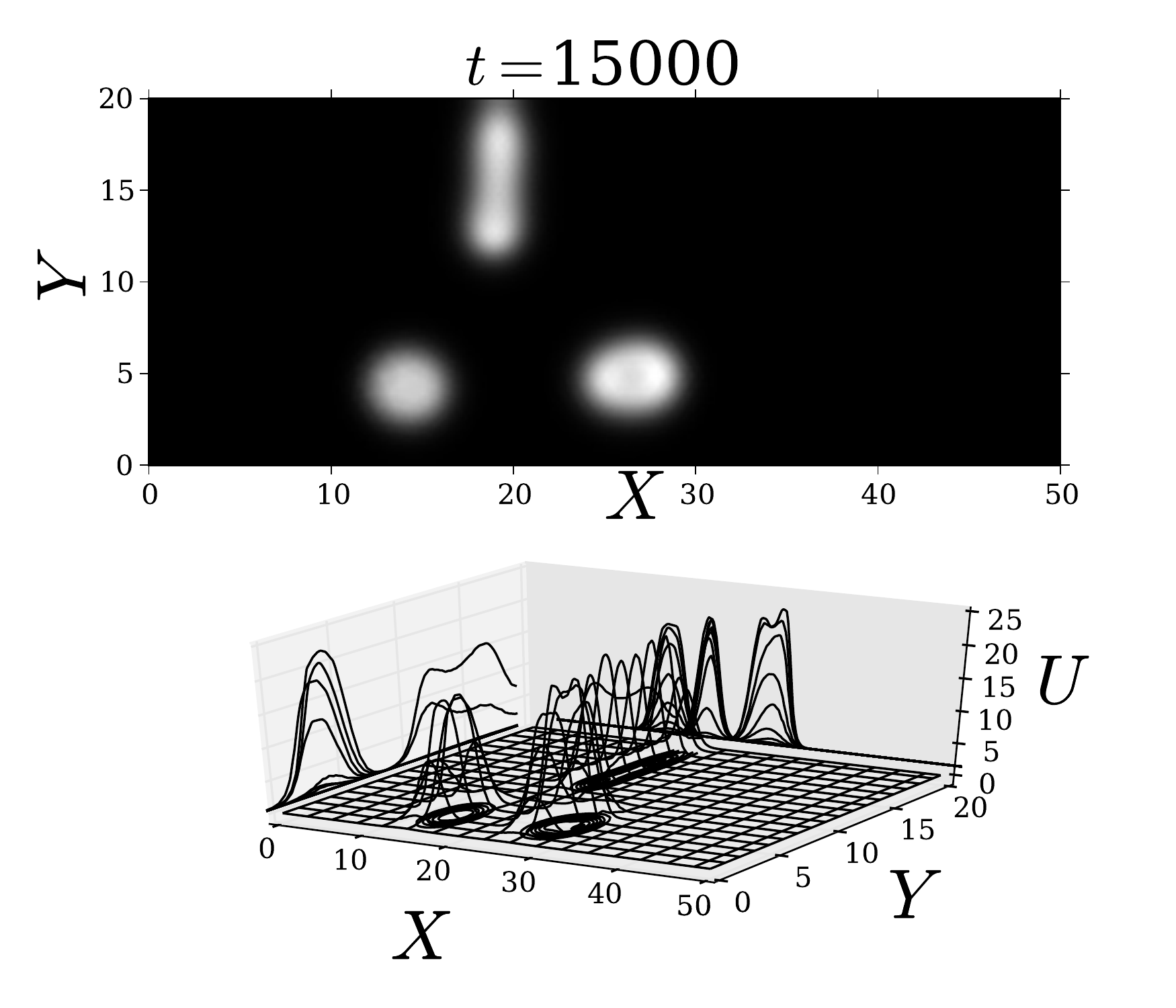}\label{sf:initialbreakupc}}
		\centering
		\subfigure[]{\includegraphics[width=0.33\textwidth]{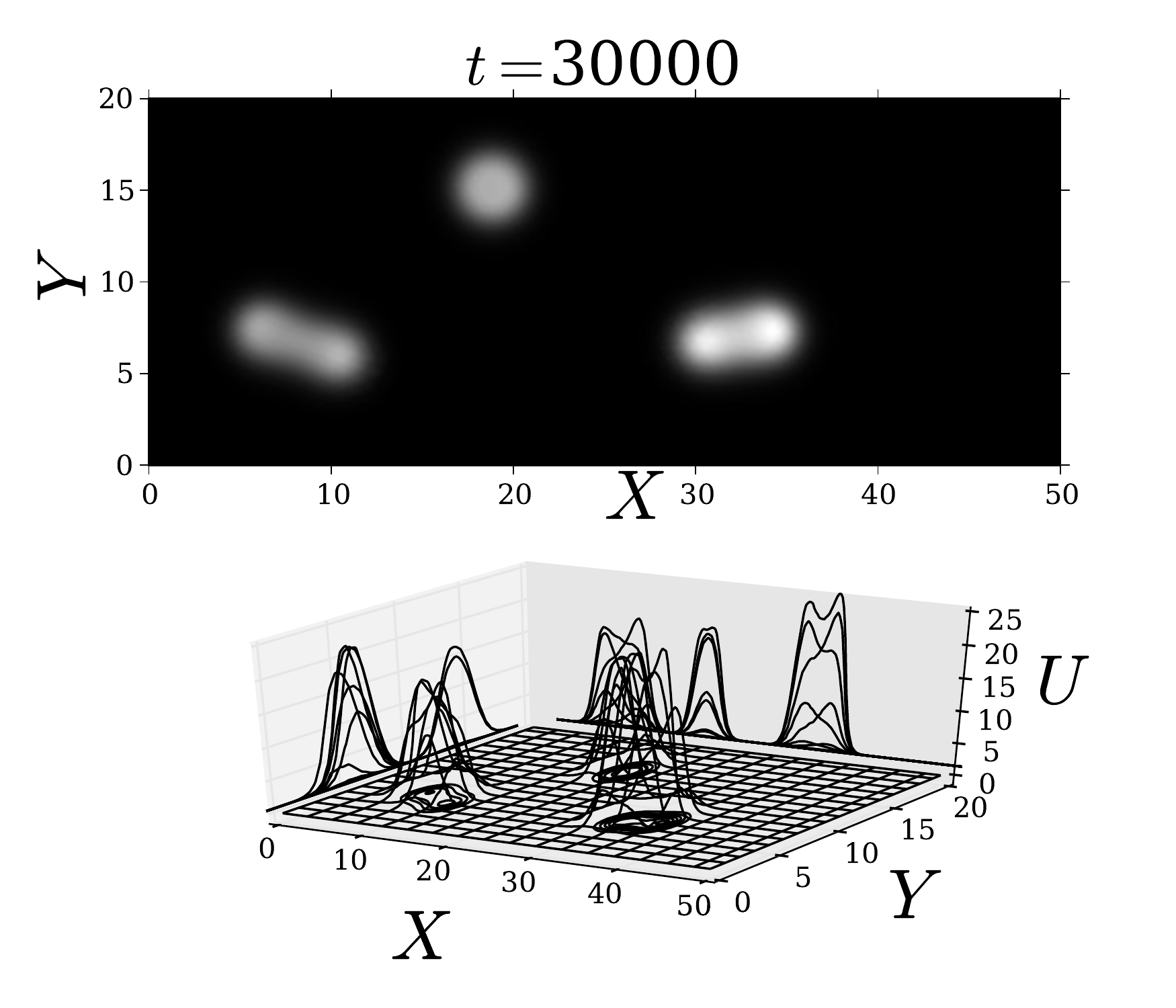}\label{sf:initialbreakupd}}
	\end{center}
	\caption{Example of a breakup instability of an interior localised
          stripe into one spot and two peanut-forms. (a) The stripe
          breaks up into a semi-localised stripe and a spot. (b) Spot
          splits up. (c) Breakup of the semi-localised stripe into a
          peanut-form, and spots move away from each other. (d) A spot
          and two peanut-forms are finally formed. Original parameter set one as
          given in Table~\ref{tab:tab} with $D_1=0.025$ and
          $k_2=0.15$, which corresponds to a stripe location in
          $x_0=18.5$.}
	\label{fig:initialbreakup}
\end{figure}

To gain further insight into the type of 
instability for stripes, we take an interior localised stripe as
initial condition and perform a time-dependent simulation of the full PDE
system. The results are shown in Fig.~\ref{fig:initialbreakup}. We
observe that first the stripe breaks up into a spot and a
\textquotedblleft semi-stripe\textquotedblright~set at the initial
location (Fig.~\ref{sf:initialbreakupa}). Then, the newly formed
spot splits (Fig.~\ref{sf:initialbreakupb}) giving way to two small
droplets. These two spots move away from other each while the
semi-stripe collapses into a peanut form
(Fig.~\ref{sf:initialbreakupc}). Finally, a spot is formed from the
semi-stripe in addition to the two peanut-forms (see
Fig.~\ref{sf:initialbreakupd}). Here two different instabilities are
present: a breakup instability, which destabilizes the localised
stripe to form spots, and another instability that creates peanut
forms from spots. We will investigate breakup instabilities from a
numerical viewpoint in \Sref{sec:bifurstripe}.

\subsection{Numerical Implementation} 
\label{subsec:implementation}

To time-step and compute steady states of~\eqref{eq:RecastSys},
we introduce a regular grid $\{ (x_i,y_j) \}$ of $N_x N_y$ nodes covering
$\Omega \cup \partial \Omega$ and form vectors $\vectorr{U} = \{ U(x_i,y_j) \}$
and $\vectorr{V} = \{ V(x_i,y_j) \}$. The Laplacian operator $\Delta$ is
approximated using second-order finite differences by forming explicitly
differentiation matrices $\vectorr{D_{xx}} \in \mathbf{R}^{Nx \times Nx}$,
$\vectorr{D_{yy}} \in \mathbf{R}^{Ny \times Ny}$ for second derivatives in $x$
and $y$, respectively, and combining them using Kronecker products,
$
\vectorr{L} = \vectorr{D_{xx}} \otimes \vectorr{I_y} + \vectorr{I_x} 
\otimes \vectorr{D_{yy}},
$
where $\vectorr{I_x}$ and $\vectorr{I_y}$ are $N_x$-by-$N_x$ and
$N_y$-by-$N_y$ identity matrices, respectively. We remark that the
sparse discrete Laplacian $\vectorr{L}$ incorporates boundary
conditions directly in the differentiation matrices. For the
initial-boundary value problem, we set $N_x=N_y=60$, or $N_x=N_y=125$
and time-step the resulting discretized system of $2N_xN_y$ nonlinear
ODEs
\begin{gather*}
  \dot{\vectorr{W}} = \vectorr{D} \otimes
  \begin{bmatrix} \vectorr{L} & 0 \\ 0 & \vectorr{L} \end{bmatrix}
  \vectorr{W} + 
  \begin{bmatrix} \vectorr{f}\left(\vectorr{W},\vectorr{x}\right) \\
    \vectorr{g}\left(\vectorr{W},\vectorr{x}\right)
  \end{bmatrix},
  \qquad
  \vectorr{W} = (\vectorr{U}, \vectorr{V})^T\,,
\end{gather*}
with a second order adaptive time stepper (Matlab in-built
\texttt{ode23s}, to which we provide the Jacobian matrix
explicitly). In our computations, the components of $\vectorr{U}$ and
$\vectorr{V}$ are interleaved to minimise the Jacobian matrix
bandwidth.
We continue steady states as solutions to the discretised
boundary-value problem
\begin{gather*}
  \vectorr{D} \otimes
  \begin{bmatrix} \vectorr{L} & 0 \\ 0 & \vectorr{L} \end{bmatrix}
  \vectorr{W} + 
  \begin{bmatrix} \vectorr{f}\left(\vectorr{W},\vectorr{x}\right) \\ \vectorr{g}\left(\vectorr{W},\vectorr{x}\right)
  \end{bmatrix}
  = 
  \vectorr{0}\,,
\end{gather*}
using the Matlab function \texttt{fsolve} with the default tolerance
and the secant continuation code developed in \cite{rankin01}.
The linear stability property of steady states is determined by
computing (a subset of) eigenvalues and eigenvectors of the Jacobian
matrix of the discretised linear operator in~\eqref{eq:stripesEigenval} for stripes.  In 2D calculations, we
compute the five eigenvalues with the largest real part using the
Matlab function \texttt{eigs}, whereas for stripes we determine the
full spectrum with \texttt{eig}.

\section{Stripes into spots}
\label{sec:bifurstripe}

The numerical bifurcation analysis, initially depicted in
Fig.~\ref{fig:stribifdiag}, shows that solution branches, arising from
the 1D analysis of \cite{bcwg}, are generally not stable stable under
transverse perturbations. This feature will be theoretically analyzed
further in \Sref{sec:breakup}. Indeed, further computations below in
Fig.~\ref{fig:breakupT3} and Fig.~\ref{fig:boundtest} show that stripes
are susceptible to breakup instabilities leading to spot formation.

\begin{figure}[t!]
	\centering
	\includegraphics[height=0.25\textheight]{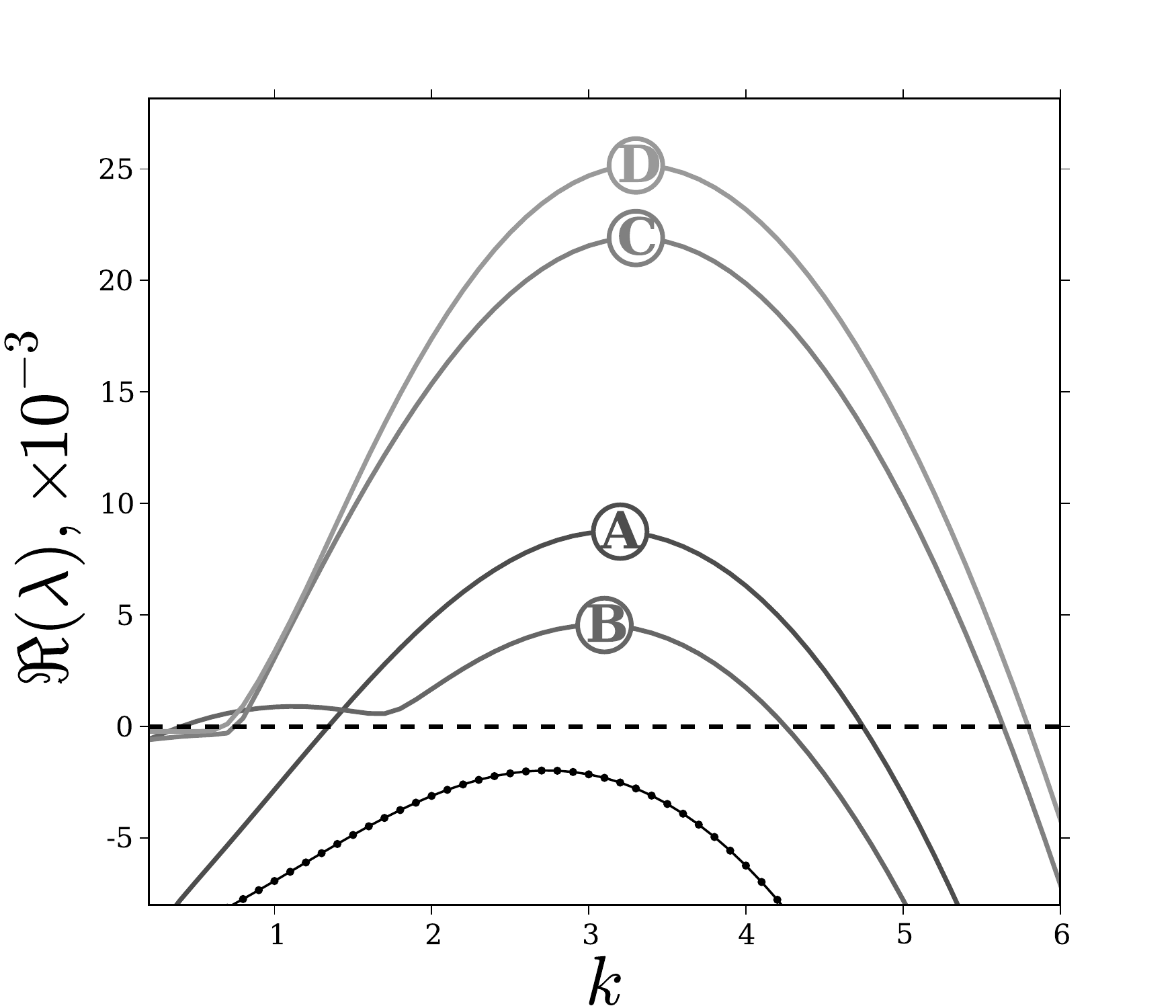}
	\caption{Dispersion relations computed numerically for
          particular steady solutions marked in bifurcation diagram in
          Fig.~\ref{fig:stribifdiag}; stable boundary stripe
          (dot-dashed line), (A)~unstable boundary stripe, (B)~single
          interior stripe, boundary and (C)~interior stripe, and
          (D)~two interior stripes. The dotted dispersion relation
          corresponds to a boundary stable solution. Original parameter set one as
          given in Table~\ref{tab:tab}.}
	\label{fig:disprelnum}
\end{figure}
\begin{figure}[t!]
	\begin{center}
		\centering
		\subfigure[]{\includegraphics[width=0.33\textwidth]{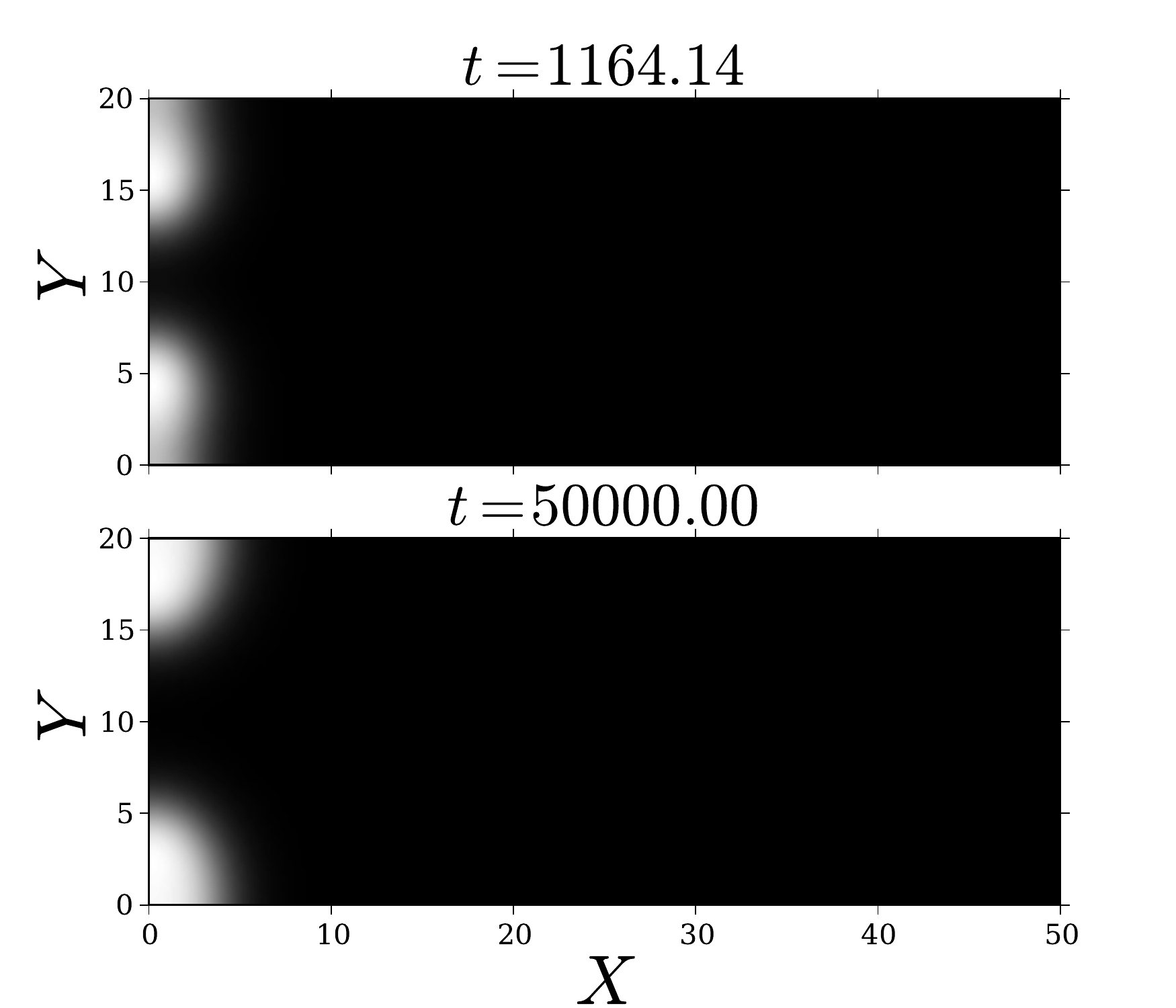}\label{sf:numbreak01a}}
		\centering
		\subfigure[]{\includegraphics[width=0.33\textwidth]{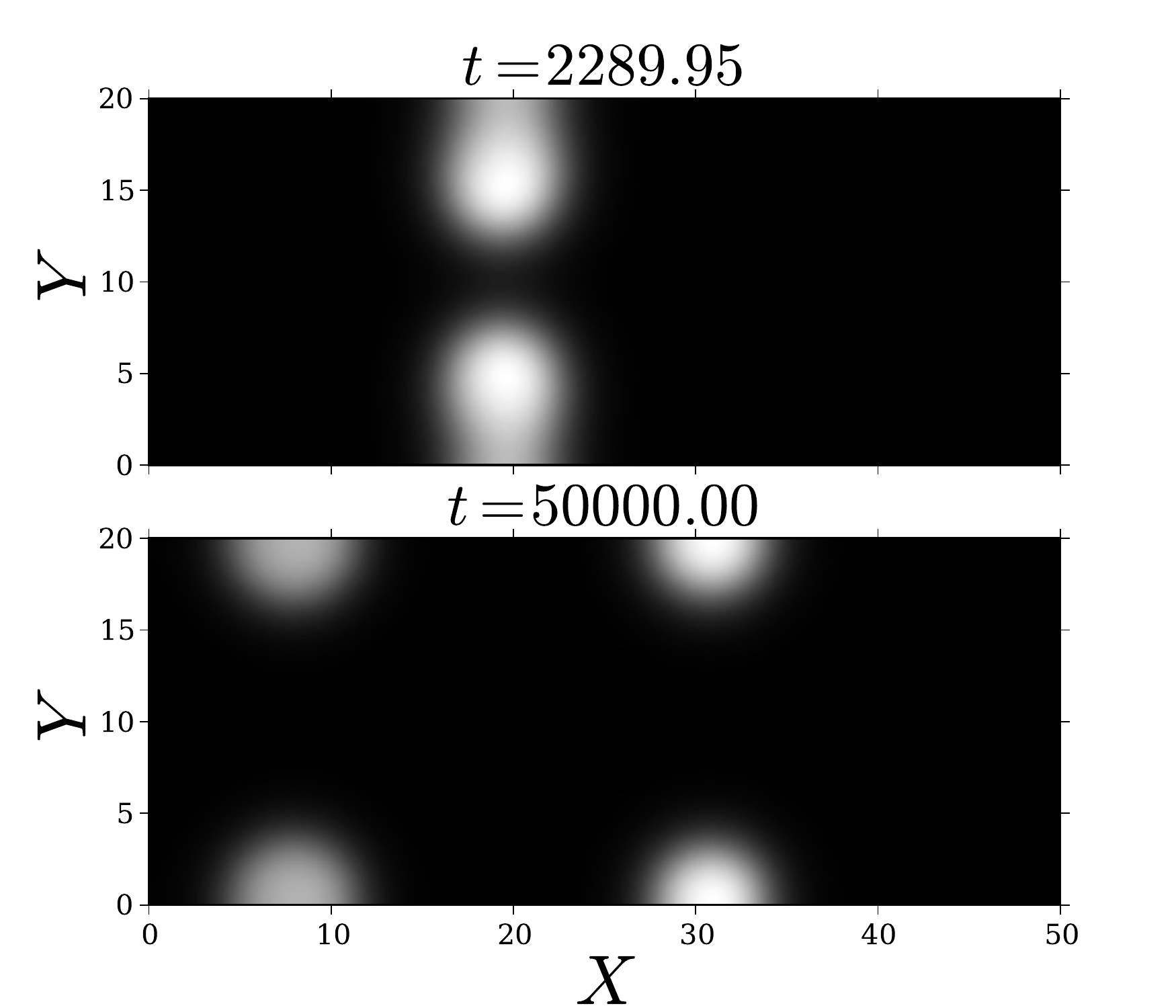}\label{sf:numbreak01b}}
		\centering
		\subfigure[]{\includegraphics[width=0.33\textwidth]{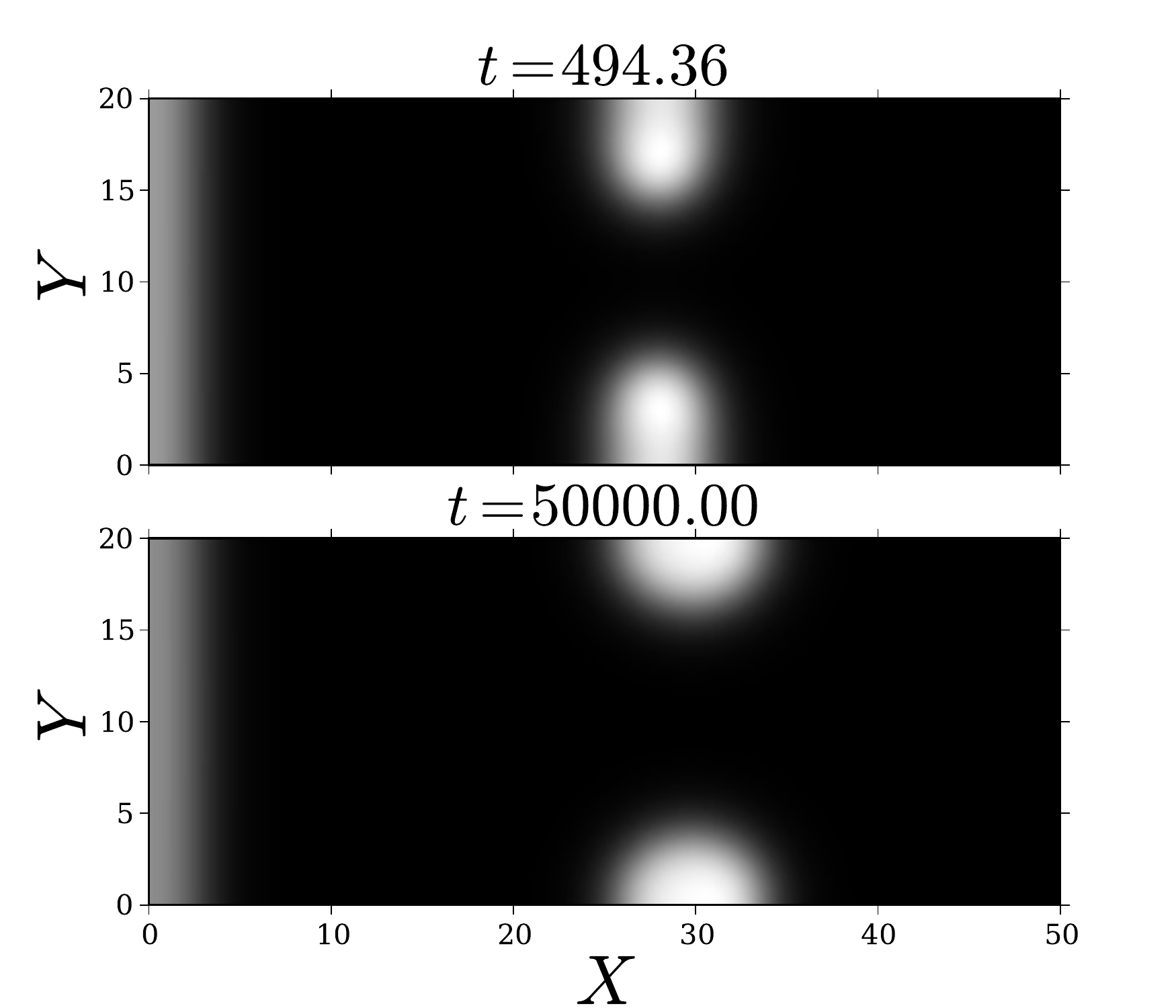}\label{sf:numbreak01c}}
		\centering
		\subfigure[]{\includegraphics[width=0.33\textwidth]{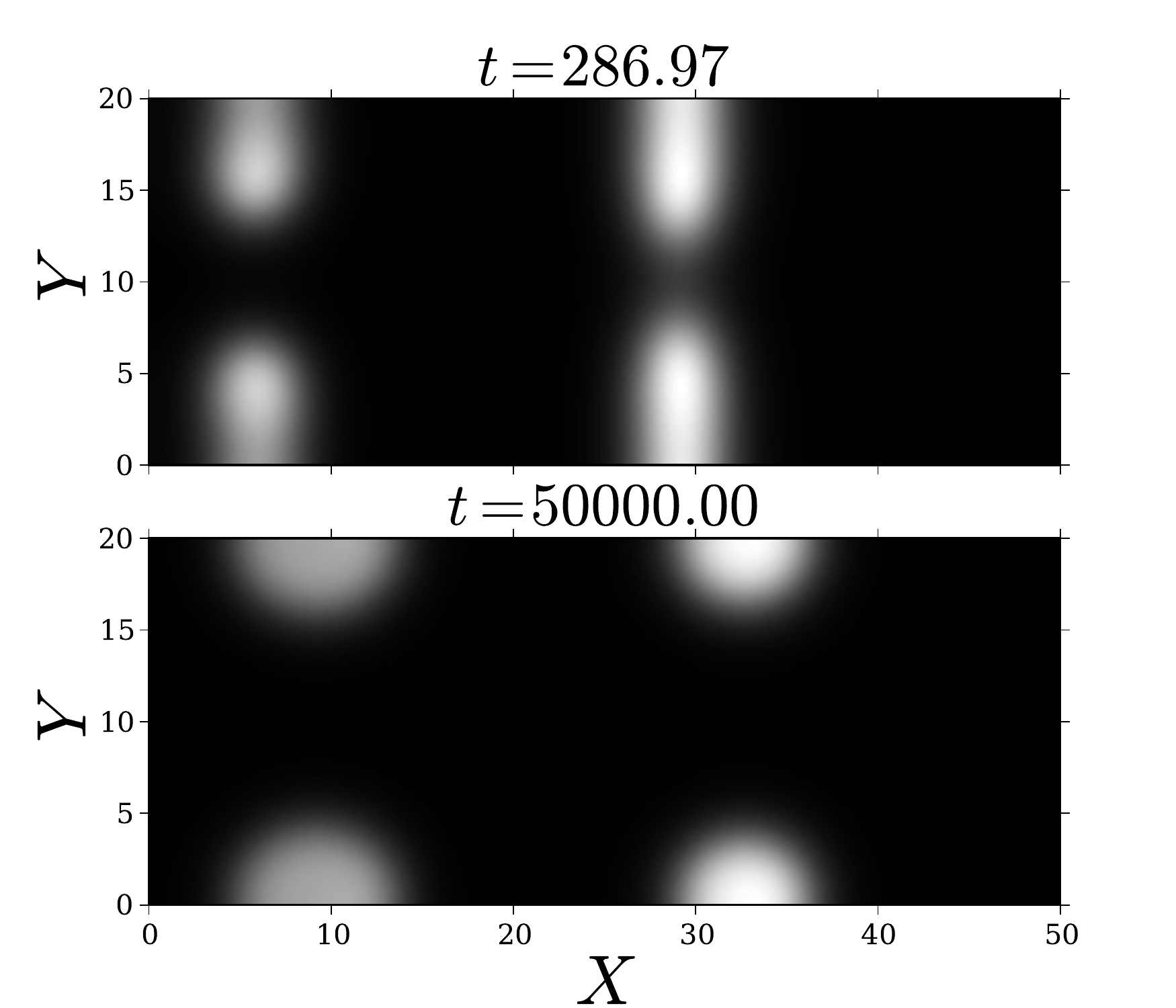}\label{sf:numbreak01d}}
	\end{center}
	\caption{Breakup instability (top panel) and final stable
          state solution (bottom panel) of each extended-solution
          kind: (a)~boundary stripe, (b)~an interior stripe,
          (c)~boundary and interior stripe, and (d)~two interior
          stripes. Original parameter set one as given in Table~\ref{tab:tab}.}
	\label{fig:numbreak01}
\end{figure}

In order to verify that unstable solutions in
Fig.~\ref{fig:stribifdiag} exhibit breakup, leading to spot formation,
we choose a solution from each branch, and respectively compute its
dispersion relation. In Fig.~\ref{fig:disprelnum} each curve is
labelled accordingly to solution kind (see also labels in
Fig.~\ref{fig:stribifdiag}): (A)~unstable boundary stripe,
(B)~an~interior stripe, (C)~boundary and interior stripe, and
(D)~two~interior stripes. The dispersion relation for a stable
boundary solution is computed, which is shown by a dotted curve. Upon
using each of these steady-states as initial conditions and performing
a direct time-step computation, we confirm that those labelled
from~(A) up to (D) are indeed unstable, while the solution
corresponding to the dotted curve is stable. Fig.~\ref{fig:numbreak01}
shows the initial creation of spots induced by breakup instabilities
(top panels) and the final stable states (bottom panels). Although,
according to the dispersion relations in Fig.~\ref{fig:disprelnum},
the most unstable mode should theoretically predict the number of
spots the stripe should break up into, the prediction from
Fig.~\ref{fig:disprelnum} is seen to provide an over-estimate of the
number of spots that are seen in the computations. This results from
the choice of the parameter set one in Table~\ref{tab:tab}, where
$\varepsilon$ is not too small. Consequently any spots created from a
breakup instability are rather \textquotedblleft fat\textquotedblright
~and not significantly localised. Nevertheless, it is clear from these
computations that $\mathcal O(1)$ time-scale instabilities play an
important role in destabilising stripes. We remark that, for a
different parameter set with a smaller~$\varepsilon$ leading to more
localised spots, in \Sref{sec:breakup} we will obtain a more
quantitatively favorable comparison between the theoretical prediction
of the number of spots arising from a breakup instability and that
observed in full numerical simulations (see Fig.~\ref{fig:breakupT3}
and Fig.~\ref{fig:boundtest} below).

In addition to breakup instabilities of a stripe, a secondary
${\mathcal O}(1)$ time-scale instability of spot self-replication can
also occur.  This instability is evident in the transition observed in
Fig.~\ref{sf:numbreak01b}.  From this figure, we observe that once
spots are formed from a breakup instability, there is a further
self-replication instability in which each spot splits into two small
droplets near the upper and lower boundary. The ultimate location of
these droplets is transversally determined by the auxin gradient, which induces a
slow drift of the droplets to their eventual steady-state locations. A
further interesting feature shown in Fig.~\ref{sf:numbreak01c}, is
that it is possible that only the interior stripe undergoes a breakup
instability while the boundary stripe remains intact. This shows that
a steady-state pattern consisting of both spots and stripes can occur
at the same parameter value.

\subsection{A richer zoo}
\label{subsec:richerzoo}

\begin{figure}[t!]
	\begin{center}
		\centering
		\subfigure[]{\includegraphics[width=0.33\textwidth]{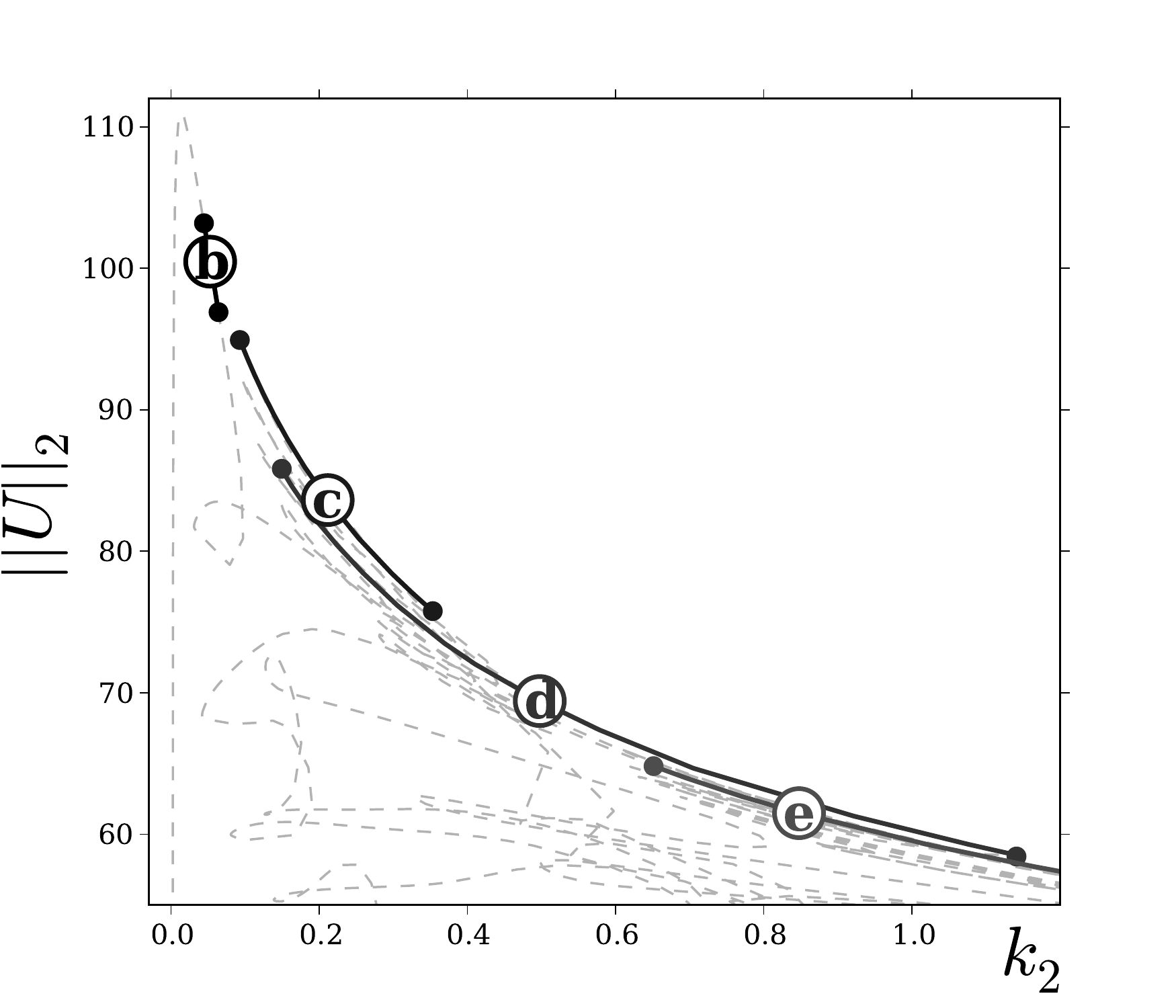}\label{sf:stripesandspotsa}}
		\\\vspace{-0.3cm}
		\centering
		\subfigure[]{\includegraphics[width=0.33\textwidth]{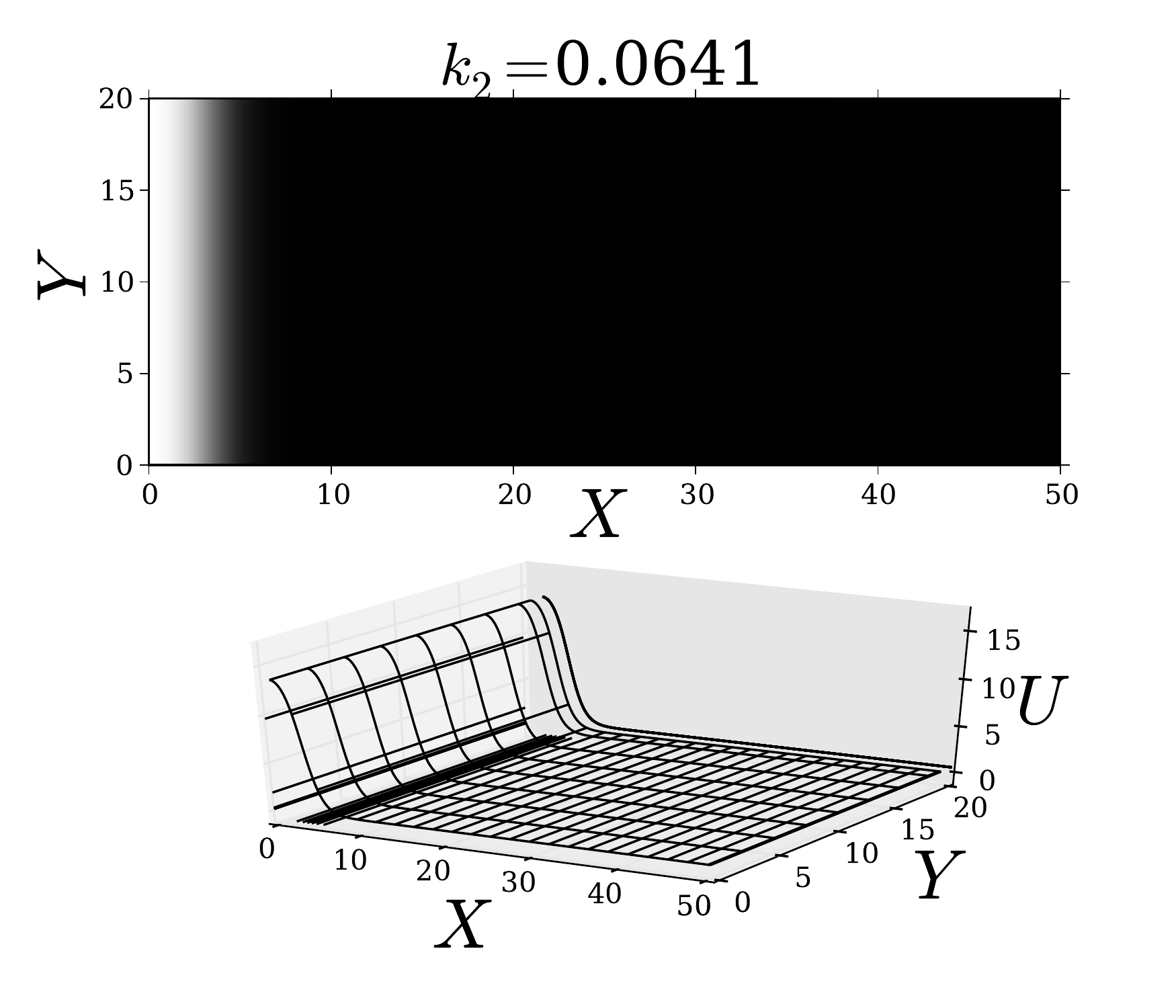}\label{sf:stripesandspotsb}}
		\centering
		\subfigure[]{\includegraphics[width=0.33\textwidth]{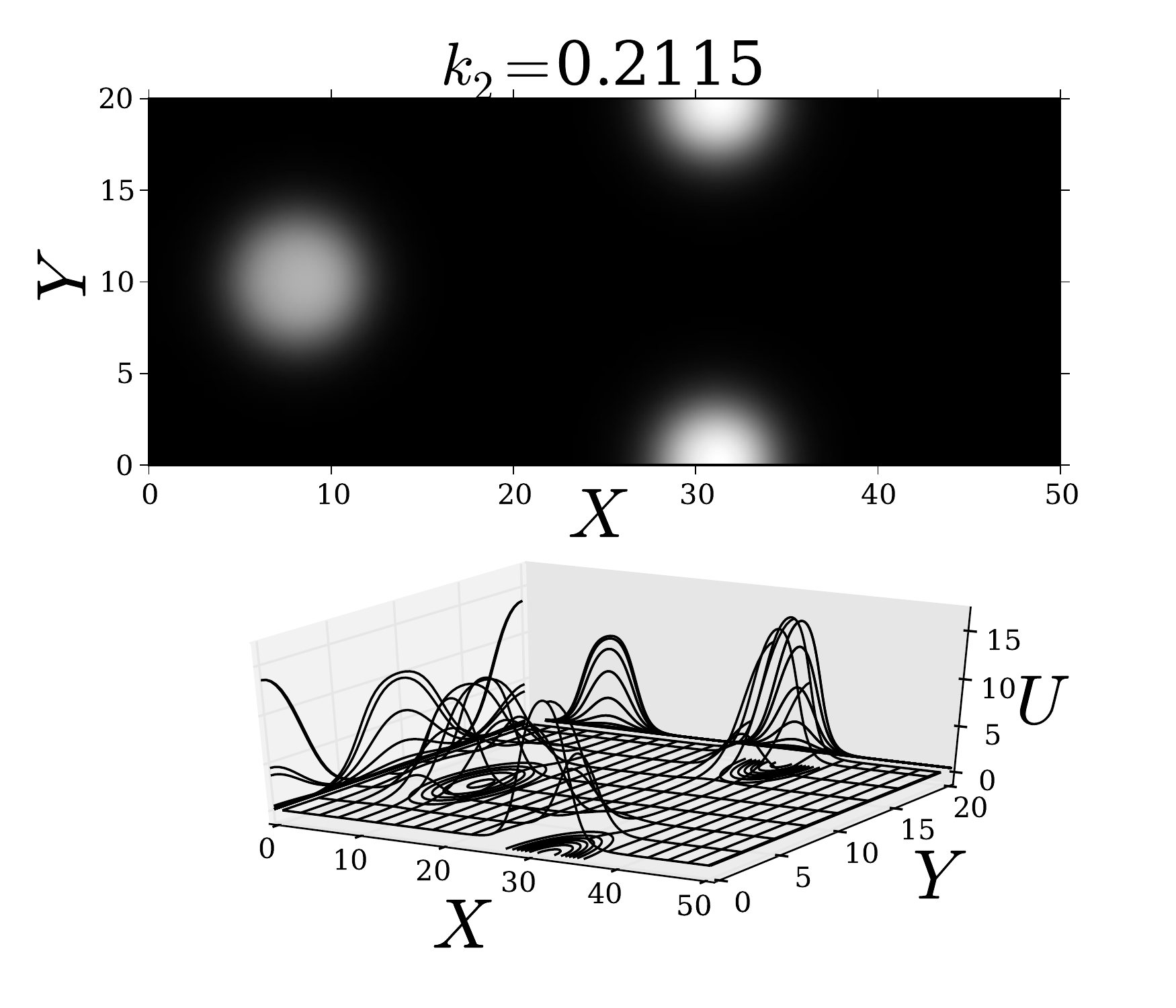}\label{sf:stripesandspotsc}}
		\centering
		\subfigure[]{\includegraphics[width=0.33\textwidth]{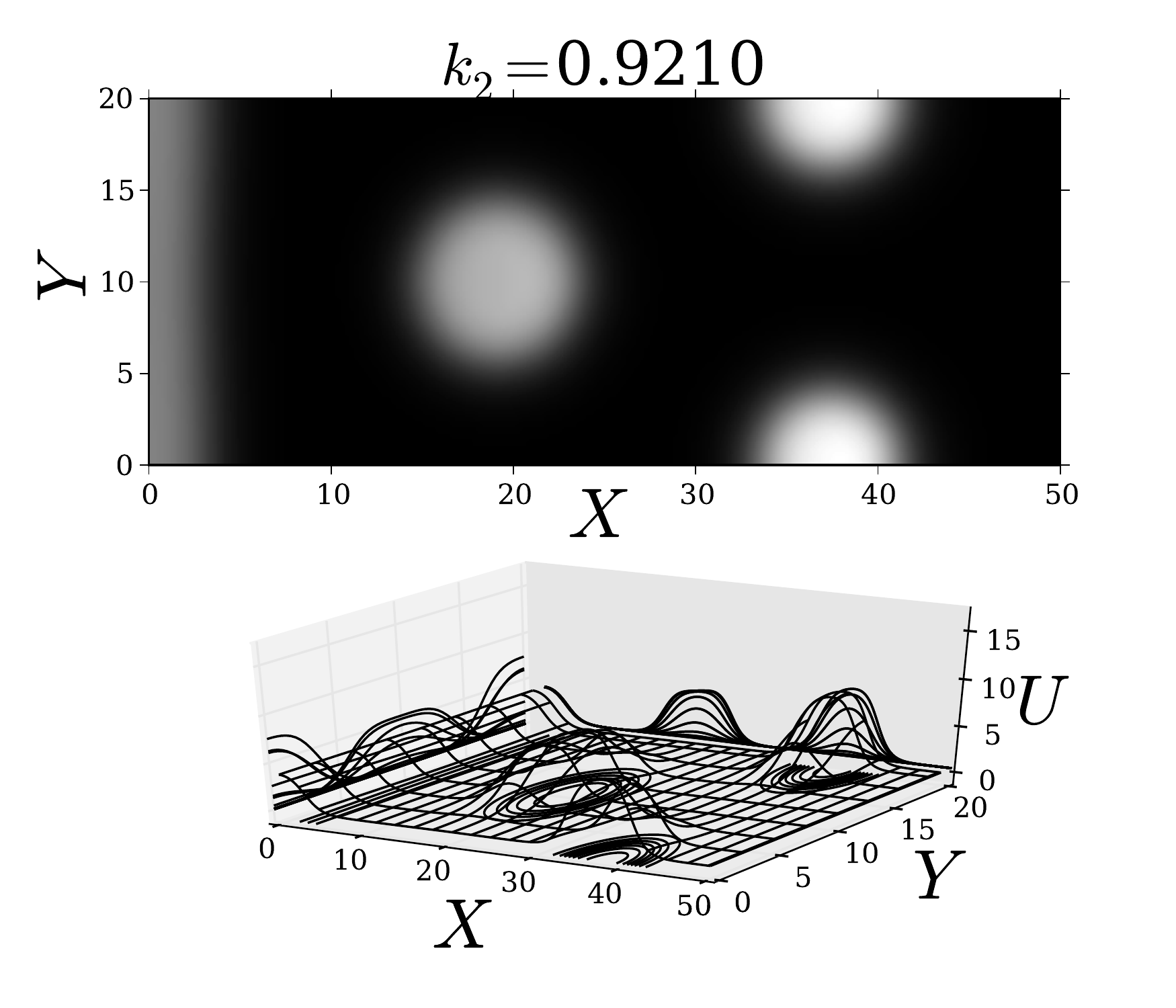}\label{sf:stripesandspotsd}}
		\centering
		\subfigure[]{\includegraphics[width=0.33\textwidth]{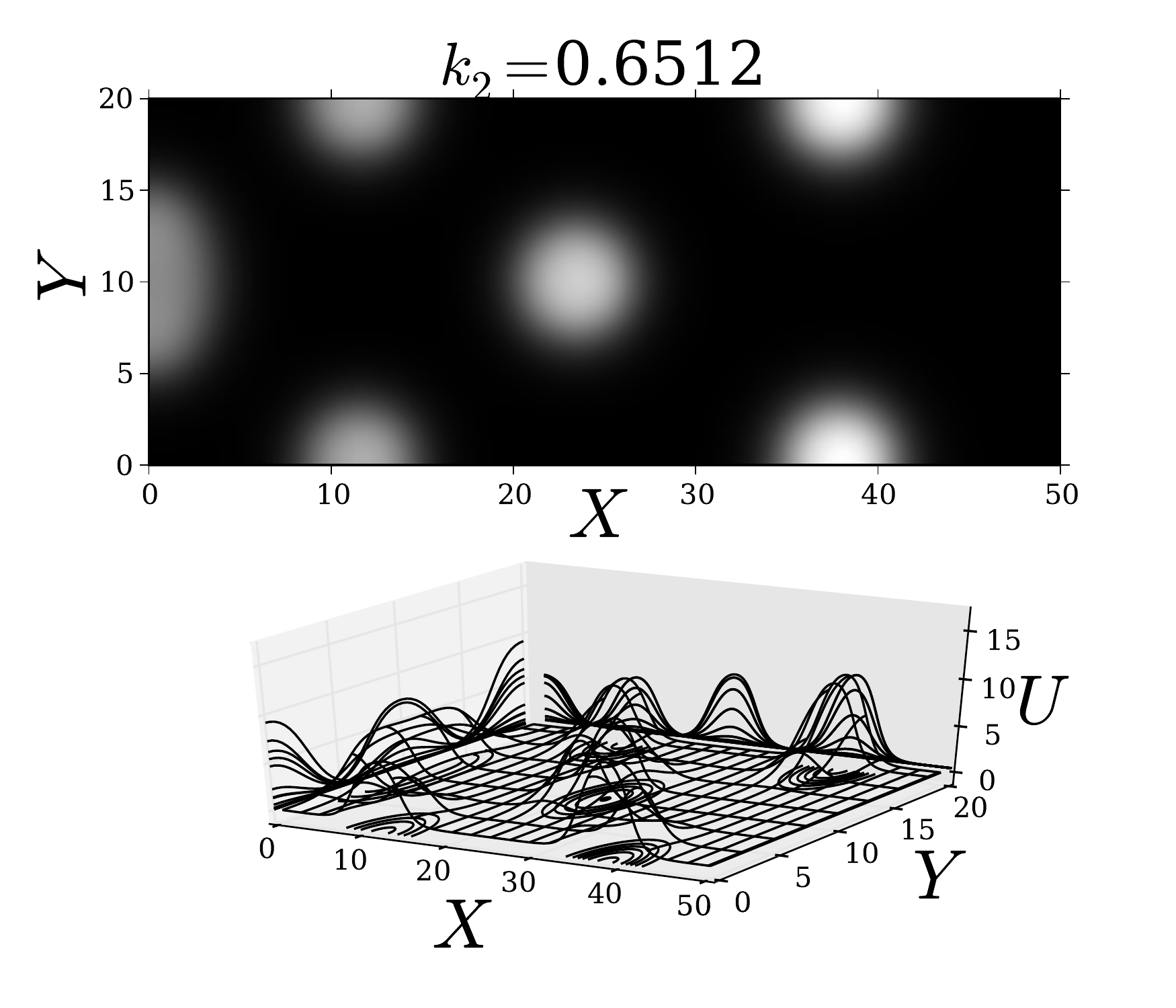}\label{sf:stripesandspotse}}
	\end{center}
	\caption{Bifurcation diagram: spots and a boundary stripe. (a)
          Stable branches are drawn by solid lines and unstable ones
          by light-grey dashed lines, and filled circles represent
          fold points. Stable solutions, accordingly to labels~(b) up
          to (e), are shown in: (b) a boundary stripe, (c) a spot in
          the interior and two spots at the boundary, (d) a boundary
          stripe, an interior spot and two spots at the boundary, and
          (e) an interior spot and five spots at the
          boundary. Original parameter set one as given in Table~\ref{tab:tab}.}
	\label{fig:stripesandspots}
\end{figure}

The lower panels of Fig.~\ref{fig:numbreak01} suggest that a wide
variety of mixed spot and stripe patterns can be created through
breakup instabilities. In order to explore these new types of solution
further, we shall perform full numerical continuation of 2D solutions.
To do this, we begin with a solution on the stable steady-state branch
of Fig.~\ref{fig:stribifdiag}.  We then continue this solution by
varying the main bifurcation parameter $k_2$, both backwards and
forwards, to finally obtain the bifurcation diagram depicted in
Fig.~\ref{fig:stripesandspots}. There, all unstable branches are
plotted as light-grey dashed lines, whereas stable branches are
represented as solid lines labelled accordingly as: (b)~stable
stripes, (c)~an~interior spot and two spots vertically aligned at the
boundary, (d)~similar configuration but with an additional stripe,
and~(e)~an~interior spot and five spots at the boundary. See
Figures~\ref{sf:stripesandspotsb}--\ref{sf:stripesandspotse} for
examples of each stable steady-state. In
Fig.~\ref{sf:stripesandspotsa}, as seen before, the bifurcation
diagram replicates features studied in the 1D case in ~\cite{bcwg},
such as the overlapping of stable branches of single and multiple
localised patches.  Stable branches typically become unstable through
fold bifurcations. All branches seem to lie on a single connected
curve, and no other bifurcations were found except for the pitchfork
bifurcations in branch~(b). However, branches~(c) up to~(e) are
extremely close to each other and apparently inherit properties from
each other. 
%
That is, they seem to undergo a {\itshape
  creation-annihilation cascade} effect similar to that observed
in~\cite{bcwg}. In other words, take a steady-state which lies on the
left-hand end of branch~(c) and slide down this branch as $k_2$ is
increased. It then loses stability at the fold point, and at the other
extreme to then fall off in branch~(d). Thus a stripe emerges, which
pushes the interior spot further in. The same transition follows up to
branch~(e), more spots arise though as the stripe is destroyed. No
further stable branches with steady-states resembling either spots or
stripes were found.

\section{Breakup instabilities of localised stripes}
\label{sec:breakup}

In \cite{doelman01} and \cite{kolok01} a theoretical framework for the
stability analysis of a localised stripe for the Gierer--Meinhardt
reaction-diffusion system was given. This previous analysis is not
directly applicable to the ROP problem \eqref{eq:ROPb} owing to the
presence of the spatially dependent coefficient $\alpha(x)$ that
modulates the nonlinear term. In the limit $\varepsilon\to 0$, in this
section we extend the analysis of \cite{kolok01} to theoretically
explain the breakup instability of stripe solutions numerically
observed in Fig.~\ref{fig:bstrpebreakup} and 
Fig.~\ref{fig:initialbreakup}.

We first re-scale variables in \eqref{eq:ROPb} by
$U=\varepsilon^{-1}u$ and $V=\varepsilon v$ and we assume 
$D={\mathcal O}(\varepsilon^{-1})$ so that $D=\varepsilon^{-1}D_0$ with
$D_0={\mathcal O}(1)$ (see \cite{bcwg}). Then, \eqref{eq:ROPb} becomes
\begin{subequations}\label{eq:stripe01}
 \begin{gather}
  u_t = \varepsilon^2\left(u_{xx} + su_{yy}\right) +\alpha(x)u^2v - 
u + \frac{\varepsilon^2}{\tau\gamma}v\,, \label{eq:stripe01u}\\ 
 \varepsilon\tau v_t  = D_0\left(v_{xx} + sv_{yy}\right)+1-\varepsilon v - 
\varepsilon^{-1}\left[\tau\gamma\left(\alpha(x)u^2v -u\right) +
\beta\gamma u\right] \,, \label{eq:stripe01v}
 \end{gather}
\end{subequations}
with homogeneous Neumann boundary conditions at $x=0,1$ and
$y=0,1$. The relation between the dimensionless parameters $\tau$,
$\gamma$, $\beta$, and $D_0$ and the original parameters is given
above in \eqref{eq:newpargabe}. 

\subsection{An Interior Stripe}\label{subsec:intstripe}

We first consider the stability of an isolated, interior localised stripe. To
do so, we first need to construct for $\varepsilon\to 0$ a 1D quasi
steady-state spike solution centred at some $x_0$ in $0<x_0<1$. From
Proposition~4.1 of \cite{bcwg}, this spike solution for
\eqref{eq:stripe01} has the leading-order asymptotics
\begin{subequations}\label{eq:stripe02}
	\begin{gather}
		u_s \sim \frac{1}{\alpha(x_0) v^{0}} 
w\left[\varepsilon^{-1}(x-x_0)\right] \,, \qquad w(\xi) 
\equiv \frac{3}{2}\sech^2\left({\xi/2}\right) \,, \label{eq:stripe02u}\\
 v_s\sim v^{0}  -  \frac{\left(x-x_0\right)^2}{2D_0}+ \frac{1}{D_0}
   \left\{\begin{array}{ll} -x_0\left(x-x_0\right)\,,  & 0\leq x\leq x_0 \,, \\
   \\
   \left(1-x_0\right)\left(x-x_0\right), & x_0< x\leq 1 \,,
	\end{array}\right. 
	\,, \qquad  v^{0}\equiv \frac{6\beta\gamma}{\alpha\left(x_0\right)} \,.
	\end{gather}
\end{subequations}
Here $w(\xi)$ is the unique homoclinic orbit of $w^{\prime\prime}-w+w^2=0$
with $w(0)>0$, $w^{\prime}(0)=0$, and $w\to 0$ as $|\xi|\to\infty$.

We extend this solution trivially in the $y$-direction to form a
stripe. To determine the stability of this stripe solution we
introduce the transverse perturbation of~\eqref{eq:stripe02} in the
same form as in~\eqref{eq:perturb}.  Upon substituting this
perturbation into~\eqref{eq:stripe01}, we get the following
singularly perturbed eigenvalue problem with
$\varphi_x=\psi_x=0$ at $x=0,1$:
\begin{subequations}\label{eq:stripe04}
	\begin{gather}
 \varepsilon^2\varphi_{xx}-\varphi+ 2\alpha(x)u_sv_s\varphi+
\alpha(x)u_s^2\psi+\frac{\varepsilon^2}{\tau\gamma}\psi=
\left(\lambda+s\varepsilon^2m^2\right)\varphi\,, \label{eq:ustripelambda}\\ 
 D_0\left(\psi_{xx}-sm^2\psi \right)-\varepsilon^{-1}\tau\gamma\alpha(x)
u_s^2\psi-\varepsilon\psi=\varepsilon^{-1}\left[
\tau\gamma\left(2\alpha(x)u_sv_s\varphi-\varphi\right)+\beta\gamma
\varphi\right]+\varepsilon\tau\lambda\psi \,. \label{eq:vstripelambda}
	\end{gather}
\end{subequations}

There are two distinct classes of eigenvalues for~\eqref{eq:stripe04},
each giving rise to a different type of instability
(see~\cite{kolok01}). The small eigenvalues, with $\lam={\mathcal
  O}(\eps^2)$, govern zigzag instabilities, whereas the large
eigenvalues with $\lam={\mathcal O}(1)$ as $\eps\to 0$ govern the
linear stability of the amplitude of the stripe. For the
Gierer--Meinhardt model, this latter instability was found in
\cite{kolok01} to be the mechanism through which a nonlinear event is
triggered leading to the break up of the stripe into localised spots.
The simulations and numerical analysis in \Sref{sec:initsimu} suggest
that breakup instabilities dominate on an ${\mathcal O}(1)$
time-scale, and hence we shall only focus on analysing the large
eigenvalues with $\lambda={\mathcal O}(1)$ as $\varepsilon\to 0$.

To analyse such breakup instabilities, we must derive an NLEP from~\eqref{eq:stripe04}. Since the
time-evolution of a 1D quasi steady-state spike centred at $x_0$
moves at an ${\mathcal O}(\eps^2)\ll 1$ speed (see~\cite{bcwg}), in
our stability analysis of the ${\mathcal O}(1)$ eigenvalues we will
consider $x_0$ to be frozen. The steady-state solution for $x_0$ is
characterized by Proposition~4.3 of \cite{bcwg}.

We begin by looking for a localised eigenfunction for $\varphi(x)$ in
the form
\begin{flalign}\label{eq:eigenstripe}
 \Phi(\xi) = \varphi(x_0 + \eps \xi) \,, \qquad \xi\equiv \eps^{-1}(x-x_0) 
\,, \qquad \Phi\to 0 \, \quad \mbox{as} \quad
 |\xi|\to \infty \,.
\end{flalign}
We then use \eqref{eq:stripe02} to calculate $2 u_s v_s \alpha \sim2w$
and $\alpha u_s^2\sim {\alpha(x_0) w^2/\left[\alpha(x_0)
    v^0\right]^2}$ for $x$ near $x_0$. In this way, we obtain from~\eqref{eq:ustripelambda} that $\Phi(\xi)\sim \Phi_0(\xi) + o(1)$,
where $\Phi_0$ satisfies
\begin{flalign}\label{eq:stripe05}
 {\mathcal L}_0 \Phi_0 + \frac{w^2}{\alpha(x_0) \left[v^{0}\right]^2} 
\psi(x_0) = \left(\lambda+s\varepsilon^2m^2\right)\Phi_0 \,, \quad 
-\infty<\xi<\infty\,; \qquad \Phi_0 \to 0 \, \quad \mbox{as} \quad 
|\xi|\to \infty \,.
\end{flalign}
Here ${\mathcal L}_0\Phi_0\equiv\Phi_{0\xi\xi}-\Phi_0 + 2w\Phi_0$ is
referred to as the local operator. 

Next, we must calculate $\psi(x_0)$ in \eqref{eq:stripe05} from
\eqref{eq:vstripelambda}. Since $u_s$ and $\varphi$ are localised, we
use \eqref{eq:stripe02} to calculate as $\eps\to 0$ the coefficients
in \eqref{eq:vstripelambda} in the sense of distributions as
\begin{gather*}
 \varepsilon^{-1}\tau\gamma\alpha(x)u_s^2\psi \longrightarrow 
  \frac{\tau\gamma}{ \alpha(x_0) \left[v^{0}\right]^2}
\left[\int_{-\infty}^{\infty} w^2\, d\xi\right] \psi(x)\,	\updelta(x-x_0) \,,\\ 
	\varepsilon^{-1}\left[\tau\gamma\left(2\alpha(x)u_sv_s\varphi-
\varphi\right)+ \beta\gamma\varphi\right] \longrightarrow 2\tau\gamma 
\left[\int_{-\infty}^{\infty} \left(w\Phi_0 -\kap\Phi_0\right) \, d\xi\right]\, 
\updelta(x-x_0) \,, \qquad \kappa\equiv\frac{1}{2}\left(1 -\frac{\beta}{\tau} 
\right) \,,
\end{gather*}
where $\int_{-\infty}^{\infty} w^2\, d\xi=6$. Similar calculations for
the 1D spike were given in (5.4) of~\cite{bcwg}.  In this way, we
obtain from \eqref{eq:vstripelambda} that, in the outer region,
$\psi\sim \psi_0$ where $\psi_0$ satisfies
\begin{flalign}\label{eq:stripe06} 
 D_0\left(\psi_{0xx}-sm^2\psi_0\right)&- 
\frac{6\tau\gamma}{\alpha(x_0)\left[v^{0}\right]^2} \psi_0(x) \updelta(x-x_0) = 
  2\tau\gamma  \left[\int_{-\infty}^{\infty} \left(w \Phi_0 - \kap \Phi_0\right) 
\, d\xi\right] \, \updelta\left(x-x_0\right) \,,
\end{flalign}
with $\psi_{0x}=0$ at $x=0,1$. This problem for $\psi_0$ is equivalent
to the following problem with jump conditions across $x=x_0$:
\begin{flalign}\label{eq:psi0}
	\left\{ \begin{array}{l}
 		\psi_{0xx}-sm^2\psi_0=0, \qquad 0<x<x_0 \,, \quad x_0<x<1 \,; 
 \qquad \psi_{0x}\left(0\right)=\psi_{0x}\left(1\right)=0\,,\\
		\left[\psi_0\right]_{x_0} =0 \,, \quad D_0 
 \left[ \psi_{0x}\right]_{x_0} = \frac{\displaystyle a_0}{\displaystyle \gamma} \psi_0(x_0) + \gamma b_0 \,,
	\end{array}\right.
\end{flalign}
where we define the bracket notation as $\left[ z\right]_{x_0}\equiv
z(x_0^{+})-z(x_0^{-})$.  In \eqref{eq:psi0}, we have defined $a_0$ and
$b_0$ by
\begin{flalign}\label{eq:abkap}
	a_0\equiv\frac{6\tau \gamma^2}{\alpha\left(x_0\right)
\left[v^0\right]^2}\,, \qquad b_0\equiv2\tau \int\limits_{-\infty}^{\infty} 
\left(w\Phi_0 -\kap \Phi_0 \right) \, d\xi \,, \qquad \kap \equiv 
\frac{1}{2}\left(1 - \frac{\beta}{\tau} \right) \,.
\end{flalign}

To represent the solution to \eqref{eq:psi0} we introduce the Green's
function $G\left(x;x_0\right)$ satisfying
\begin{gather}\label{eq:gms0}
   G_{xx} - sm^2 G=-\updelta\left(x-x_0\right)\,, \quad 0<x<1\,; 
\qquad G_{x}(0;x_0)=G_{x}(1;x_0)=0 \,; \qquad \left[G_x\right]_{x_0}=-1\,.
\end{gather}
For existence of this $G$ we require that $m>0$. The case $m=0$, studied
in \cite{bcwg}, corresponds to the stability problem of a 1D spike
and requires the introduction of the modified or Neumann Green's
function. Here we consider the case $m>0$. For $m>0$, the
solution to \eqref{eq:psi0} is
$\psi(x)=AG\left(x;x_0\right)$, where $A$ is determined from the jump
condition in \eqref{eq:psi0}. In this way, we calculate $\psi(x_0)$ as
\begin{flalign}\label{eq:psi01}
	\psi_0(x_0) =-\frac{\gamma^2b_0}{a_0G^0+\gamma D_0}G^0\,, \qquad
   G^0\equiv G(x_0;x_0) \,. 
\end{flalign}
Upon substituting \eqref{eq:psi01} into \eqref{eq:stripe05}, and from
the definitions of $a_0$ and $b_0$ in \eqref{eq:abkap}, we obtain that
\begin{flalign}\label{eq:stripe06bis}
 {\mathcal L}_0 \Phi_0 - 2\chi w^2\left(\frac{G^0}{D_0+6\chi G^0}\right) 
\int_{-\infty}^{\infty} \left(w\Phi_0 - \kap \Phi_0 \right) \, d\xi =
\left(\lambda+s\varepsilon^2m^2\right)\Phi_0 \,, 
\end{flalign}
where we have defined $\chi$ by
\begin{flalign*}
 \chi\equiv\displaystyle\frac{\tau\gamma}{\alpha\left(x_0\right) 
\left[v^0\right]^2}\,, \qquad v^0=\frac{6\beta\gamma}{\alpha(x_0)} \,.
\end{flalign*}

Next, we introduce a parameter $\mu$ defined by
\begin{flalign}\label{eq:mugrad}
 \mu\equiv\frac{12\chi G^0}{D_0+6\chi G^0} = 2 
\left(1 + \frac{D_0}{6\chi G^{0}}\right)^{-1} \,.
\end{flalign}
In terms of this parameter, \eqref{eq:stripe06bis} becomes
\begin{flalign}\label{eq:stripe07}
 {\mathcal L}_0 \Phi_0 - \frac{\mu}{6}w^2\left(I_1-\kappa I_2\right)=
\left(\lambda+s\varepsilon^2m^2\right)\Phi_0 \,,
\end{flalign}
where $I_1$, and $I_2$, are defined by $I_1\equiv\int_{-\infty}^\infty
w\Phi_0 \, d\xi$ and $I_2\equiv\int_{-\infty}^\infty \Phi_0 \, d\xi$.

The NLEP \eqref{eq:stripe07} involves two non-local terms. To derive
an NLEP in a more standard form with only one nonlocal term, we
integrate \eqref{eq:stripe07} from $-\infty<\xi<\infty$ to relate
$I_1$ and $I_2$ as
\begin{flalign}\label{eq:I1I2}
	I_2=\frac{2-\mu}{\lambda+1+s\varepsilon^2m^2-\mu\kappa}I_1\,,
\end{flalign}
where we have used $\int_{-\infty}^{\infty}w^2\:d\xi=6$. Then, upon
using this relation to eliminate $I_2$ in \eqref{eq:stripe07} we
obtain an NLEP characterizing breakup instabilities for an interior
localised stripe. We summarize our result in the following formal
proposition:

\begin{prop}\label{prop:stripeNLEP}
The stability on an ${\mathcal O}(1)$ time-scale of a quasi
steady-state interior stripe solution of~\eqref{eq:stripe01} is
determined by the spectrum of the NLEP~\bsub \label{prop:h_stripe}
\begin{equation}\label{eq:stripe08}
 {\mathcal L}_0 \Phi_0 - \theta_h(\lambda;m)\,w^2 \: 
\frac{\int_{-\infty}^\infty w\Phi_0 \:d\xi}{\int_{-\infty}^{\infty} w^2 \, d\xi}  =
\left(\lambda+s\varepsilon^2m^2\right)\Phi_0\,, \quad -\infty<\xi<\infty\,; 
\quad \Phi_0\to 0 \,, \quad \mbox{as} \quad |\xi| \to \infty \,,
\end{equation}
where ${\mathcal L}_0\Phi_0\equiv\Phi_{0\xi\xi}-\Phi_0 + 2w\Phi_0$, and
$\theta_h(\lambda;m)$ is given by
\begin{gather}\label{eq:nlephom}
 \theta_h(\lambda;m) \equiv \mu\left(\frac{\lambda+1+s\varepsilon^2m^2-2\kappa}
 {\lambda+1+s\varepsilon^2m^2-\mu\kappa}\right)\,, \\
	\mu\equiv 2 \left( 1 + \frac{D_0}{6\chi G^{0}}\right)^{-1} \,, 
\qquad \chi\equiv\displaystyle\frac{\tau\alpha\left(x_0\right)}
{36\beta^2\gamma}\,, \qquad G^0\equiv G\left(x_0;x_0\right)\,.\nonumber
\end{gather}
\esub 
Here $G\left(x;x_0\right)$ is defined by \eqref{eq:gms0}, and
the wavenumber $m$ in the $y$-direction is $m=k\pi$ with
$k\in\mathbb{Z}^+$.
\end{prop}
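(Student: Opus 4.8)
The plan is to follow the standard route for deriving an NLEP from a singularly perturbed eigenvalue problem, tracking carefully how the spatial inhomogeneity $\alpha(x)$ enters. Essentially the entire derivation is already laid out in the excerpt up to equation~\eqref{eq:I1I2}; the proposition is the clean repackaging of that computation into a single-nonlocal-term NLEP. So my first move is to recall the inner-region reduction: writing $\Phi(\xi)=\varphi(x_0+\eps\xi)$ and using the leading-order spike profile~\eqref{eq:stripe02}, the key simplifications $2\alpha u_s v_s\sim 2w$ and $\alpha u_s^2\sim w^2/(\alpha(x_0)[v^0]^2)$ collapse the matrix operator in~\eqref{eq:ustripelambda} to the local Schr\"odinger operator $\mathcal L_0\Phi_0=\Phi_{0\xi\xi}-\Phi_0+2w\Phi_0$, leaving only $\psi(x_0)$ to be determined from the slow equation.

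Next I would solve the outer problem for $\psi_0$. Because $u_s$ and $\varphi$ are localised, the nonlinear coupling terms in~\eqref{eq:vstripelambda} concentrate, in the sense of distributions, as delta functions at $x_0$ with weights $\int w^2\,d\xi=6$ and $\int(w\Phi_0-\kappa\Phi_0)\,d\xi$ respectively; this yields the jump problem~\eqref{eq:psi0} with coefficients $a_0,b_0$ from~\eqref{eq:abkap}. Introducing the transverse-modified Green's function $G(x;x_0)$ of~\eqref{eq:gms0}, which requires $m>0$ for solvability, I solve $\psi_0=A\,G$ and impose the jump condition to get $\psi_0(x_0)$ in the form~\eqref{eq:psi01}. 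Substituting back into~\eqref{eq:stripe05} and bundling the geometric and kinetic constants into $\chi$ and then the single dimensionless combination $\mu$ of~\eqref{eq:mugrad} produces the two-nonlocal-term NLEP~\eqref{eq:stripe07}, with the nonlocal integrals $I_1=\int w\Phi_0\,d\xi$ and $I_2=\int\Phi_0\,d\xi$.

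The final step is the algebraic elimination that reduces two nonlocal terms to one. Integrating~\eqref{eq:stripe07} over $-\infty<\xi<\infty$ and using $\mathcal L_0\Phi_0$ integrating against $1$ (so that $\int\Phi_{0\xi\xi}\,d\xi=0$ by decay and $\int 2w\Phi_0\,d\xi=2I_1$) gives the linear relation~\eqref{eq:I1I2} between $I_2$ and $I_1$. Solving for $I_2$ and inserting it into~\eqref{eq:stripe07} collapses the bracket $I_1-\kappa I_2$ into $I_1$ times a rational multiplier in $\lambda$; collecting the prefactor $\mu/6$ with this multiplier and recognising $\int w^2\,d\xi=6$ yields exactly the coefficient $\theta_h(\lambda;m)$ in~\eqref{eq:nlephom}, establishing~\eqref{eq:stripe08}. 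Finally I substitute $v^0=6\beta\gamma/\alpha(x_0)$ into $\chi=\tau\gamma/(\alpha(x_0)[v^0]^2)$ to get the stated form $\chi=\tau\alpha(x_0)/(36\beta^2\gamma)$, and note $m=k\pi$ from the Neumann conditions in $y$.

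I expect the only genuinely delicate point to be the distributional limit in the slow equation, that is, justifying that the localised reaction terms converge to delta functions with precisely the stated weights as $\eps\to0$. This is where the spatial gradient $\alpha(x)$ must be handled with care: one must evaluate $\alpha$ at $x_0$ to leading order while discarding the $\mathcal O(\eps)$ variation of $\alpha$ across the inner region, which is exactly the feature distinguishing this computation from the homogeneous Gierer--Meinhardt analysis of~\cite{kolok01}. Everything else is bookkeeping of constants; since the proposition is labelled \emph{formal}, the matched-asymptotics limits are taken at leading order without a rigorous error analysis, so the main content is ensuring the algebra producing $\theta_h$ and the parameter $\mu$ is internally consistent.
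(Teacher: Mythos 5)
Your proposal is correct and follows essentially the same route as the paper: the inner reduction to ${\mathcal L}_0$, the distributional collapse of the slow equation to the jump problem solved via the Green's function of~\eqref{eq:gms0}, and the integration of~\eqref{eq:stripe07} to obtain~\eqref{eq:I1I2} and eliminate $I_2$, yielding $\theta_h$ exactly as in~\eqref{eq:nlephom}. Your closing remarks on the formal nature of the distributional limit and the evaluation of $\alpha$ at $x_0$ match the paper's treatment.
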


The NLEP \eqref{eq:stripe08} is not self-adjoint, it has a nonlocal
term and the multiplier~$\theta_h$ depends on $\lambda$.  However, our
goal is to prove the following proposition:

\begin{prop}\label{prop:mband01}
The NLEP in~\eqref{prop:h_stripe} has a unique unstable eigenvalue
when $m$ lies within an instability band $0<m_{\textrm{\em
    low}}<m<m_{\textrm{\em up}}$, with $m_{\textrm{\em low}}={\mathcal
  O}(1)$ and $m_{\textrm{\em up}}={\mathcal
  O}\left(\varepsilon^{-1}\right)$.
\end{prop}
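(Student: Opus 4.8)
The plan is to analyze the dispersion relation $\lambda = \lambda(m)$ implicitly defined by the NLEP~\eqref{eq:stripe08}, following the NLEP framework of~\cite{kolok01} but tracking the explicit $m$-dependence that enters both through the multiplier $\theta_h(\lambda;m)$ and through the Green's function value $G^0 = G(x_0;x_0)$. First I would recall that the operator ${\mathcal L}_0 \Phi_0 = \Phi_{0\xi\xi} - \Phi_0 + 2w\Phi_0$ has a single positive eigenvalue $\nu_0 > 0$ with positive eigenfunction, and a zero eigenvalue with eigenfunction $w'$. The strategy is to convert the eigenvalue condition into a scalar root-finding problem: assuming $\lambda$ is not in the spectrum of ${\mathcal L}_0$, invert to write $\Phi_0 = \theta_h(\lambda;m)\,({\mathcal L}_0 - \lambda - s\varepsilon^2 m^2)^{-1} w^2 \, I_1/\!\int w^2$, then take the inner product with $w$ to obtain the standard scalar relation
\begin{equation*}
\frac{1}{\theta_h(\lambda;m)} = \frac{\int_{-\infty}^\infty w\,({\mathcal L}_0 - \lambda - s\varepsilon^2 m^2)^{-1} w^2 \, d\xi}{\int_{-\infty}^\infty w^2 \, d\xi} \equiv {\mathcal C}(\lambda + s\varepsilon^2 m^2)\,.
\end{equation*}
An unstable eigenvalue is a root of $g(\lambda) \equiv 1/\theta_h(\lambda;m) - {\mathcal C}(\lambda + s\varepsilon^2 m^2) = 0$ with $\mathrm{Re}\,\lambda > 0$.

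Next I would establish the two band edges separately. For the \emph{lower} edge $m_{\textrm{low}} = {\mathcal O}(1)$, the key observation is that at $\lambda = 0$ the function ${\mathcal C}$ has a known value (since $({\mathcal L}_0)^{-1}w^2$ can be computed explicitly using ${\mathcal L}_0 w = w^2$ plus a correction, giving ${\mathcal C}(s\varepsilon^2 m^2) \to$ an $m$-dependent quantity), and I would show that the winding-number / counting argument of~\cite{kolok01} — applied to the image of the contour $\mathrm{Re}\,\lambda = 0$ under $g$ — yields exactly one root in the right half-plane precisely when $m$ exceeds a threshold determined by the condition that $\theta_h(0;m)$ crosses the critical value ${\mathcal C}^{-1}$. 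Because $G^0 = G(x_0;x_0)$ depends monotonically on $m$ through the $\cosh$/$\sinh$ combination solving~\eqref{eq:gms0} (explicitly $G^0 = \cosh(\sqrt{s}\,m\,x_0)\cosh(\sqrt{s}\,m(1-x_0))/[\sqrt{s}\,m\,\sinh(\sqrt{s}\,m)]$), the multiplier $\mu$ in~\eqref{eq:mugrad}, and hence $\theta_h$, varies monotonically with $m$, pinning down a unique ${\mathcal O}(1)$ crossing $m_{\textrm{low}}$. For the \emph{upper} edge, the $s\varepsilon^2 m^2$ term is the decisive one: once $m = {\mathcal O}(\varepsilon^{-1})$ this term becomes ${\mathcal O}(1)$ and shifts the argument of both ${\mathcal C}$ and $\theta_h$, and I would show that the right-half-plane root is pushed back onto the imaginary axis (and then into the stable half-plane) when $s\varepsilon^2 m^2$ reaches an ${\mathcal O}(1)$ threshold, giving $m_{\textrm{up}} = {\mathcal O}(\varepsilon^{-1})$.

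The main obstacle I anticipate is proving \emph{uniqueness} of the unstable eigenvalue throughout the band, rather than merely its existence at the edges. The difficulty is that $\theta_h$ is itself a (Möbius) function of $\lambda$, so $g(\lambda)$ is not obviously monotone along the real axis, and complex roots cannot be excluded by a naive sign argument. The clean way around this is the argument-principle computation from~\cite{kolok01}: I would evaluate the change in the argument of $g(\lambda)$ as $\lambda$ traverses the boundary of the right half-plane (a large semicircle closed along the imaginary axis), using the known large-$|\lambda|$ asymptotics ${\mathcal C}(\lambda) \to 0$ and $\theta_h(\lambda;m) \to \mu$, together with the explicit behaviour of ${\mathcal C}$ on the imaginary axis, to show the net winding number equals exactly one inside the band and zero outside it. Verifying that ${\mathcal C}$ and $\theta_h$ never coincide in a way that creates spurious poles (the denominator $\lambda + 1 + s\varepsilon^2 m^2 - \mu\kappa$ in $\theta_h$ and the spectrum of ${\mathcal L}_0$ must be handled carefully) is the delicate book-keeping that makes the winding count rigorous.
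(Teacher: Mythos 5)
Your proposal is correct and follows essentially the same route as the paper: reduction of the NLEP to the scalar transcendental equation $1/\theta_h=\FF(\lambda)$, an argument-principle/winding-number count over the boundary of the right half-plane (importing the properties of $\FF$ on the imaginary and positive real axes from the Wei--Winter theory) to get existence and uniqueness, monotonicity of $G^0$ in $m$ to pin down the unique ${\mathcal O}(1)$ lower edge, and the $s\varepsilon^2m^2$ shift against the principal eigenvalue $\upnu_0={5/4}$ of ${\mathcal L}_0$ to get $m_{\textrm{up}}={\mathcal O}(\varepsilon^{-1})$. The only cosmetic difference is that you keep $s\varepsilon^2m^2$ inside a single dispersion relation, whereas the paper treats the two band edges in separate asymptotic regimes ($\varepsilon m\ll 1$ with $\theta_{h0}$, and $m={m_0/\varepsilon}$ where the nonlocal term is negligible).
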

The spectrum of the NLEP is shown to be similar to that sketched in
Fig.~\ref{fig:mband} (see also Fig.~\ref{fig:disprelnum}). The upper
edge of the band $m_{\textrm{up}}$ will depend on the aspect ratio
$s$. The expected number of spots that are predicted to form from the
break up of the stripe can be estimated from the maximum growth rate
$m^*$ in Fig.~\ref{fig:mband}.

\begin{figure}[t!]
	\centering
	\includegraphics[width=0.40\textwidth]{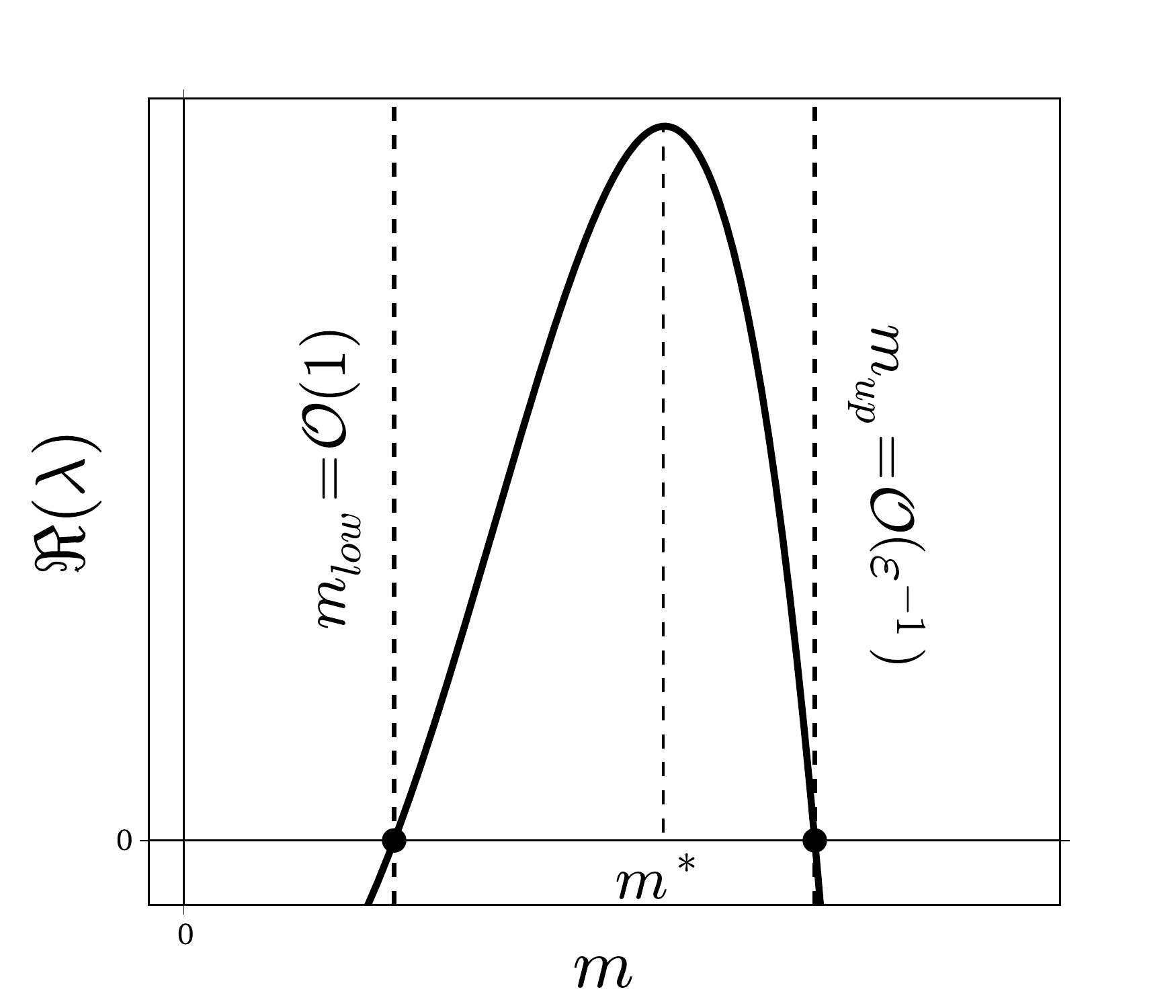}
	\caption{Sketch of a dispersion relation $\Re(\lambda)$ versus
          $m$, showing the unstable band of wavenumbers lying between
          the vertical dashed lines. The expected number of spots is
          closely determined by the most unstable
          mode~$m^*$.}\label{fig:mband}
\end{figure}

To prove Proposition~\ref{prop:mband01}, we first need to determine
the edges of the band of instability. To do so, we derive a few
detailed properties of the Green's function
satisfying~\eqref{eq:gms0}, as provided in the following lemma.

\begin{lem}\label{lem:green} 
Define $G^{0}\equiv G(x_0;x_0)$ where $G(x;x_0)$ satisfies
\eqref{eq:gms0}. Then, 
\bsub \label{eq:gprop}
\begin{gather}
	G^{0}\sim \frac{1}{sm^2} \, \quad \mbox{as} \quad m\to 0^{+}\,; 
\qquad G^{0}\sim \frac{1}{2\sqrt{s}m} \, \quad \mbox{as} \quad m\to \infty\,,
  \label{eq:gprop_a} \\
	\frac{d G^{0}}{dm}<0 \,, \quad \mbox{for} \quad m>0\,; \qquad 
\frac{d G^{0}}{dx_0}>0 \,, \quad \mbox{for} \quad 0<x_0<{1/2} \,, \quad m>0\,.
\end{gather}
\esub
\end{lem}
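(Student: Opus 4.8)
The plan is to solve the Green's function problem \eqref{eq:gms0} explicitly and then read off the claimed asymptotics and monotonicity. Writing $\sigma\equiv\sqrt{s}\,m>0$, the homogeneous solutions on each subinterval are $\cosh(\sigma x)$ and $\sinh(\sigma x)$, so I would impose $G_x(0;x_0)=0$ to get $G\propto\cosh(\sigma x)$ on $0<x<x_0$, and $G_x(1;x_0)=0$ to get $G\propto\cosh(\sigma(1-x))$ on $x_0<x<1$. Enforcing continuity at $x=x_0$ together with the jump $[G_x]_{x_0}=-1$ fixes the two constants via a $2\times 2$ linear system whose determinant is the Wronskian-type quantity $\sigma\sinh(\sigma)$. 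This yields the standard closed form
\begin{equation*}
  G(x;x_0)=\frac{1}{\sigma\sinh(\sigma)}\,
  \cosh\!\bigl(\sigma\,\min(x,x_0)\bigr)\,
  \cosh\!\bigl(\sigma\,(1-\max(x,x_0))\bigr)\,,
\end{equation*}
and hence, evaluating at $x=x_0$ with $\sigma=\sqrt{s}\,m$,
\begin{equation*}
  G^{0}=G(x_0;x_0)
  =\frac{\cosh(\sigma x_0)\,\cosh\!\bigl(\sigma(1-x_0)\bigr)}
         {\sqrt{s}\,m\,\sinh(\sqrt{s}\,m)}\,.
\end{equation*}

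From this explicit expression everything in Lemma~\ref{lem:green} follows by elementary estimates. For \eqref{eq:gprop_a}, as $m\to 0^{+}$ I expand $\cosh\to 1$ in the numerator and $\sinh(\sigma)\sim\sigma$ in the denominator, giving $G^{0}\sim 1/\sigma^2=1/(sm^2)$. As $m\to\infty$ I use $\cosh(\sigma x_0)\cosh(\sigma(1-x_0))\sim\tfrac14 e^{\sigma x_0}e^{\sigma(1-x_0)}=\tfrac14 e^{\sigma}$ and $\sinh(\sigma)\sim\tfrac12 e^{\sigma}$, so the exponentials cancel and $G^{0}\sim \tfrac12/\sigma=1/(2\sqrt{s}\,m)$, exactly as claimed.

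For the monotonicity in $m$, I would rewrite $G^{0}$ using product-to-sum identities as
\begin{equation*}
  G^{0}=\frac{\cosh\!\bigl(\sigma(1-2x_0)\bigr)+\cosh(\sigma)}
             {2\,\sqrt{s}\,m\,\sinh(\sigma)}\,,
\end{equation*}
and then differentiate in $m$ (equivalently in $\sigma$, since $\sigma=\sqrt{s}\,m$ is increasing). Each term of the form $\cosh(a\sigma)/(\sigma\sinh(\sigma))$ with $|a|\le 1$ is decreasing in $\sigma>0$, which can be verified by showing its logarithmic derivative is negative; the cleanest route is to note $\tfrac{d}{d\sigma}\log\!\bigl[\cosh(a\sigma)/(\sigma\sinh\sigma)\bigr]=a\tanh(a\sigma)-\coth(\sigma)-1/\sigma<0$, using $a\tanh(a\sigma)\le\tanh(\sigma)<\coth(\sigma)+1/\sigma$ for $\sigma>0$. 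Since $G^0$ is a positive combination of such terms, $dG^{0}/dm<0$ follows.

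For the dependence on $x_0$, the product-to-sum form makes clear that only the factor $\cosh(\sigma(1-2x_0))$ carries $x_0$-dependence; differentiating gives $dG^{0}/dx_0\propto -2\sigma\sinh(\sigma(1-2x_0))$, which is strictly positive precisely when $1-2x_0>0$, i.e. for $0<x_0<1/2$, as asserted. I expect the only mildly delicate step to be the monotonicity-in-$m$ inequality, since it requires a uniform hyperbolic estimate over all $\sigma>0$ rather than a limiting expansion; the $\sinh$–$\cosh$ bound above handles it, but one should double-check the endpoint behaviour $\sigma\to 0^{+}$ where several terms are individually singular but combine to a finite, manifestly decreasing limit. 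Everything else is a routine consequence of the closed-form Green's function.
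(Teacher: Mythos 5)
Your explicit construction of $G(x;x_0)$ and the resulting closed form for $G^{0}$ coincide with the paper's, as do the small-$m$ and large-$m$ expansions. For the monotonicity in $m$ you take a genuinely different route: the paper differentiates the defining ODE with respect to $m$, applies Lagrange's identity with the self-adjoint operator $Lu\equiv u_{xx}-sm^2u$, and obtains the clean identity ${dG^{0}/dm}=-2sm\int_0^1 G^2\,dx<0$, with no hyperbolic estimates at all. Your logarithmic-derivative bound $a\tanh(a\sigma)\le\tanh\sigma<\coth\sigma+1/\sigma$ applied termwise to the product-to-sum decomposition is also valid, but it requires exactly the uniform inequality you flag as delicate; the paper's identity is immediate and would generalise to settings without an explicit Green's function.

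There is, however, a sign problem in the $x_0$-monotonicity step. From your own product-to-sum form, ${dG^{0}/dx_0}=-\sinh\bigl(\sigma(1-2x_0)\bigr)/\sinh\sigma$ (up to the positive factor you absorbed into the proportionality), and for $0<x_0<1/2$ one has $1-2x_0>0$, hence $\sinh\bigl(\sigma(1-2x_0)\bigr)>0$ and the derivative is \emph{negative}: the numerator $\cosh\sigma+\cosh\bigl(\sigma(1-2x_0)\bigr)$ is minimised at $x_0=1/2$, so $G^{0}$ decreases on $(0,1/2)$. Your closing assertion that the expression is ``strictly positive precisely when $1-2x_0>0$'' contradicts the formula you just derived. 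The paper's own proof contains the identical slip --- it writes ${dG^{0}/dx_0}=\sinh\bigl[\sqrt{s}m(2x_0-1)\bigr]/\sinh(\sqrt{s}m)$ and declares it positive on $0<x_0<1/2$, where that numerator is negative --- so this inequality in the lemma does not follow from either computation and should be rechecked before it is invoked in the argument for the behaviour of $m_{0\textrm{low}}$ following \eqref{eq:ss_low}.
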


\begin{proof}
From \eqref{eq:gms0}, we readily calculate that
\begin{gather*}
	G\left(x;x_0\right)=\frac{1}{\sqrt{s}m\sinh\left(\sqrt{s}m\right)} 
\left\{\begin{array}{lr}
 		\cosh\left(\sqrt{s}mx\right)
\cosh\left(\sqrt{s}m\left(1-x_0\right)\right)\,, & \quad 0\leq x<x_0\\
		\cosh\left(\sqrt{s}mx_0\right)
\cosh\left(\sqrt{s}m\left(1-x\right)\right)\,, & \quad x_0<x\leq1 \end{array} 
\right.\,,
\end{gather*}
which determines $G^0$ as
\begin{gather}\label{eq:gm0star}
 G^0= \frac{\cosh\left(\sqrt{s}mx_0\right)
\cosh\left(\sqrt{s}m\left(1-x_0\right)\right)}
{\sqrt{s}m\sinh\left(\sqrt{s}m\right)}\,.
\end{gather}
Upon expanding the hyperbolic functions for small and large argument
we readily obtain the asymptotics in~\eqref{eq:gprop_a} for $m\to 0$ and
$m\to \infty$. To determine ${dG^{0}/dx_0}$, we
differentiate~\eqref{eq:gm0star} to get
\begin{gather*}
	\frac{dG^{0}}{dx_0} = \frac{\sinh\left[\sqrt{s}m(2x_0-1)\right]}
{\sinh(\sqrt{s}m)} >0 \,, \quad \mbox{for} \quad 0<x_0<{1/2} \,.
\end{gather*}
To prove the final statement in~\eqref{eq:gprop} we proceed
indirectly. We define the self-adjoint operator $L$ by $Lu \equiv
u_{xx} - sm^2 u$, and differentiate \eqref{eq:gms0} with
respect to $m$ to get $L \left({d G /dm}\right)= 2 s m G$. Then, from
Lagrange's identify, we derive
\begin{gather*}
 0=\int_{0}^{1} \left[G L\left(\frac{dG}{dm}\right)-\frac{dG}{dm} LG\right]
\,dx=2sm \int_0^{1} G^2 \, dx + \int_{0}^{1} \frac{dG}{dm} \delta(x-x_0) \, dx 
= 2sm \int_0^{1} G^2 \, dx + \frac{dG^0}{dm} \,.
\end{gather*}
Therefore, ${dG^0/dm} = -2sm \int_{0}^{1} G^2 \, dx <0 $ for $m>0$,
which completes the proof of~\eqref{eq:gprop}.
\end{proof}

To determine the upper edge of the instability band we use
$G^0={\mathcal O}({1/m})$ as $m\to \infty$, to conclude that
$\theta_h={\mathcal O}({1/m})$ in \eqref{eq:nlephom}. Therefore, for
$m\gg 1$, the effect of the nonlocal term in the NLEP is
asymptotically insignificant.  With this observation, we let
$m={m_0/\eps}$, with $m_0={\mathcal O}(1)$ in \eqref{eq:stripe08} to
obtain, in terms of ${\mathcal L}_0\Phi_0\equiv\Phi_{0\xi\xi}-\Phi_0 + 2w\Phi_0$,
that
\begin{gather}\label{eq:ep01}
 \mathcal L_0\Phi_0 - {\mathcal O}(\eps) = \left(\lambda + s m_0^2\right) 
\Phi_0 \,.
\end{gather}
It is well-known (see~\cite{lin}), that the local eigenvalue problem
$\mathcal L_0 \Psi = \upnu \Psi$ with $\Psi\to 0$ as $|\xi|\to \infty$
has a unique positive eigenvalue $\upnu_0={5/4}$ with positive
eigenfunction $\Psi_0=\sech^3\left({\xi/2}\right)$. With this
identification, \eqref{eq:ep01} shows that $\lambda=\upnu_0-sm_0^2 +
{\mathcal O}(\eps)$, so that $\lambda<0$ if $m_0>\sqrt{\upnu_0/s}$ and
$\lambda>0$ if $m_0<\sqrt{\upnu_0/s}$.  Upon setting $\lambda=0$, we
obtain the upper edge of the instability band of the interior stripe
in terms of both the dimensional variables and the original variables
of~\eqref{eq:newpargabe} as
\begin{gather}\label{eq:mup}
   	m_{\textrm{up}} \sim \frac{1}{\eps} \sqrt{\frac{\upnu_0}{s}} \,, 
\quad \upnu_0={5/4} \,; \qquad m_{\textrm{up}} \sim 
\sqrt{\frac{\upnu_0(c+r)}{D_1}} L_y\,.
\end{gather}

Next, to estimate the lower threshold $m_{\textrm{low}}$, we suppose
that $m\ll {\mathcal O}(\eps^{-1})$, so that we neglect
the~$s\varepsilon^2m^2$ terms in \eqref{eq:nlephom} to leading
order. Then, we obtain that
$\theta_h(\lambda;m)=\theta_{h0}(\lambda;m)+{\mathcal O}(\eps^2m^2)$,
where
\begin{gather}\label{eq:0nlephom}
 \theta_{h0}(\lambda;m) \equiv \mu\left( \frac{\lambda+1-2\kappa}
{\lambda+1 -\mu\kappa}\right)\,, 
\end{gather}
and $\mu$ is defined in~\eqref{eq:nlephom}. Now as $m\to 0$, we have
$G^{0}\to \infty$ from \eqref{eq:gprop_a}, so that $\mu\to 2$. Therefore,
$\theta_{h0}(\lambda;m) \to 2>1$ as $m\to 0$ for all $\lambda$. We
conclude from Lemma~A and Theorem~1.3 of~\cite{wei02} that a 1D spike
is stable on an ${\mathcal O}(1)$ time-scale for any choice of the
parameters $\beta$, $\tau$, and $\gamma$. From the analysis in~\S3
of~\cite{ww2003} based on the rigorous study of the NLEP
in~\cite{wei02}, we obtain that an instability occurs at mode number
$m$ whenever
\begin{gather}\label{eq:crithomm0}
	\theta_{h0}\left(0;m\right)=\mu\left(\frac{1-2\kappa}{1-\mu\kappa}
\right)<1\,.
\end{gather}
This sufficient condition for instability is examined further in
Proposition~\ref{homo:nlep:rig} below. To prove
that~\eqref{eq:crithomm0} has a unique root, we
differentiate~\eqref{eq:crithomm0} with respect to $m$ to obtain
\begin{gather}\label{eq:d_crithomm0}
 \frac{d \theta_{h0}\left(0;m\right)}{dm} = 
\left[\frac{(1-2\kappa)}{(1-\mu\kap)} + 
\frac{\mu \kappa (1-2\kappa)}{(1-\mu\kappa)^2}\right]\frac{d\mu}{dm} \,.
\end{gather}
From the definition of $\mu$ in~\eqref{eq:nlephom}, and from the
properties of $G^{0}$ in Lemma~\ref{lem:green}, we have that $\mu\to
2$ as $m\to 0$, with $\mu<2$ and ${d\mu/dm}<0$ for $m>0$. Moreover,
since $\kappa={\left(1-{\beta/\tau}\right)/2}<{1/2}$, we obtain that
$(1-\mu\kappa)>0$ in~\eqref{eq:d_crithomm0}. Therefore, we conclude
from~\eqref{eq:d_crithomm0} that ${d\theta_{h0}\left(0;m\right)/dm}<0$
with $\theta_{h0}\left(0;m\right)\to 2$ as $m\to 0$ and
$\theta_{h0}\left(0;m\right)={\mathcal O}(1/m)$ for $m\gg 1$. This
proves that there is a unique value $m_{0\textrm{low}}$ of $m$ for
which $\theta_{h0}\left(0;m\right)=1$. By using \eqref{eq:crithomm0}
and \eqref{eq:nlephom} for $\theta_{h0}$ and $\mu$, respectively, we
get that $\theta_{h0}\left(0;m\right)=1$ when $m=m_{0\textrm{low}}$,
where $m_{0\textrm{low}}$ is the unique positive root of
\begin{gather}\label{eq:mlowcritcon}
    G^{0} = \frac{6\beta D_0 \gamma}{\alpha(x_0)} \,. 
\end{gather}

With the edges of the instability band now determined, we prove that
the NLEP~\eqref{eq:stripe08} with multiplier $\theta_{b0}(\lam;m)$,
and where $\eps m$ is neglected on the right-hand side
of~\eqref{eq:stripe08}, has a unique eigenvalue $\lambda_0$ in
$\Re(\lambda_0)>0$ located on the positive real axis when $m$
satisfies $m_{\textrm{low}}\leq m \ll {\mathcal O}(\eps^{-1})$, and
that $\Re(\lambda)<0$ when $0<m<m_{\textrm{low}}$.  To analyze the
NLEP \eqref{eq:stripe08} when $\eps m \ll 1$ for eigenfunctions for
which $\int_{-\infty}^{\infty} w\Phi_0\, d\xi\neq 0$, we recast it
into a more convenient form. Upon neglecting the $\eps m$ terms
in~\eqref{eq:stripe08}, we write
\begin{gather*}
 \Phi_0 = \theta_{h0} \left( \frac{ \int_{-\infty}^{\infty} w \Phi_0 \,d\xi}
{ \int_{-\infty}^{\infty} w^2 \,d\xi} \right) 
 \left({\mathcal L}_0 - \lambda \right)^{-1}w^2 \,.
\end{gather*}
We then multiply both sides of this equation by $w$ and integrate over
the real line. In this way, we obtain that the eigenvalues of
\eqref{eq:stripe08} when $\eps m\ll 1$ are the roots of the
transcendental equation $g(\lambda) = 0$, where
\bsub \label{eq:ntrans}
\begin{gather} 
	g(\lambda) \equiv \CC(\lambda) - \FF(\lambda)\,, \qquad 
\CC(\lambda) \equiv \frac{1}{\theta_{h0}(\lambda;m)}\,, 
\qquad \FF(\lambda) \equiv \frac{\int_{-\infty}^{\infty} w 
\left({\mathcal L}_0 - \lambda\right)^{-1}w^2 \,d\xi}
{\int_{-\infty}^{\infty} w^2 \, d\xi}\,,\\ 
	\CC(\lambda) = \frac{ a_1 + b_1 \lambda}{a_2+b_2\lambda} \,, 
\qquad a_1\equiv 1-\mu \kappa \,, \quad b_1=1\,, \quad a_2=\mu(1-2\kappa) \,, 
\quad b_2=\mu \,. \label{eq:ntrans_1}
\end{gather}
\esub 
Our analysis of the roots of~\eqref{eq:ntrans} leads to the
following main result:

\begin{prop} \label{homo:nlep:rig} 
Let $\eps m\ll 1$, and let $N$ denote the number of eigenvalues of the
NLEP of~\eqref{eq:stripe08} in $\Re(\lambda)>0$. Then, for $m$ on the
range $m\ll {\mathcal O}(\eps^{-1})$ as $\eps\to 0^{+}$, we have
\begin{itemize}
	\item (I): $\quad N=1$ if $m>m_{0\textrm{\em low}}$. The
          unique real unstable eigenvalue $\lambda_0$ satisfies
          $0<\lambda_0<\upnu_0$. Here $m_{0\textrm{\em low}}$ is
          the unique root of~\eqref{eq:mlowcritcon}.
	\item (II): $\quad N=0$ if $0<m<m_{0\textrm{\em low}}$. 
\end{itemize}
\end{prop}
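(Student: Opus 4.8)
The plan is to study the roots of the transcendental equation $g(\lambda)=\CC(\lambda)-\FF(\lambda)=0$ derived in~\eqref{eq:ntrans} by a winding-number / argument-principle argument applied to the function $g$ in the right-half plane $\Re(\lambda)>0$, following the standard NLEP methodology of~\cite{wei02,ww2003}. First I would record the analytic structure of the two pieces. The multiplier reciprocal $\CC(\lambda)=(a_1+b_1\lambda)/(a_2+b_2\lambda)$ is a Möbius (rational) function with a single simple pole at $\lambda=-a_2/b_2=-(1-2\kappa)$, which lies on the negative real axis since $\kappa<1/2$; hence $\CC$ is analytic throughout $\Re(\lambda)\ge 0$. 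The function $\FF(\lambda)=\big[\int w(\mathcal L_0-\lambda)^{-1}w^2\,d\xi\big]/\int w^2\,d\xi$ inherits its singularities from the resolvent $(\mathcal L_0-\lambda)^{-1}$, so it has simple poles exactly at the eigenvalues of the local operator $\mathcal L_0$. The only eigenvalue of $\mathcal L_0$ in $\Re(\lambda)>0$ is the principal one $\upnu_0=5/4$ with eigenfunction $\Psi_0=\sech^3(\xi/2)>0$; all other eigenvalues of $\mathcal L_0$ are $\le 0$.

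Next I would establish the real-axis behaviour of $\FF$ on $0\le\lambda<\upnu_0$, which is the heart of the counting argument. Using the spectral representation, $\FF'(\lambda)=\big[\int w(\mathcal L_0-\lambda)^{-2}w^2\,d\xi\big]/\int w^2\,d\xi$, and because $w^2$ and $w$ have nonzero projection onto the positive ground state $\Psi_0$, one shows $\FF'(\lambda)>0$ so that $\FF$ is strictly increasing on $(-\infty,\upnu_0)$; moreover $\FF(\lambda)\to+\infty$ as $\lambda\to\upnu_0^-$ and $\FF$ takes a finite value $\FF(0)=\big[\int w(\mathcal L_0)^{-1}w^2\,d\xi\big]/6$ at $\lambda=0$. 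A short computation identifies $(\mathcal L_0)^{-1}w^2$ (using the known identity $\mathcal L_0 w^2/? $, i.e. that $w^2$ lies in the range of $\mathcal L_0$ up to explicit quadrature) and yields $\FF(0)$ explicitly; in the Gierer--Meinhardt normalisation this gives $\FF(0)=1$. The companion function $\CC(\lambda)$ is monotone on the real axis and satisfies $\CC(0)=a_1/a_2=(1-\mu\kappa)/[\mu(1-2\kappa)]=1/\theta_{h0}(0;m)$. Thus on $0\le\lambda<\upnu_0$ the equation $\CC=\FF$ reduces to comparing the monotone decreasing/increasing curves, and a real root $\lambda_0\in(0,\upnu_0)$ exists precisely when $\CC(0)<\FF(0)=1$, i.e. when $\theta_{h0}(0;m)>1$. \emph{This sign is reversed from the claim}, so the careful bookkeeping here — matching my normalisation of $\FF(0)$ and $\CC(0)$ against the instability criterion~\eqref{eq:crithomm0} $\theta_{h0}(0;m)<1$ — is exactly where I must be most careful, and I expect it to be the main obstacle. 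Concretely, I anticipate that the correct statement is that a real root in $(0,\upnu_0)$ exists iff $\theta_{h0}(0;m)<1$, which by the monotonicity of $\theta_{h0}$ established after~\eqref{eq:d_crithomm0} is equivalent to $m>m_{0\textrm{low}}$; reconciling the direction of this inequality with the explicit value of $\FF(0)$ is the delicate step.

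Having pinned down the real axis, I would rule out any \emph{other} roots in $\Re(\lambda)>0$, in particular complex ones, by the argument principle. Consider the contour consisting of the imaginary axis (indented around $\lambda=\upnu_0$ if needed, though $\upnu_0$ is interior so this is handled by a large semicircle) closed by a large semicircle in the right-half plane. The number of zeros of $g$ inside equals $\tfrac{1}{2\pi}$ times the change in $\arg g$ around the contour, plus a correction for the pole of $\FF$ at $\upnu_0$ that lies inside. On the large semicircle $\FF\to 0$ and $\CC\to b_1/b_2=1/\mu$, so $g$ tends to a nonzero constant and contributes no winding. On the imaginary axis one computes $\Im\FF(i\omega)$ and shows, via the explicit resolvent spectral sum, that it has a definite sign for $\omega>0$ (and the opposite for $\omega<0$), so the image curve of $g$ along the imaginary axis crosses the real axis a controlled number of times; combined with $\CC$ real-rational this yields a net winding of $1$ when $\theta_{h0}(0;m)<1$ and $0$ otherwise. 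Accounting for the single pole of $\FF$ at $\upnu_0$ inside the contour via $N_{\textrm{zeros}}-N_{\textrm{poles}}=\tfrac{1}{2\pi}\Delta\arg g$ then gives $N=1$ in case~(I) and $N=0$ in case~(II), establishing both items. The two main risks in this final step are the correct indentation/treatment of the resolvent pole at $\upnu_0$ and verifying that $g$ is nonvanishing along the entire imaginary axis (no purely imaginary eigenvalues), which I would check by examining $\Re g(i\omega)$ and $\Im g(i\omega)$ separately and showing they cannot vanish simultaneously for $\omega\neq 0$, with the $\omega=0$ case handled by the real-axis analysis above.
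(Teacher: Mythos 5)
Your overall strategy---recasting the NLEP as the scalar equation $g(\lambda)=\CC(\lambda)-\FF(\lambda)=0$ from~\eqref{eq:ntrans}, importing the properties of $\FF$ from \cite{ww2003}, and counting zeros in $\Re(\lambda)>0$ by the argument principle over a large right-half-plane semicircle closed along the imaginary axis---is exactly the route the paper takes. But two pieces of bookkeeping in your write-up are wrong as stated, and they are precisely the pieces that decide between $N=0$, $N=1$ and $N=2$. The first is the real-axis criterion: you claim a real root in $(0,\upnu_0)$ exists precisely when $\CC(0)<\FF(0)=1$, then flag the contradiction with the claimed result. The resolution is the elementary one you anticipate: $\FF$ increases from $\FF(0)=1$ to $+\infty$ as $\lambda\to\upnu_0^{-}$ while $\CC$ is bounded and in fact decreasing there (since $b_1a_2-b_2a_1=\mu\kappa(\mu-2)<0$), so the graphs cross on $(0,\upnu_0)$ iff $\CC$ starts \emph{above} $\FF$, i.e.\ iff $\CC(0)>1$, equivalently $\theta_{h0}(0;m)<1$, equivalently $m>m_{0\textrm{low}}$ by~\eqref{eq:crithomm0} and~\eqref{eq:mlowcritcon}. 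Your ``precisely when'' inequality is simply reversed; once corrected this matches the paper's use of~\eqref{Fglob}.

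The second error is not flagged and is more consequential: you assert a net winding of $1$ in case (I) and $0$ in case (II), and then apply $N_{\mathrm{zeros}}-N_{\mathrm{poles}}=\tfrac{1}{2\pi}\Delta\arg g$ with the single pole of $\FF$ at $\upnu_0$ inside; that arithmetic gives $N=2$ and $N=1$, not $N=1$ and $N=0$. The correct values of $\tfrac{1}{2\pi}\Delta\arg g$ are $0$ and $-1$ respectively, i.e.\ in the paper's form~\eqref{key:wind} the change of argument along the positive imaginary semi-axis is either $0$ or $-\pi$. Moreover, the mechanism that pins it to exactly these two values needs both sign facts: $\FF_I(\lambda_I)>0$ from \cite{ww2003} \emph{and} $\CC_I(\lambda_I)<0$ (again because $b_1a_2-b_2a_1=\mu\kappa(\mu-2)<0$, using $\mu<2$), so that $g_I=\CC_I-\FF_I<0$ for all $\lambda_I>0$ and the image of the semi-axis cannot recross the real axis; the argument change is then determined solely by the sign of $g_R(0)=\CC_R(0)-1$. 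Your sketch controls only $\Im\FF$, which by itself does not exclude extra crossings, so the count is not actually established. With these two corrections your proof coincides with the paper's.
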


\begin{proof}
To determine the roots of~\eqref{eq:ntrans} we use a winding number
approach. To calculate the number $N$ of zeros of $g(\lambda)$ in the
right-half plane, we compute the winding of $g(\lambda)$ over the
contour $\Gamma$ traversed in the counterclockwise direction composed
of the following segments in the complex $\lambda$-plane: $\Gamma_I^+$
($0\leq\Im(\lambda)\leq iR$, $\Re(\lambda)=0$), $\Gamma_I^-$
($-iR\leq\Im(\lambda)\leq 0$, $\Re(\lambda)=0$), and $\Gamma_R$
defined by $|\lambda| = R >0$, $-\pi/2 \leq \arg(\lambda) \leq \pi/2$.

The pole of $\CC(\lambda)$ is at
$\lambda=-{a_2/b_2}=-(1-2\kappa)$. Since $\kappa<{1/2}$, then
$\CC(\lambda)$ is analytic in $\Re(\lambda)\geq 0$. In contrast, the
function $\FF(\lambda)$ has a simple pole at the unique positive
eigenvalue $\upnu_0={5/4}$ of ${\mathcal L}_0$. Thus, $g(\lambda)$ in
\eqref{eq:ntrans} is analytic in $\Re(\lambda)\ge 0$ except at the
simple pole $\lambda = {5/4}$. Therefore, by the argument principle we
obtain that $N-1 = (2\pi)^{-1}\lim_{R\to\infty}\left[\arg g
  \right]_\Gamma$, where $\left[\arg g \right]_\Gamma$ denotes the
change in the argument of~$g$ over $\Gamma$.  Furthermore, since
$\FF(\lambda)={\mathcal O}({1/\lambda})$ and $\CC(\lambda)\to
{b_1/b_2}$ on $\Gamma_R$ as $R\to \infty$, it follows that
$\lim_{R\to\infty} \left[\arg g\right]_{\Gamma_R} = 0$.  For the
contour $\Gamma_I^-$, we use $g(\overline{\lambda}) =
\overline{g(\lambda)}$ so that $\left[\arg g \right]_{\Gamma_I^-} =
\left[\arg g\right]_{\Gamma_I^+}$. In this way, we obtain that the
number $N$ of unstable eigenvalues of the NLEP~\eqref{eq:ntrans} is
\begin{gather}\label{key:wind}
    N = 1 + \frac{1}{\pi} \left[\arg g \right]_{\Gamma_I^{+}} \,.
\end{gather}
Here $\left[\arg g \right]_{\Gamma_I^{+}}$ denotes the change in the
argument of $g$ as the imaginary axis $\lambda=i\lambda_I$ is
traversed from $\lambda_I=+\infty$ to $\lambda_I=0$.

To calculate $\left[\arg g \right]_{\Gamma_I^{+}}$, we decompose
$g(i\lambda_I)$ in~\eqref{eq:ntrans} into real and imaginary parts as
\begin{gather}\label{eq:g_dec}
	g(i\lambda_I) = g_{R}(\lambda_I) + ig_{I}(\lambda_I) =
        \CC_R(\lambda_I) - \FF_R(\lambda_I) + i \left[\CC_I(\lambda_I)
          -\FF_I(\lambda_I) \right]\,,
\end{gather}
where $\CC_R=\Re\left[\CC\right]$, $\CC_I=\Im\left[\CC\right]$,
$\FF_R=\Re\left[\FF\right]$, and
$\FF_I=\Im\left[\FF\right]$. From~\eqref{eq:ntrans}, we obtain that
\bsub \label{eq:cf_dec}
\begin{gather}
	\CC_R(\lambda_I) \equiv \frac{a_1 a_2+b_1 b_2 \lambda_I^2}
{a_2^2+b_2^2\lambda_I^2} \,, \qquad \CC_I(\lambda_I) \equiv
  \frac{\left(b_1 a_2 - b_2 a_1\right)\lambda_I}{a_2^2+b_2^2\lambda_I^2} 
\,, \label{eq:c_dec} \\ 
  	\FF_R(\lambda_I) = \frac{\int_{-\infty}^{\infty} w 
{\mathcal L}_0 \left[{\mathcal L}_0^2+\lambda_I^2\right]^{-1} w^2 \, d\xi}
{\int_{-\infty}^{\infty} w^2 \, d\xi} \,, \qquad \FF_I(\lambda_I) = 
\lambda_I \frac{\int_{-\infty}^{\infty} w 
\left[{\mathcal L}_0^2+\lambda_I^2\right]^{-1} w^2 \, d\xi}
{\int_{-\infty}^{\infty} w^2 \, d\xi}  \,. \label{eq:f_dec}
\end{gather}
\esub 

Several key properties of $\CC_R$ and $\CC_I$ are needed below. We
first observe that $\CC_I<0$ for any $\lambda_I>0$. To see this, we
use \eqref{eq:ntrans_1} to calculate $b_1a_2-b_2a_1=\mu k
\left[-2+\mu\right]<0$ since $\mu<2$. Secondly, we observe that
$\CC_R\to {b_1/b_2}>0$ as $\lambda_I\to\infty$. Finally, we note that
$\CC_R(0)>1$ (i.e. $\theta_{h0}(0;m)<1$) when $m>m_{0\textrm{low}}$,
and $\CC_R(0)<1$ (i.e. $\theta_{h0}(0;m)>1$) when $0<m< m_{0\textrm
  {low}}$.

Next, we require the following properties of $\FF_{R}(\lambda_I)$ and
$\FF_I(\lambda_I)$, as established rigorously in Propositions~3.1
and~3.2 of \cite{ww2003}: \bsub \label{eq:fprop}
\begin{gather}
	\FF_{R}(0)=1 \,; \qquad \FF_{R}^{\prime}(\lambda_I)<0 \,, 
\qquad \lambda_I>0\,; \qquad \FF_{R}(\lambda_I)=
{\mathcal O}\left(\lambda_I^{-2}\right) \,, \qquad \lambda_I\to+\infty \,, \\
	\FF_{I}(0)=0 \,; \qquad \FF_{I}(\lambda_I)>0 \,, \qquad \lambda_I>0\,; 
\qquad \FF_I(\lambda_I)={\mathcal O}\left(\lambda_I^{-1}\right) \,, \qquad 
\lambda_I\to+\infty \,.
\end{gather}
\esub 
Since $\FF_I>0$ and $\CC_I<0$, it follows that
$g_I(\lambda_I)<0$ for all $\lambda_I>0$. Moreover, $g_{I}(0)=0$,
$g_{I}(\lambda_I)\to 0$ and $g_{R}(\lambda_I)\to {b_1/b_2}>0$ as
$\lambda_I\to \infty$. This proves that $\left[\arg g
  \right]_{\Gamma_I^{+}}=0$ or $\left[\arg g
  \right]_{\Gamma_I^{+}}=-\pi$, depending on the sign of
$g_{R}(0)$. For the range $0<m<m_{0\textrm{low}}$, then
$g_R(0)=\CC_R(0)-\FF_R(0)<0$, so that $\left[\arg g
  \right]_{\Gamma_I^{+}}=-\pi$ and $N=0$
from~\eqref{key:wind}. Alternatively, if $m>m_{0\textrm{low}}$, then
$g_R(0)=\CC_R(0)-\FF_R(0)>0$, so that $\left[\arg
  g\right]_{\Gamma_I^{+}}=0$ and $N=1$ from~\eqref{key:wind}.

The final step in the proof of Proposition~\ref{homo:nlep:rig} is to
locate the unique positive real eigenvalue when
$m>m_{0\textrm{low}}$. On the positive $\lambda=\lambda_R>0$, some
global properties of $\FF(\lambda_R)$, which were rigorously
established in Proposition~3.5 of~\cite{ww2003}, are as follows:
\begin{equation} \label{Fglob}
	\FF(\lambda_R)>0 \,, \qquad \FF^\prime(\lambda_R) > 0 \,,
        \qquad \mbox{for} \quad 0 < \lambda_R < \upnu_0 = {5/4} \,;
        \qquad \FF(\lambda_R) < 0 \,, \quad \mbox{for} \quad \lambda_R
        > \upnu_0 \,,
\end{equation}
with $\FF(0)=1$ and $\FF(\lambda_R) \to +\infty$ as $\lambda_R\to
\upnu_0^{+}$. Since $\CC(0)>1$ when $m>m_{0\textrm{low}}$, it follows
that the unique unstable eigenvalue for this range of $m$ satisfies
$0<\lambda<\upnu_0$. This completes the proof of
Proposition~\ref{homo:nlep:rig}.
\end{proof}

With the uniqueness of the unstable eigenvalue established in
Proposition~\ref{homo:nlep:rig}, the proof of Proposition~\ref{prop:mband01} 
is complete.

\begin{figure}[t!]
	\begin{center}
		\centering
		\subfigure[]{\includegraphics[width=0.33\textwidth]{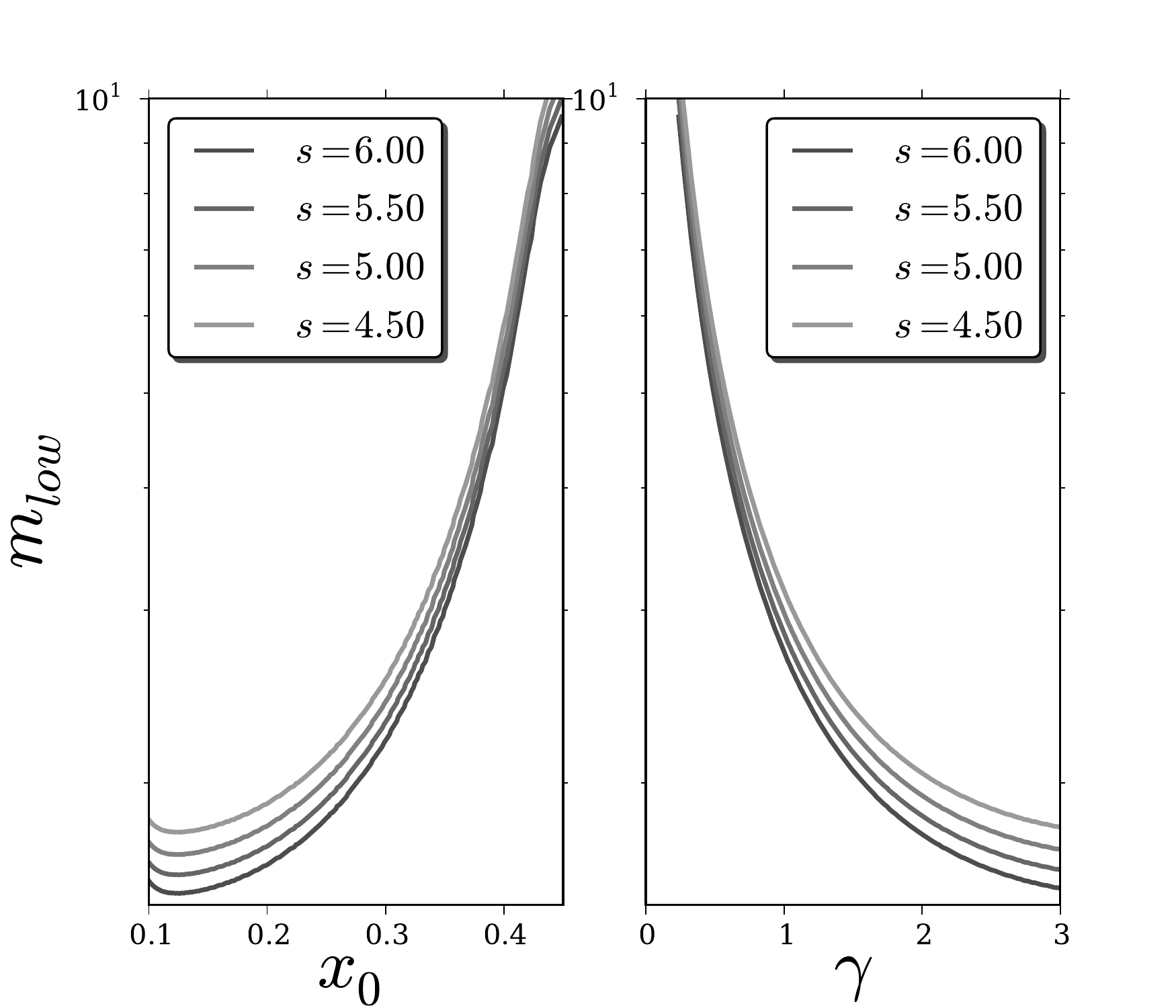}\label{sf:thetaha}}
		\centering
		\subfigure[]{\includegraphics[width=0.33\textwidth]{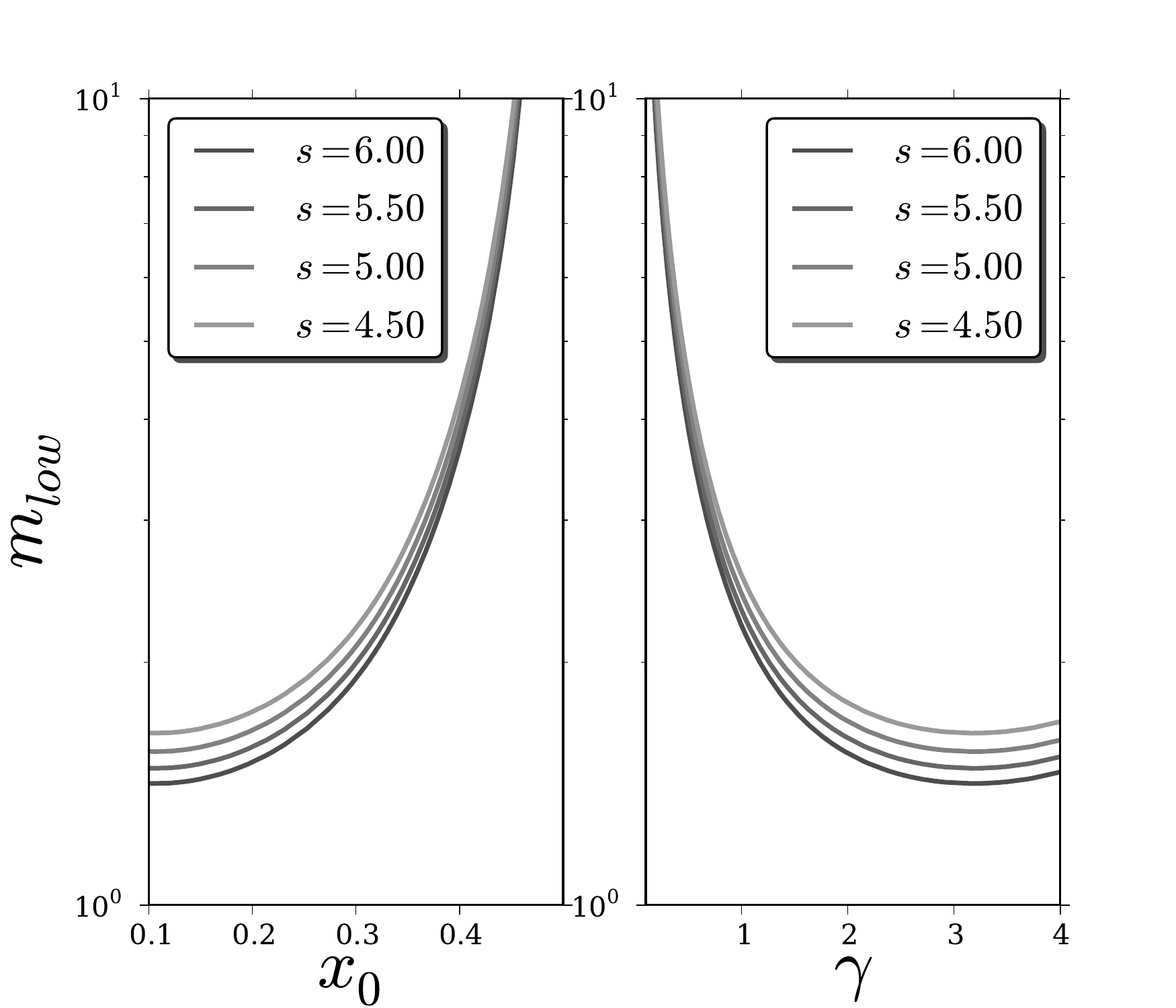}\label{sf:thetahb}}
		\centering
		\subfigure[]{\includegraphics[width=0.33\textwidth]{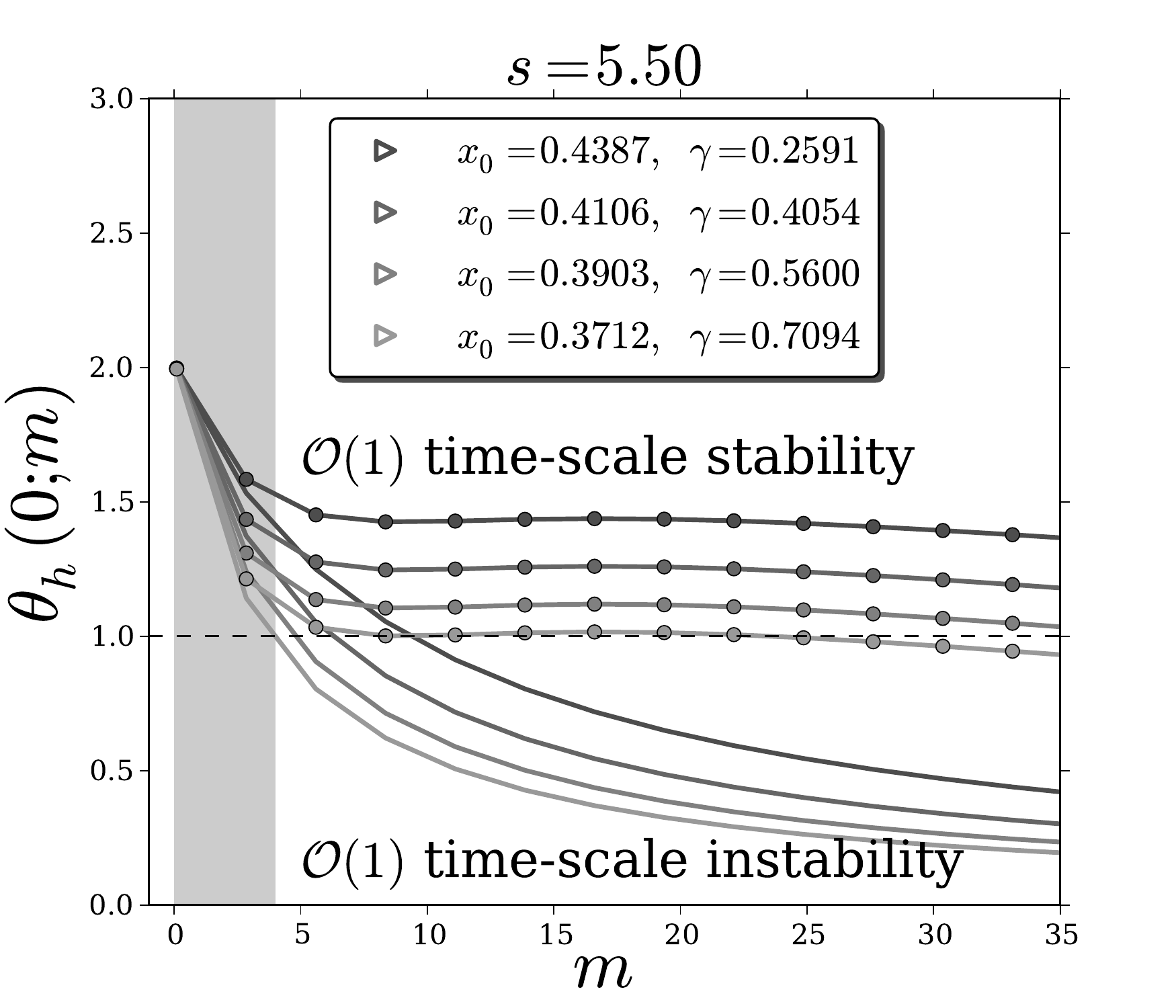}\label{sf:thetahc}}
		\centering
		\subfigure[]{\includegraphics[width=0.33\textwidth]{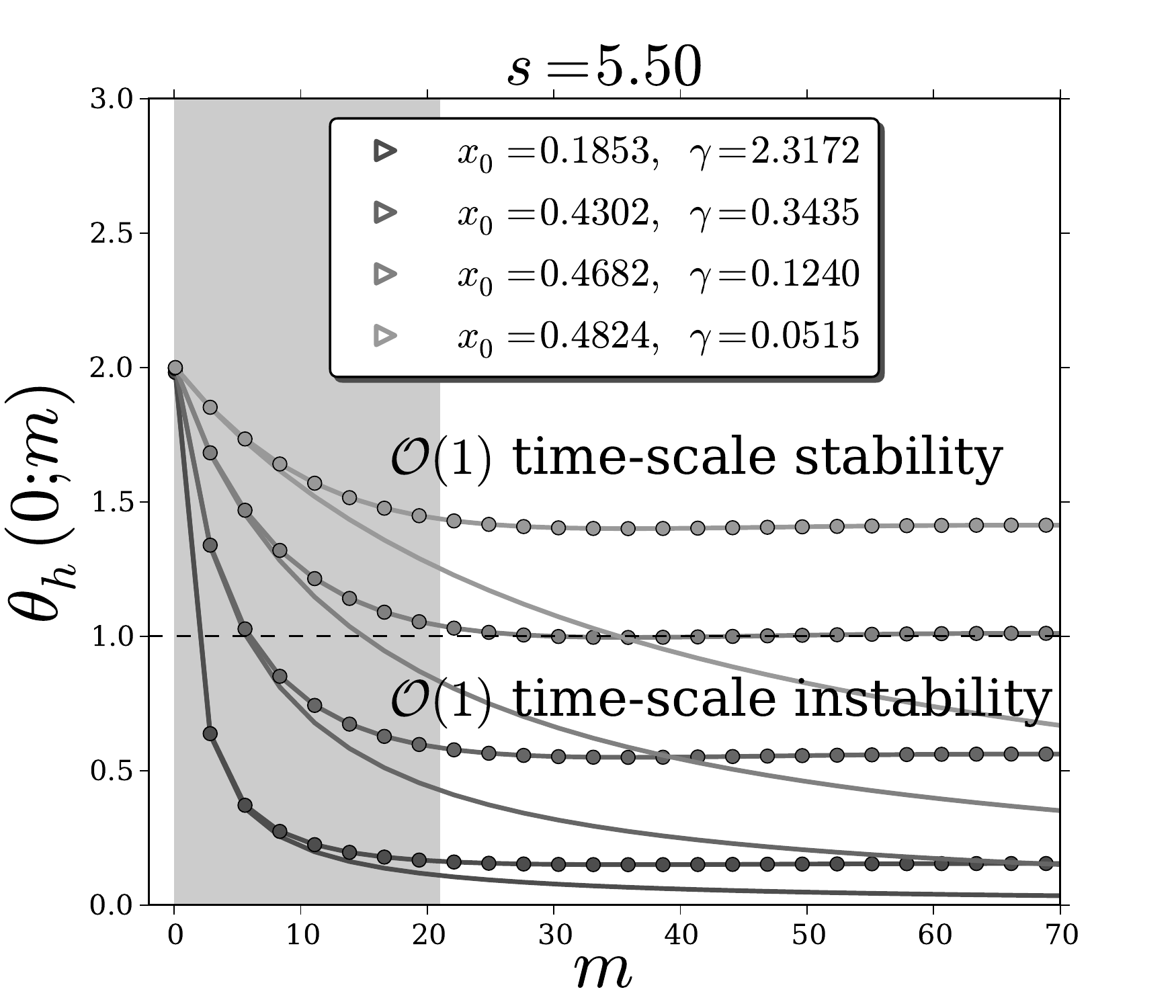}\label{sf:thetahd}}
	\end{center}
	\caption{Lower threshold $m_{\textrm{low}}$ versus $x_0$ and
          $\gamma$ for a steady-state stripe, as obtained by setting
          $\theta_h(0;m)=1$ in~\eqref{eq:nlephom}. Plots are shown for
          several values of the aspect ratio parameter $s$. Under
          steady-state conditions, for a given $\gamma$, $x_0$ is
          determined from \eqref{eq:equil}. In (a) the re-scaled parameter set one
          is given in Table~\ref{tab:tab}, while in (b) the re-scaled parameter
          set two is given in Table~\ref{tab:tab}. In the bottom row we
          plot $\theta_{h}(0;m)$ (solid curves),
          from~\eqref{eq:nlephom}, and $\theta_{h0}(0;m)$ (dotted
          curves), from \eqref{eq:0nlephom}, when $s=5.5$ and for
          several pairs $(\gamma,x_0)$ as obtained from the
          steady-state condition \eqref{eq:equil}. All curves do 
          eventually cross below the threshold $\theta_h(0;m)=1$, although 
          for some curves this occurs outside the range of $m$ shown in the 
          figure. In (c) the data set is from the re-scaled parameter set one in Table~\ref{tab:tab},
          while for (d) the data set is from the re-scaled parameter set two in Table~\ref{tab:tab}.
}
	\label{fig:mlow}
\end{figure}

We now illustrate our stability results for a steady-state stripe
where $\gamma$ (and hence $k_2$ from \eqref{eq:gabe}) and~$s$ are
the primary bifurcation parameters. From Proposition~4.3
of~\cite{bcwg} the steady-state stripe location~$x_0$ for a
given~$\gamma>0$ is given by the unique root of
\begin{gather}\label{eq:equil}
	\frac{1}{6\beta\gamma D_0}\left(\frac{1}{2}-x_0\right)+
\frac{\alpha^{\prime}(x_0)}{\left[\alpha(x_0)\right]^2}=0\,.
\end{gather}
Since $\alpha^{\prime}(x_0)<0$, it follows that $x_0$ satisfies
$0<x_0<{1/2}$.  Moreover, upon setting $\alpha(x_0)=e^{-\nu x_0}$
in~\eqref{eq:equil}, we obtain that $x_0$ is a root of
\begin{gather}\label{eq:equil_2}
	6\beta D_0 \nu \gamma = {\cal H}(x_0) \,, \qquad 
{\cal H}(x_0) \equiv \left( \frac{1}{2}-x_0\right) e^{-\nu x_0} \,.
\end{gather}
Since ${\cal H}^{\prime}(x_0)<0$ on $0<x_0<{1/2}$, and $\gamma$ is
inversely proportional to the auxin level $k_2$ at $x=0$,
from~\eqref{eq:gabe}, it follows that the distance $x_0$ of the
steady-state stripe from the left boundary increases as~$k_2$
increases. This was shown numerically in Fig.~4.3 of \cite{bcwg}.

Then, upon combining~\eqref{eq:equil_2} with~\eqref{eq:mlowcritcon},
we obtain that the lower edge $m_{0\textrm{low}}$ of the instability
band for a steady-state stripe satisfies
\begin{gather} \label{eq:ss_low}
	G^{0} = \frac{1}{\nu} \left(\frac{1}{2}-x_0\right) \,.
\end{gather}
Since ${dG^{0}/dx_0}>0$ on $0<x_0<{1/2}$ from Lemma~\ref{lem:green},
while the right-hand side of~\eqref{eq:ss_low} is decreasing on
$0<x_0<{1/2}$.  it follows from the fact that ${dG^0/dm}<0$ (see
Lemma~\ref{lem:green}), that $m_{0\textrm{low}}$ increases as $x_0$
increases. This leads to our key qualitative result that
$m_{0\textrm{low}}$ increases as $\gamma$ decreases, or equivalently
as~$k_2$ increases. Thus, since the upper threshold
$m_{0\textrm{up}}$ is independent of $k_2$, it follows that the
width of the instability band in $m$ decreases when $k_2$
increases.

A second qualitative feature associated with~\eqref{eq:mlowcritcon} is
with regards to the dependence of $m_{0\textrm{low}}$ on the aspect
ratio parameter $s$. Since $G^{0}$ in~\eqref{eq:gm0star} depends on
$\sqrt{s}m$, it follows from~\eqref{eq:mlowcritcon} that the lower
threshold $m_{0\textrm{low}}$ is proportional to ${1/\sqrt{s}}$, where
$\sqrt{s}={L_x/L_y}$.  Therefore, $m_{0\textrm{low}}$ is smaller for
rectangular domains that are thinner in the transverse direction. In
view of~\eqref{eq:mup}, $m_{0\textrm{up}}$ is also smaller for thin
rectangular domains.

In Fig.~\ref{sf:thetaha}--\ref{sf:thetahb} we plot $m_{\textrm{low}}$
versus $x_0$ and $\gamma$ for a steady-state stripe, as obtained by
numerically determining the root of $\theta_h(0;m)=1$ from
\eqref{eq:nlephom}. These plots are shown for several values of the
aspect ratio parameter~$s$. We remark that as $\gamma$ is varied,
$x_0$ is calculated from the steady-state condition
\eqref{eq:equil}. The results are shown for the parameter set one (left figure) and two (right
figure) in Table~\ref{tab:tab}. In Fig.~\ref{sf:thetahc} and Fig.~\ref{sf:thetahd} we plot
$\theta_{h}(0;m)$ from \eqref{eq:nlephom} (solid curves) and
$\theta_{h0}(0;m)$ from \eqref{eq:0nlephom} (dotted curves) for the
parameters set one and two given in Table~\ref{tab:tab},
respectively. The results are shown for a fixed aspect ratio parameter
$s=5.5$ for various pairs of $(x_0,\gamma)$, related by the
steady-state condition~\eqref{eq:equil}. We observe that there is
better agreement for small modes in Fig.~\ref{sf:thetahd} rather than
in Fig.~\ref{sf:thetahc}. This results from the fact that parameter
set two in Table~\ref{tab:tab} has a smaller value of $\varepsilon$, and
is therefore closer to the asymptotic limit $\varepsilon\ll 1$
required by our stability analysis.

Finally, since the wavenumber $k$ of the unstable mode $m$ is given by
$k=m/\pi$, the expected number of spots is given by the number of
maxima of $\cos\left(k_{\textrm{max}} y\right)$ when
\begin{flalign*}
	\frac{m_{\textrm{low}}}{\pi}<k_{\textrm{max}}<
\sqrt{\frac{\upnu_0(c+r)}{\pi^2D_1}}L_y \,,
\end{flalign*}
where $k_{\textrm{max}}$ corresponds to the integer nearest the
location of the maximum of the dispersion relation.

To determine the dispersion relation and the maximum growth rate, we
must numerically compute the spectrum of the
NLEP~\eqref{prop:h_stripe} within the instability band. Our
computations are done for the parameter set three given in
Table~\ref{tab:tab}, which is a further modification of the set one. To do so, we use a standard three point uniform
finite differences method to discretize \eqref{eq:stripe08} to obtain
a nonlinear eigenvalue problem, and then apply a backwards iterative
process on $m$. In order to perform this computation, $m$ is treated
as a continuous variable. The results are shown in
Fig.~\ref{fig:disprel} in the plot of $\Re(\lambda)$ versus
$k={m/\pi}$.  In Fig.~\ref{sf:disprela} we plot the dispersion
relation for a fixed $x_0$ but for several different aspect ratio
parameters. From this figure we observe that the most unstable mode
increases as $s$ decreases, or equivalently as the transverse width
$L_y$ of the domain increases.  As a consequence, we predict that as
the domain width in the transverse direction increases, a larger
number of spots can emerge after a breakup instability. 

\begin{figure}[t!]
	\begin{center}
		\centering
		\subfigure[]{\includegraphics[width=0.25\textheight]{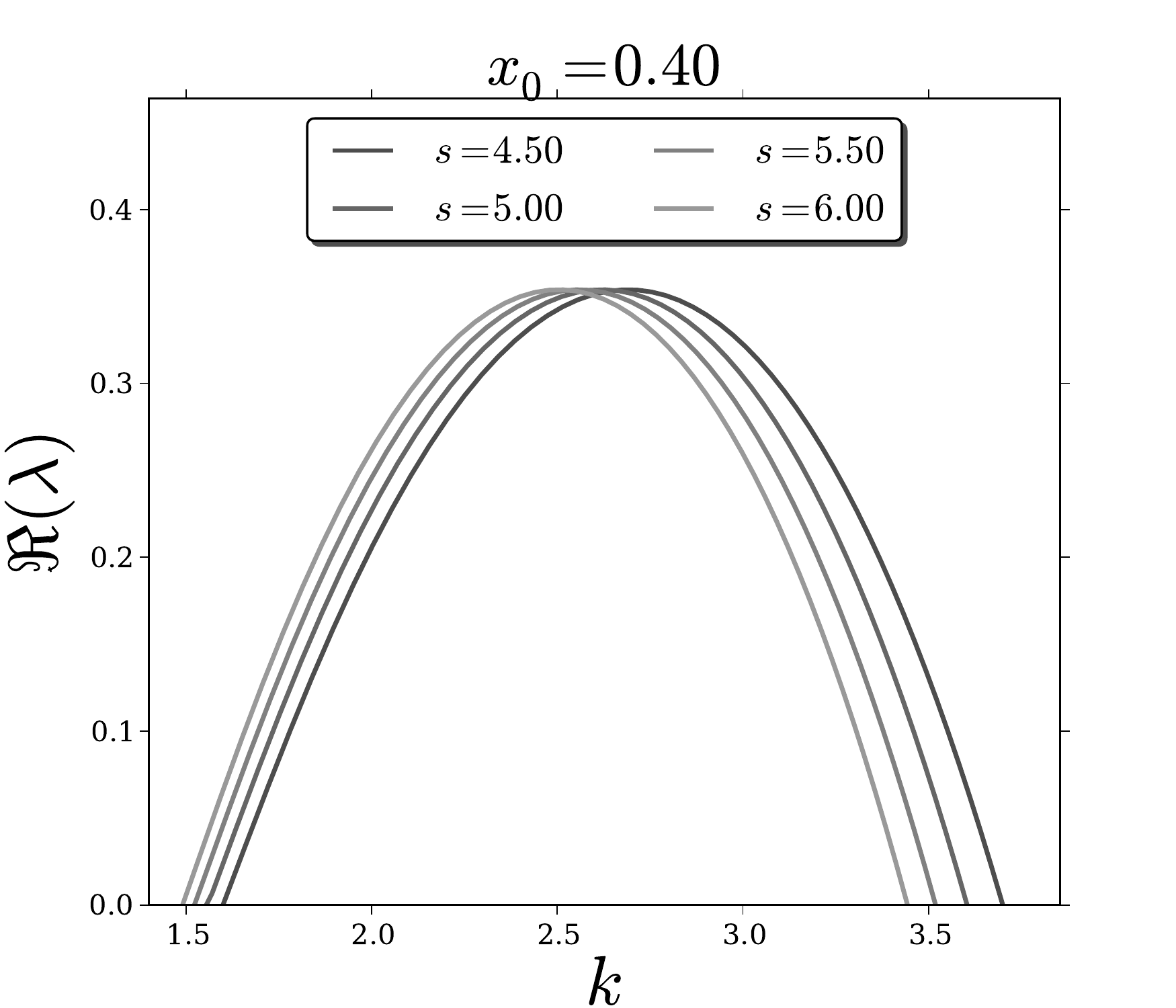}\label{sf:disprela}}
		\centering
		\subfigure[]{\includegraphics[width=0.25\textheight]{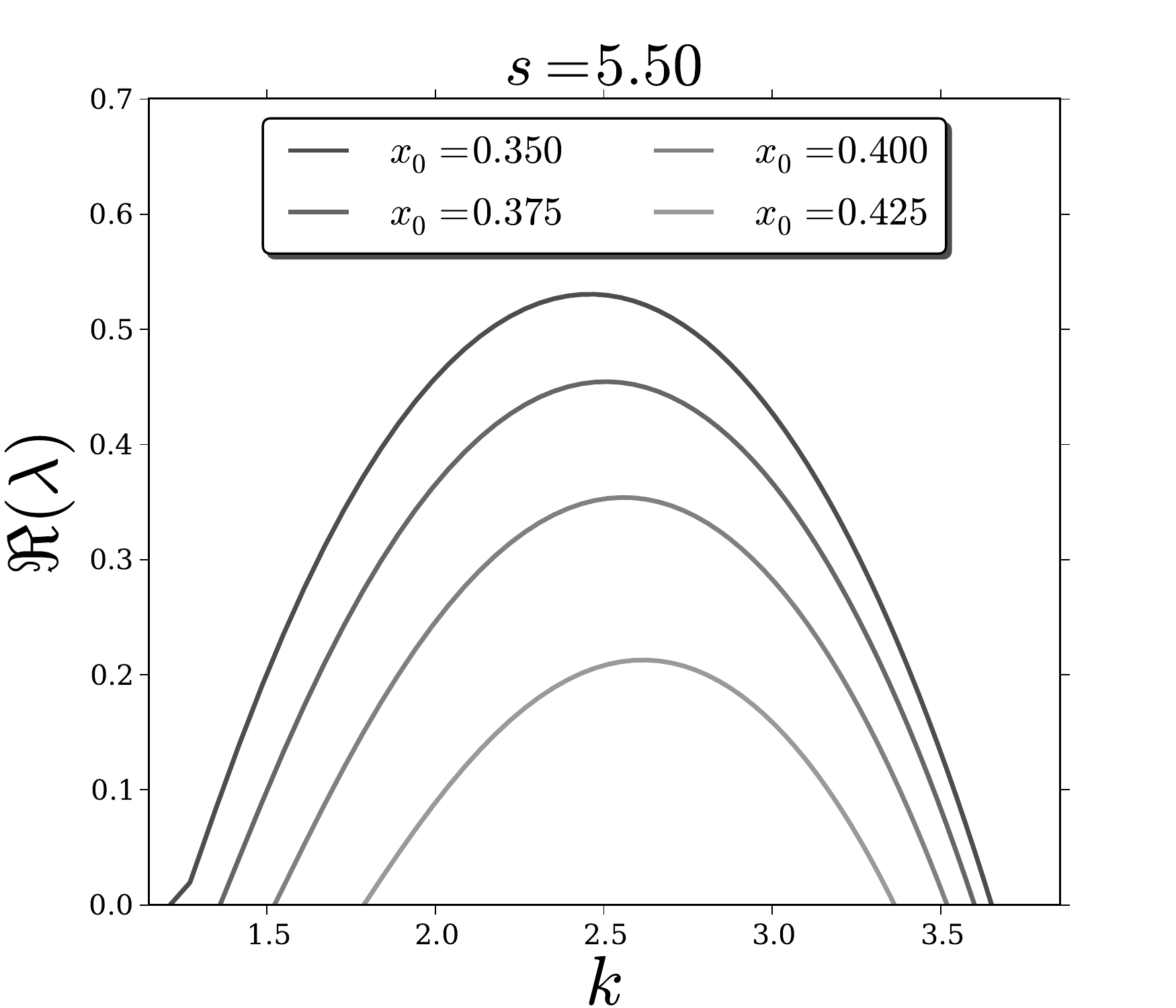}\label{sf:disprelb}}
	\end{center}
	\caption{Dispersion relation $\Re(\lambda)$ versus $k$ for a
          steady-state interior localised stripe. In~(a) we fix
          $x_0=0.4$ (and hence $\gamma$ by~\eqref{eq:equil}) and plot
          the dispersion relation for several aspect ratio parameters
          $s$. In~(b) we fix $s=5.5$ and plot the dispersion relation
          for several steady-state pairs of $(x_0,\gamma)$.  The
          re-scaled parameter set three, given in Table~\ref{tab:tab}, was
          used. The
          nearest integer value of $k$ to the location of the maximum
          of these curves is the theoretically predicted number of
          spots to form from the break up of the stripe.}
	\label{fig:disprel}
\end{figure}

\begin{figure}[t!]
	\begin{center}
		\centering
		\subfigure[]{\includegraphics[width=0.33\textwidth]{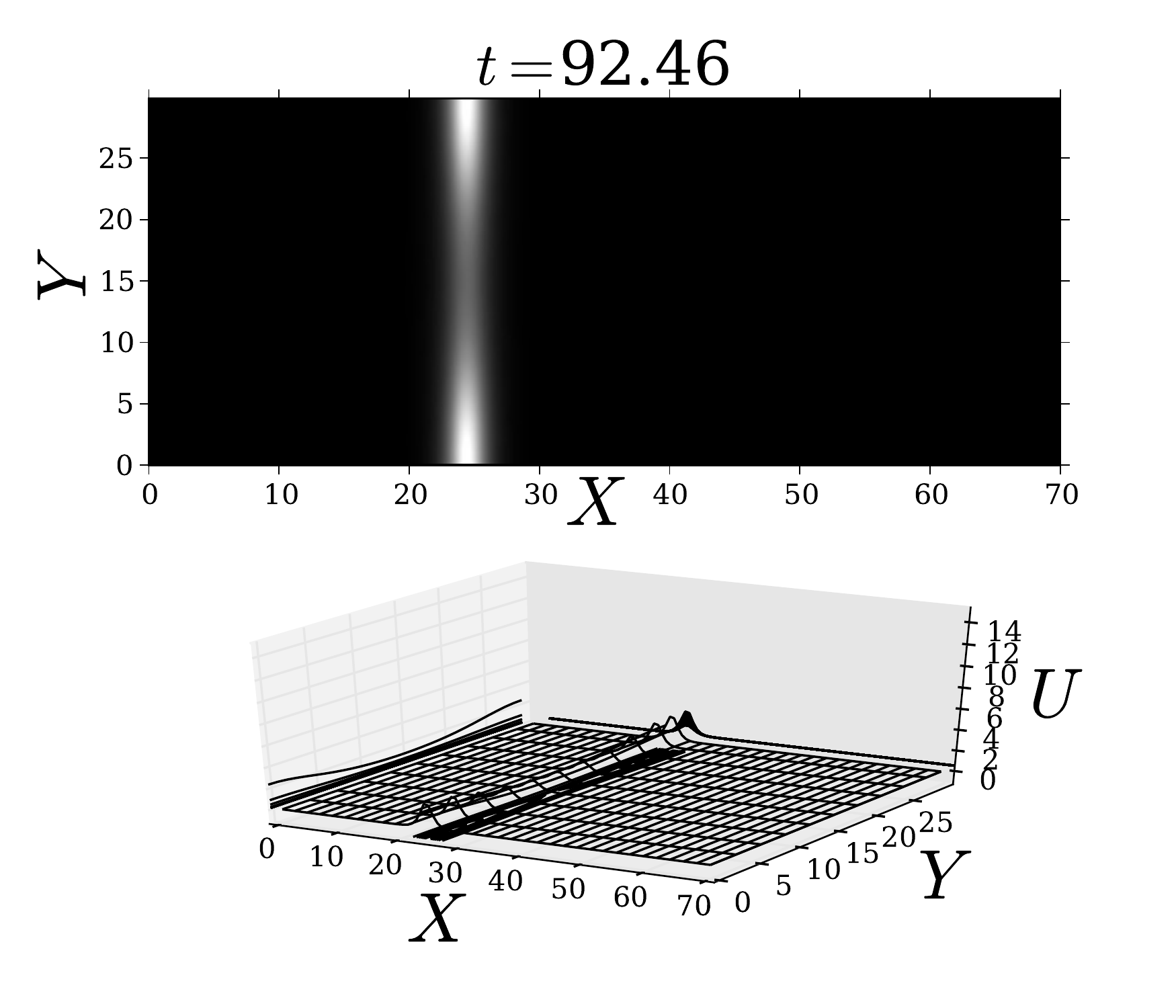}\label{sf:breakupT3a}}
		\centering
		\subfigure[]{\includegraphics[width=0.33\textwidth]{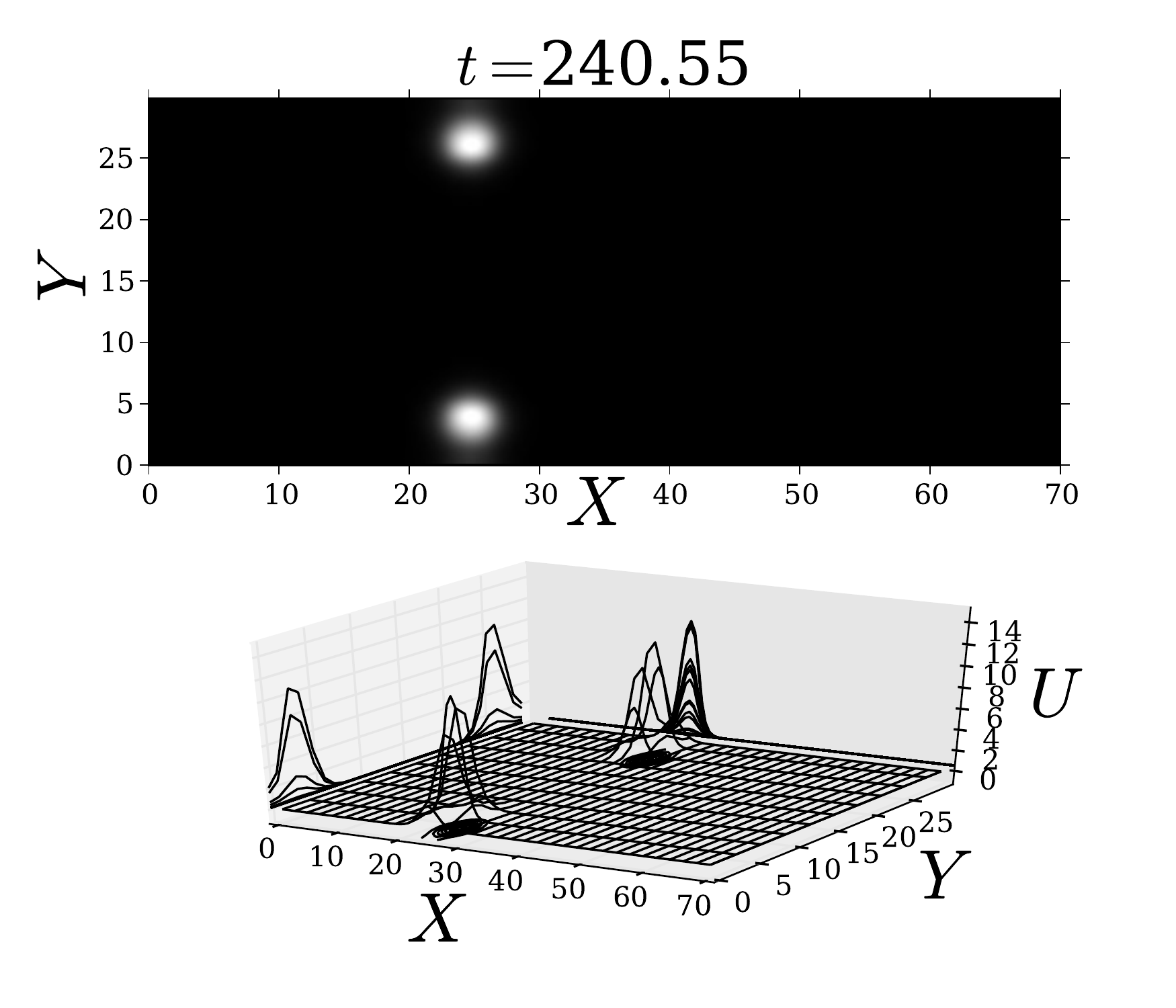}\label{sf:breakupT3b}}
		\centering
		\subfigure[]{\includegraphics[width=0.33\textwidth]{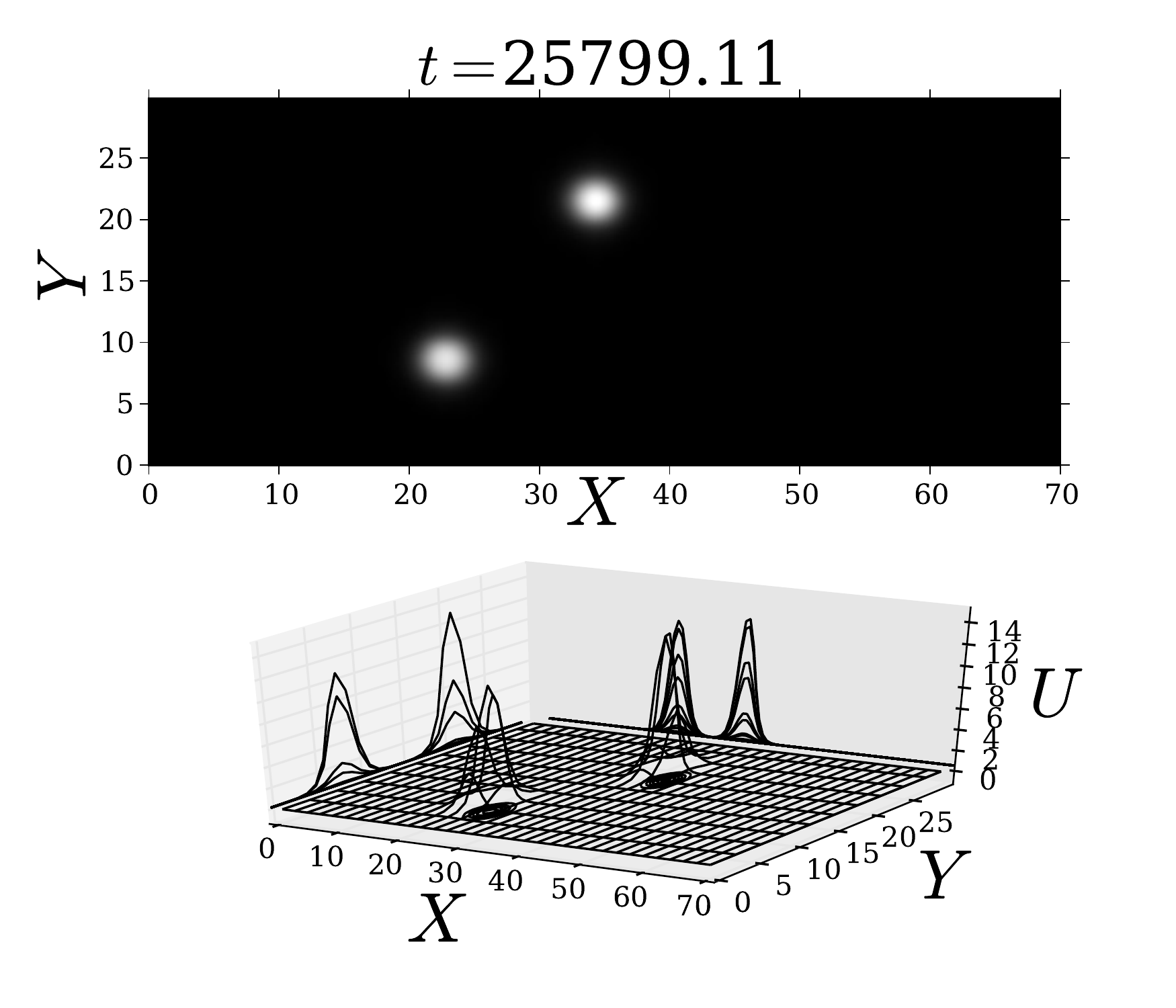}\label{sf:breakupT3c}}
		\centering
		\subfigure[]{\includegraphics[width=0.33\textwidth]{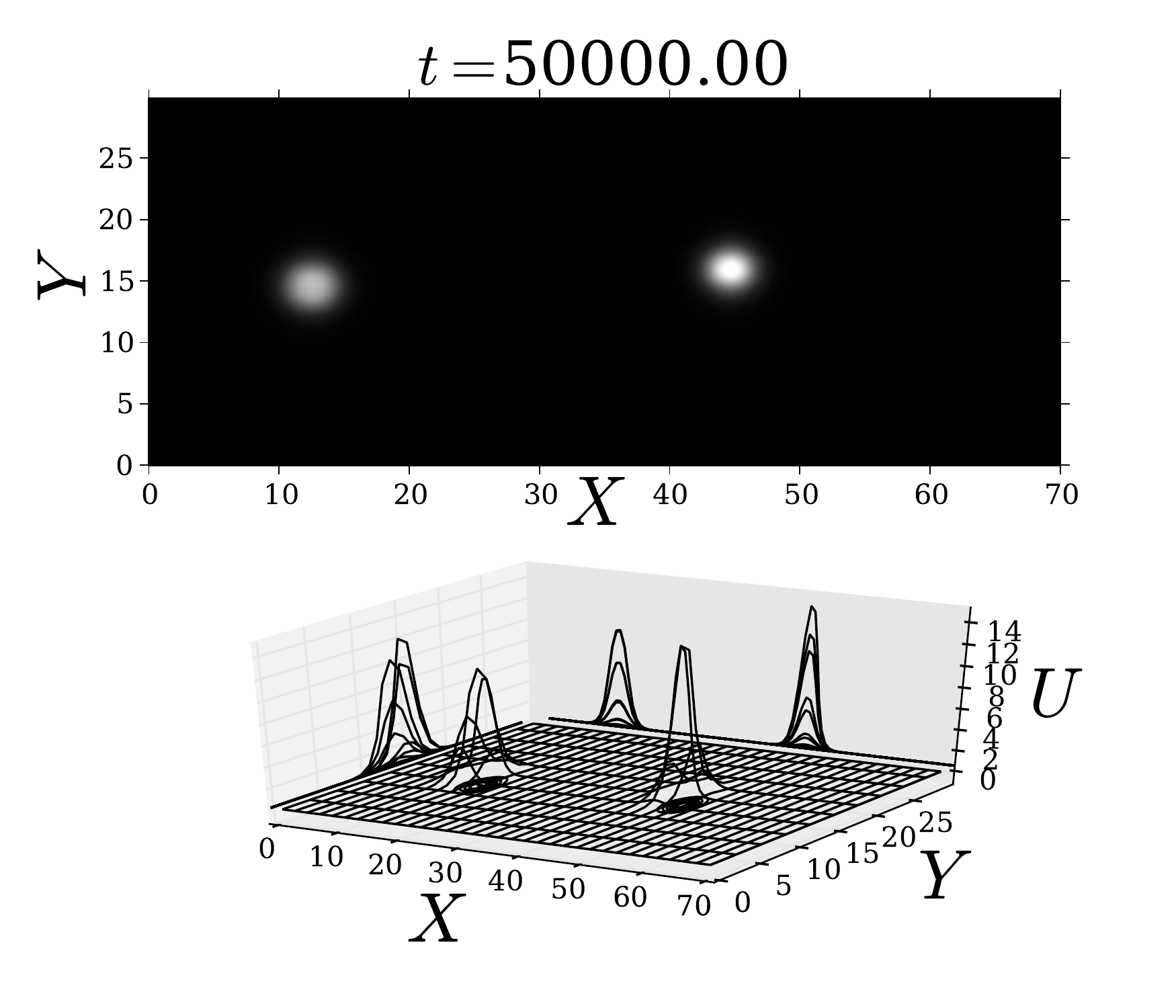}\label{sf:breakupT3d}}
	\end{center}
	\caption{Breakup instability and secondary $\mathcal{O}(1)$
          time-scale instabilities of an interior localised stripe
          for $U$. (a)~The localised stripe initially breaks into two
          spots; (b)~once formed, the spots migrate from the boundary
          towards each other along the $x$-location line, and
          (c)~rotate until they get aligned with the longitudinal
          direction. (d)~Finally, they get pinned far from each
          other. Original parameter set three as given in Table~\ref{tab:tab} with
          $k_2=0.5$, which corresponds to a stripe location at
          $x_0=24.5$.}
	\label{fig:breakupT3}
\end{figure}

On the other
hand, in Fig.~\ref{sf:disprelb}, by fixing the aspect ratio $s$, we
show that as $x_0$ decreases, or equivalently as $\gamma$ increases
(or $k_2$ decreases), the growth rate for an instability increases
rather substantially, with only a slight shift in the location of the
most unstable mode. Therefore, even though the steady-state stripe
location only slightly influences the number of spots that are
predicted from the break up of the stripe, larger values of~$\gamma$,
or equivalently smaller values of $k_2$, will promote a wider band
of unstable modes and a rather large increase in the growth rate of
the most unstable mode.  Therefore, this suggests that an interior
stripe is more sensitive to a transverse instability if it is located
closer to the left-hand boundary, where the influence of the auxin
gradient is the strongest.

The dispersion relation for an interior stripe with $s=5.5$ and
$x_0=0.35$ is the top curve in Fig.~\ref{sf:disprelb}. It predicts
that the stripe will break up into either two or three spots. To
confirm this theoretical prediction, we take the stripe as the initial
condition and perform a direct numerical simulation of the full PDE
system~\eqref{eq:stripe01} for the parameter set three in Table~\ref{tab:tab}.
The results are shown in Fig.~\ref{fig:breakupT3}, where we observe
from Fig.~\ref{sf:breakupT3a} and Fig.~\ref{sf:breakupT3b} that the
stripe initially breaks into two distinct localised spots. The spatial
dynamics of these two newly-created spots is controlled by the auxin
gradient. They initially move closer to each other along a vertical
line, and then rotate slowly in a clockwise direction to eventually
become aligned with the horizontal direction associated with the auxin
gradient $\alpha(x)$ (see Fig.~\ref{sf:breakupT3c}). Finally, in
Fig.~\ref{sf:breakupT3d} we show a stable equilibrium configuration of
two spots lying along the centre line of the transverse direction. An
open problem, beyond the scope of this paper, is to characterise the
dynamics and instabilities of spot patterns in the presence of the
auxin gradient. 

\subsection{A Boundary Stripe}\label{subsec:bndstripe}

The bifurcation diagram depicted in Fig.~\ref{fig:stribifdiag} shows
all branches to be linearly unstable under transverse perturbations,
except for a narrow window on the boundary stripe branch. In this
section we will derive and analyse the NLEP associated with a boundary
stripe centred at $x=0$. We remark that the stability of a boundary
stripe was not investigated in the prior studies of~\cite{doelman01}
and~\cite{kolok01}. Although we give only a formal derivation of the
NLEP, we will obtain rigorous results for the spectrum of the NLEP.

A steady-state boundary spike $(u_s,v_s)$ centred at $x=0$ was
constructed asymptotically in the limit $\eps\to 0$ in Proposition~4.4
of~\cite{bcwg}, with the result
\begin{gather}\label{beq:b_usvs}
	v_s \sim v_b^{0} + \left(-\frac{x^2}{2D_0} +
        \frac{x}{D_0}\right)\,, \qquad u_s\sim \frac{1}{\alpha(0)
          v_b^{0}} w\left({x/\eps}\right) \,, \qquad
        v_b^{0}=\frac{3\beta \gamma}{\alpha(0)} \,,
\end{gather}
where $w(\xi)=({3/2})\,\sech^{2}({\xi/2})$ is the even homoclinic solution of
$w^{\prime\prime}-w+w^2=0$. Upon substituting~\eqref{beq:b_usvs} into
\eqref{eq:stripe01}, we obtain the eigenvalue
problem~\eqref{eq:stripe04} characterizing transverse instabilities on
an ${\mathcal O}(1)$ time-scale.

We then look for a localised eigenfunction for $\varphi(x)$ in the form
\begin{flalign}\label{beq:eigenstripe}
	\Phi_b(\xi) = \varphi(\eps \xi) \,, \qquad \xi\equiv \eps^{-1}x \,. 
\end{flalign}
From~\eqref{beq:b_usvs} we calculate $2 u_s v_s \alpha \sim 2w$ and
$\alpha u_s^2\sim {\alpha(0) w^2/\left[\alpha(0)v^0\right]^2}$ for $x$
near $0$. In this way, we obtain from~\eqref{eq:ustripelambda} that
$\Phi_b(\xi)\sim \Phi_{b0}(\xi) + o(1)$, where $\Phi_{b0}$ satisfies
\begin{flalign}\label{beq:stripe05}
	{\mathcal L}_0 \Phi_{b0} + \frac{w^2}{\alpha(0)
          \left[v_b^{0}\right]^2}\psi(0) =
        \left(\lambda+s\varepsilon^2m^2\right)\Phi_{b0} \,, \quad
        \xi\geq 0 \,; \qquad \Phi_{b0\xi}(0)=0 \,, \quad \Phi_{b0} \to
        0 \, \quad \mbox{as}\quad \xi\to \infty \,.
\end{flalign}
Here $\mathcal{L}_0\Phi_{b0}\equiv\Phi_{b0\xi\xi}-\Phi_{b0} + 2w\Phi_{b0}$.

Next, we must calculate $\psi(0)$ in~\eqref{beq:stripe05}
from~\eqref{eq:vstripelambda}. To do so, we use~\eqref{beq:b_usvs}
and~\eqref{beq:eigenstripe} for $u_s$, $v_s$, and~$\varphi$, and we
integrate~\eqref{eq:vstripelambda} over $0<x<\delta$, where $\delta$
is an intermediate scale between the inner and outer regions
satisfying ${\mathcal O}(\eps)\ll \delta\ll {\mathcal O}(1)$. In this
way, we obtain
\begin{gather*}
	D_{0}\psi_x\vert_{0}^{\delta} + {\mathcal O}(\delta)
        -\frac{\tau \gamma \psi(0)}{\alpha(0) \left[v_b^{0}\right]^2}
        \int_0^{\delta/\eps} w^{2}\, d\xi + {\mathcal O}(\eps\delta) =
        2\tau\gamma \int_{0}^{\delta/\eps} \left(w\Phi_{b0} -\kappa
        \Phi_{b0}\right) \, d\xi + {\mathcal O}(\eps \delta \tau
        \lambda) \,,
\end{gather*}
where $\kappa\equiv {\left(1-{\beta/\tau}\right)/2}$. Since $\delta\gg
{\mathcal O}(\eps)$ and $\int_{0}^{\infty} w^2\, d\xi=3$, we obtain in
the limit $\delta\to 0$ with ${\delta/\eps}\gg 1$ that
\begin{gather}\label{bc:jump}
	D_0 \psi_{x}(0^{+}) \equiv
        \frac{3\tau\gamma\psi(0)}{\alpha(0)\left[v_{b}^{0} \right]^2}
        + 2\tau \gamma \int_{0}^{\infty} \left(w\Phi_{b0} -\kappa
        \Phi_{b0}\right) \, d\xi\,.
\end{gather}
In this way, we obtain from~\eqref{eq:vstripelambda}
and~\eqref{bc:jump} that the leading-order outer solution $\psi_0$ for
$\psi$ satisfies 
\begin{equation} \label{beq:stripe06}
 \psi_{0xx}-sm^2\psi_0   = 0 \,, \qquad 0<x\leq 1 \,; \qquad \psi_{0x}(1)=0\,;
  \qquad D_0 \psi_{0x}(0^{+}) = \frac{a_b}{\gamma} \psi_{0}(0) + \gamma b_b \,,
\end{equation}
where, upon using~\eqref{beq:b_usvs} for $v_{b}^{0}$, we have
defined $a_b$ and $b_b$ by
\begin{flalign}\label{beq:abkap}
	a_b\equiv\frac{\tau \alpha(0)}{3 \beta^2} \,, \qquad
        b_b\equiv2\tau \int_{0}^{\infty} \left(w\Phi_{b0} -\kap
        \Phi_{b0} \right) \, d\xi \,, \qquad \kap \equiv
        \frac{1}{2}\left(1 - \frac{\beta}{\tau} \right) \,.
\end{flalign}
The solution to the ODE in~\eqref{beq:stripe06} with $\psi_{0x}(1)=0$
is $\psi_0(x)=A\cosh\left[\sqrt{s}m(x-1)\right]$. The constant $A$ is
found by satisfying the condition in (\ref{beq:stripe06}) at $x=0$,
which then determines $\psi_0(0)$ as
\begin{gather*}
 \psi_0(0)= - \frac{\gamma^2 b_b}{a_b + D_0\gamma \sqrt{s} m 
\tanh\left(\sqrt{s}m\right)} \,.
\end{gather*}

Upon substituting $\psi_0(0)$ into~\eqref{beq:stripe05}, and by
using~\eqref{beq:abkap} for $a_b$ and $b_b$, we obtain after some
re-arrangement that
\begin{gather}\label{beq:o_nlep_1}
	{\mathcal L}_0 \Phi_{b0} - \frac{\mu_b}{3} w^2\left(I_1-\kappa
        I_2\right)= \left(\lambda + \eps^2 s m^2\right)\Phi_{b0} \,,
        \quad \xi\geq 0\,; \qquad \Phi_{b0\xi}(0)=0 \,, \quad
        \Phi_{b0} \to 0 \, \quad \mbox{as}\quad \xi\to \infty \,.
\end{gather}
In~\eqref{beq:o_nlep_1}, we have defined $\mu_b$, $I_1$, and $I_2$, by
\begin{gather} \label{beq:o_nlep_2}
	\mu_b\equiv \frac{2}{1+\chi_b \sqrt{s}m
          \tanh\left(\sqrt{s}m\right)} \,, \qquad \chi_b\equiv
        \frac{3D_0\beta^2\gamma}{\tau\alpha(0)} \,, \qquad I_1\equiv
        \int_{0}^{\infty} w\Phi_{b0}\, d\xi \,, \qquad I_2\equiv
        \int_{0}^{\infty} \Phi_{b0}\, d\xi \,.
\end{gather}
Next, we integrate~\eqref{beq:o_nlep_1} over $\xi\geq 0$ and use
$\int_{0}^{\infty} w^2\, d\xi=3$ to obtain the
relation~\eqref{eq:I1I2} between $I_1$ and $I_2$. Finally, the NLEP
for the boundary stripe is obtained by eliminating $I_2$
in~\eqref{beq:o_nlep_1}. We summarise our result for the NLEP as
follows:

\begin{prop}\label{prop:boundstripeNLEP}
The stability on an ${\mathcal O}(1)$ time-scale of a steady-state
boundary stripe solution of~\eqref{eq:stripe01} is determined by the
spectrum of the NLEP \bsub \label{prop:b_stripe}
\begin{gather}\label{beq:stripe08}
	{\mathcal L}_0 \Phi_{b0} - \theta_b(\lambda;m)\,w^2 \:
        \frac{\int_{0}^\infty w\Phi_{b0} \:d\xi}{\int_{0}^{\infty} w^2
          \, d\xi} = \left(\lambda+s\varepsilon^2m^2\right)\Phi_0\,,
        \qquad 0\leq \xi < \infty\,; \qquad \Phi_{b0}\to 0 \quad
        \mbox{as} \quad |\xi| \to \infty \,,
\end{gather}
with $\Phi_{b0\xi}(0)=0$ and
$\mathcal{L}_0\Phi_{b0}\equiv\Phi_{b0\xi\xi}-\Phi_{b0} +
2w\Phi_{b0}$. Here $\theta_b(\lambda;m)$ is given by
\begin{gather}\label{beq:nlephom}
 \theta_b(\lambda;m) \equiv \mu_b\left(\frac{\lambda+1+s\varepsilon^2m^2-
2\kappa}{\lambda+1+s\varepsilon^2m^2-\mu_b\kappa}\right)\,, 
\qquad \kappa\equiv \frac{1}{2} \left( 1- \frac{\beta}{\tau}\right) \,, \\ 
\mu_b\equiv \frac{2}{1+\chi_b \sqrt{s}m \tanh\left(\sqrt{s}m\right)} \,,
        \qquad \chi_b\equiv \frac{3D_0\beta^2\gamma}{\tau\alpha(0)}
        \,.
\end{gather}
\esub
\end{prop}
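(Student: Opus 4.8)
The plan is to mirror the formal matched-asymptotic derivation already carried out for the interior stripe in Proposition~\ref{prop:stripeNLEP}, but now accounting for the half-line geometry imposed by the boundary at $x=0$. First I would linearise~\eqref{eq:stripe01} about the boundary spike~\eqref{beq:b_usvs} to arrive at the singularly perturbed eigenvalue problem~\eqref{eq:stripe04}, and introduce the stretched inner variable $\xi=\eps^{-1}x$ with localised eigenfunction $\Phi_b(\xi)$ as in~\eqref{beq:eigenstripe}. Substituting the leading-order inner expansions $2\alpha u_s v_s\sim 2w$ and $\alpha u_s^2\sim w^2/(\alpha(0)[v_b^0]^2)$ into the $\varphi$-equation~\eqref{eq:ustripelambda} reduces it, at leading order, to the inhomogeneous local-operator problem~\eqref{beq:stripe05} for $\Phi_{b0}$, now posed on the half-line $\xi\ge 0$ with the Neumann condition $\Phi_{b0\xi}(0)=0$ inherited from the symmetry of the boundary spike and the homogeneous Neumann data at $x=0$.

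The crux is to evaluate the still-unknown constant $\psi(0)$ appearing in~\eqref{beq:stripe05}, which couples the inner problem to the slowly varying outer inhibitor field. I would obtain it by integrating the $v$-equation~\eqref{eq:vstripelambda} across the spike over an intermediate window $0<x<\delta$ with $\eps\ll\delta\ll 1$; the localised nonlinear terms collapse, in the sense of distributions, onto $x=0$, and the half-line mass integrals $\int_0^\infty w^2\,d\xi=3$ and $\int_0^\infty(w\Phi_{b0}-\kappa\Phi_{b0})\,d\xi$ appear. Taking $\delta\to 0$ with $\delta/\eps\to\infty$ produces the one-sided Robin/jump condition~\eqref{bc:jump}, which then furnishes the boundary datum at $x=0^+$ for the outer problem~\eqref{beq:stripe06}. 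Because the domain is the half-interval $(0,1)$ with a homogeneous Neumann condition at $x=1$, the outer equation $\psi_{0xx}-sm^2\psi_0=0$ is solved explicitly by $\psi_0(x)=A\cosh[\sqrt{s}m(x-1)]$ rather than by the full Green's function $G^0$ of the interior case; imposing the Robin condition fixes $A$ and hence $\psi_0(0)$. This is precisely where the factor $\sqrt{s}m\tanh(\sqrt{s}m)$ and the constant $\chi_b$ in~\eqref{beq:o_nlep_2} enter, in place of the interior quantities $G^0$ and $\chi$.

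Finally I would back-substitute $\psi_0(0)$ into~\eqref{beq:stripe05} and use the definitions of $a_b$ and $b_b$ in~\eqref{beq:abkap} to obtain the two-term NLEP~\eqref{beq:o_nlep_1}, involving the integrals $I_1=\int_0^\infty w\Phi_{b0}\,d\xi$ and $I_2=\int_0^\infty\Phi_{b0}\,d\xi$. Integrating~\eqref{beq:o_nlep_1} over $\xi\ge 0$ and again using $\int_0^\infty w^2\,d\xi=3$ yields the same algebraic relation~\eqref{eq:I1I2} between $I_1$ and $I_2$ as in the interior case; eliminating $I_2$ collapses the two nonlocal terms into the single term of~\eqref{beq:stripe08}, with the multiplier $\theta_b(\lambda;m)$ as stated.

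I expect the main obstacle to be the careful bookkeeping of the one-sided matching: tracking the powers of $\eps$ together with the $\mathcal{O}(\delta)$, $\mathcal{O}(\eps\delta)$ and $\mathcal{O}(\eps\delta\tau\lambda)$ remainders generated in the integration of~\eqref{eq:vstripelambda}, so as to confirm that only the advertised distributional contributions survive, and verifying that halving the domain correctly replaces the factor $6$ and the Green's-function value $G^0$ of the interior analysis by the factor $3$ and the explicit $\sqrt{s}m\tanh(\sqrt{s}m)$ term appearing in $\mu_b$. Once $\psi_0(0)$ is in hand, the remaining reduction to a scalar NLEP is routine and parallels the interior-stripe computation step for step.
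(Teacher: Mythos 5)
Your proposal follows the paper's derivation essentially step for step: the same inner expansion about the boundary spike leading to~\eqref{beq:stripe05}, the same integration of~\eqref{eq:vstripelambda} over an intermediate window to obtain the one-sided Robin condition~\eqref{bc:jump}, the same explicit outer solution $\psi_0(x)=A\cosh[\sqrt{s}m(x-1)]$ producing the $\sqrt{s}m\tanh(\sqrt{s}m)$ factor in $\mu_b$, and the same elimination of $I_2$ via~\eqref{eq:I1I2} to collapse the two nonlocal terms into the single multiplier $\theta_b$. This matches the paper's formal derivation, including the key replacement of the full-line mass $6$ and the Green's function $G^0$ by the half-line mass $3$ and the explicit hyperbolic solution.
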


We remark that to incorporate the homogeneous Neumann boundary condition at
$\xi=0$, we can simply extend $\Phi_{b0}$ to be an even function of
$\xi$ and replace the range $0<\xi<\infty$ of integration in the two
integrals in~\eqref{beq:stripe08} to be $-\infty<\xi<\infty$.  In this
way, we can use the NLEP stability theory of \Sref{subsec:intstripe}
for an interior stripe.

We first observe that $\mu_b=\mu_b(m)$ satisfies $\mu_b(0)=2$,
$\mu_b={\mathcal O}({1/m})$ for $m\gg 1$, and ${d\mu_b/dm}<0$ for
$m>0$. As a consequence of this behavior for $\mu_b$, we obtain, as
for the case of the interior stripe, the following proposition:

\begin{prop}\label{prop:mband02}
The NLEP in~\eqref{prop:b_stripe} has a unique unstable eigenvalue
when $m$ lies within an instability band $0<m_{\textrm{\em
    low}}<m<m_{\textrm{\em up}}$, with $m_{\textrm{\em low}}={\mathcal
  O}(1)$ and $m_{\textrm{\em up}}={\mathcal
  O}\left(\varepsilon^{-1}\right)$.
\end{prop}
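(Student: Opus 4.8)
The plan is to reduce the boundary-stripe NLEP \eqref{prop:b_stripe} to the interior-stripe NLEP \eqref{prop:h_stripe} already analysed in \Sref{subsec:intstripe}, and then to transfer Proposition~\ref{homo:nlep:rig} essentially verbatim. Following the remark after Proposition~\ref{prop:boundstripeNLEP}, I would first extend $\Phi_{b0}$ to be an \emph{even} function of $\xi$ on the whole line. Since $w$ is even, the Neumann condition $\Phi_{b0\xi}(0)=0$ is automatically consistent with such an extension, and the ratio $\int_{0}^\infty w\Phi_{b0}\,d\xi \big/ \int_{0}^\infty w^2\,d\xi$ equals $\int_{-\infty}^\infty w\Phi_{b0}\,d\xi \big/ \int_{-\infty}^\infty w^2\,d\xi$ because both integrals simply double. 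Hence \eqref{beq:stripe08} takes precisely the form of \eqref{eq:stripe08} with $\theta_h$ replaced by $\theta_b$, and the transcendental-equation machinery $g(\lambda)=\mathcal{C}(\lambda)-\mathcal{F}(\lambda)$ with $\mathcal{C}=1/\theta_{b0}$ transfers directly, $\mathcal{F}$ being unchanged since it depends only on $w$ and $\mathcal{L}_0$.

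Second, I would record the three properties of $\mu_b(m)$ stated before the proposition, which here follow from the explicit formula in \eqref{beq:o_nlep_2} rather than from Lemma~\ref{lem:green}: $\mu_b(0)=2$ because $\tanh 0=0$; for $m\gg 1$ one has $\tanh(\sqrt{s}m)\to 1$ and thus $\mu_b\sim 2/(\chi_b\sqrt{s}m)=\mathcal{O}(1/m)$; and $d\mu_b/dm<0$ because $\chi_b>0$ and the map $m\mapsto m\tanh(\sqrt{s}m)$ is strictly increasing on $m>0$ (its derivative $\tanh(\sqrt{s}m)+\sqrt{s}m\,\mathrm{sech}^2(\sqrt{s}m)$ is positive). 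The role played by the Green's-function estimates of Lemma~\ref{lem:green} in the interior case is thus taken over by these elementary facts about $\tanh$.

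Third, the two band edges follow as before. For the upper edge, $\theta_b=\mathcal{O}(1/m)$ makes the nonlocal term negligible for $m\gg 1$, so setting $m=m_0/\eps$ reduces \eqref{beq:stripe08} to the local problem $\mathcal{L}_0\Phi_{b0}=(\lambda+sm_0^2)\Phi_{b0}$, whose relevant eigenpair is the even, positive eigenfunction $\Psi_0=\mathrm{sech}^3(\xi/2)$ with eigenvalue $\upnu_0=5/4$ (its evenness guaranteeing compatibility with $\Phi_{b0\xi}(0)=0$), giving $m_{\textrm{up}}\sim\eps^{-1}\sqrt{\upnu_0/s}=\mathcal{O}(\eps^{-1})$. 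For the lower edge, neglecting $s\eps^2m^2$ and writing $\theta_{b0}(0;m)=\mu_b(1-2\kappa)/(1-\mu_b\kappa)$ from \eqref{beq:nlephom}, the properties of $\mu_b$ give $\theta_{b0}(0;m)\to 2>1$ as $m\to 0$ and, since $\kappa<1/2$ ensures $1-\mu_b\kappa>0$, the monotonicity $d\theta_{b0}(0;m)/dm<0$, so there is a unique $m_{0\textrm{low}}=\mathcal{O}(1)$ with $\theta_{b0}(0;m_{0\textrm{low}})=1$. I would then re-run the winding-number computation of Proposition~\ref{homo:nlep:rig}: the only change is $\mu\to\mu_b$ in $\mathcal{C}$, and the two signs the argument needs survive, namely $\kappa<1/2$ keeps $\mathcal{C}$ analytic in $\Re(\lambda)\ge 0$ and $b_1a_2-b_2a_1=\mu_b\kappa(\mu_b-2)<0$ keeps $\mathcal{C}_I<0$. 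Hence $g_I(\lambda_I)<0$ for $\lambda_I>0$ while $g_R(0)=\mathcal{C}_R(0)-1$ changes sign exactly at $m_{0\textrm{low}}$, yielding $N=1$ for $m_{0\textrm{low}}<m\ll\mathcal{O}(\eps^{-1})$ and $N=0$ for $0<m<m_{0\textrm{low}}$.

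I expect the main obstacle to be purely one of justification rather than computation: verifying that the even extension preserves every structural ingredient borrowed from the interior-stripe proof. The crucial point is that the unstable eigenfunction is even, so that no odd modes of $\mathcal{L}_0$ are introduced and the Neumann condition at $\xi=0$ is not an additional constraint; this is what makes the half-line problem behave exactly like the full-line NLEP. Once this is checked, uniqueness of the unstable eigenvalue throughout the band $m_{0\textrm{low}}<m<m_{\textrm{up}}$ follows from Proposition~\ref{homo:nlep:rig}, and the proof of Proposition~\ref{prop:mband02} is complete.
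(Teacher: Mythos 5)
Your proposal is correct and follows essentially the same route as the paper: the paper likewise invokes the even extension of $\Phi_{b0}$ to reduce to the interior-stripe NLEP theory, records exactly the three properties $\mu_b(0)=2$, $\mu_b={\mathcal O}(1/m)$, ${d\mu_b/dm}<0$ (now from the explicit $\tanh$ formula rather than Lemma~\ref{lem:green}), obtains the same band edges (with the lower edge given by \eqref{beq:zlow}), and transfers the winding-number argument of Proposition~\ref{homo:nlep:rig} verbatim. The details you supply, including the sign checks $\kappa<1/2$ and $b_1a_2-b_2a_1<0$ and the evenness of the unstable eigenfunction, are exactly the ones the paper leaves implicit when it states that the proof parallels the interior case.
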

Since the proof of Proposition~\ref{prop:mband02} parallels that in
\Sref{subsec:intstripe}, we only outline the derivation. However, we
remark that since $\mu_b(0)=2$, we have $\theta_b(\lambda;0)=2$ for
all~$\lambda$. Since $\theta_b(\lambda;0)=2>1$, we conclude from
Lemma~A and Theorem~1.3 of~\cite{wei02} that $\Re(\lam)<0$, and so a
1D boundary spike is stable on an ${\mathcal O}(1)$ time-scale for any
choice of the parameters $\beta$, $\tau$, and $\gamma$.

Next, since $\mu_b={\mathcal O}({1/m})$ for $m\gg 1$, we conclude
from~\eqref{beq:nlephom} that $\theta_{b}={\mathcal O}(\eps)$ when
$m={\mathcal O}(\eps^{-1})$. As such, we conclude as in
\Sref{subsec:intstripe} (see~\eqref{eq:ep01}--\eqref{eq:mup}) that, on
the regime $m={\mathcal O}(\eps^{-1})$, the boundary stripe is stable
when $m>m_{\textrm{up}}$ and is unstable when $m<m_{\textrm{up}}$,
where $m_{\textrm{up}}$ is defined in~\eqref{eq:mup}. To determine the
lower edge of the instability band, which occurs on the regime $\eps
m\ll 1$, we set $\theta_b(0;m)=1$. Upon using~\eqref{beq:nlephom}
where $\eps m\ll 1$, we readily obtain that
$m_{\textrm{low}}\sim{z_\textrm{low}/\sqrt{s}}$, where
$z=z_\textrm{low}$ is the unique root of
\begin{gather}\label{beq:zlow}
   z\tanh(z) = \frac{1-2\kappa}{\chi_b}= \frac{\beta}{\tau \chi_b} \,, 
\qquad \chi_b\equiv \frac{3D_0\beta^2\gamma}{\tau\alpha(0)} \,,
\end{gather}
where $\alpha(0)=1$. The unstable discrete eigenvalues of the
NLEP~\eqref{prop:b_stripe} are characterized as follows:

\begin{figure}[t!]
	\begin{center}
		\centering
		\subfigure[]{\includegraphics[width=0.25\textheight]{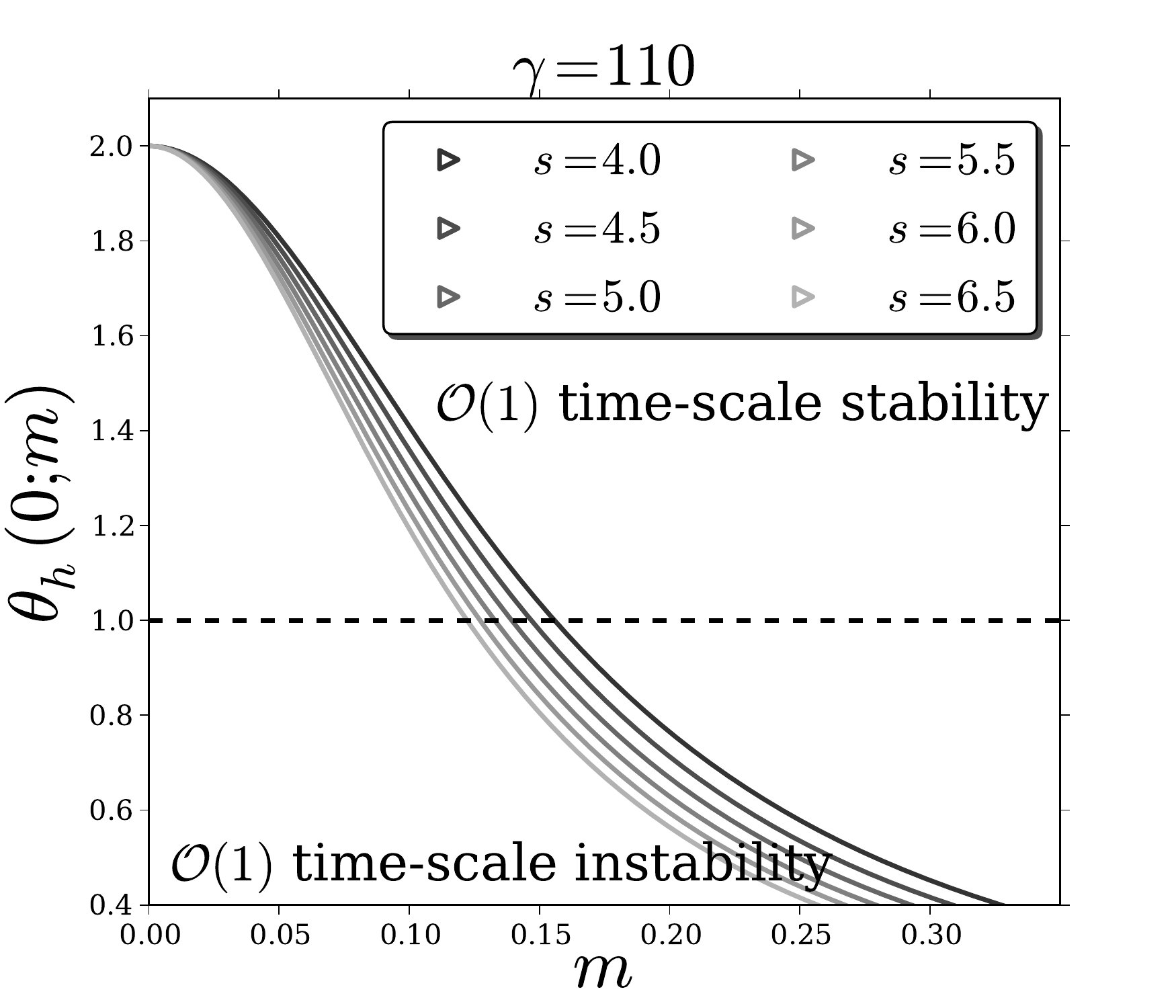}\label{sf:mcboundb}}
		\centering
		\subfigure[]{\includegraphics[width=0.25\textheight]{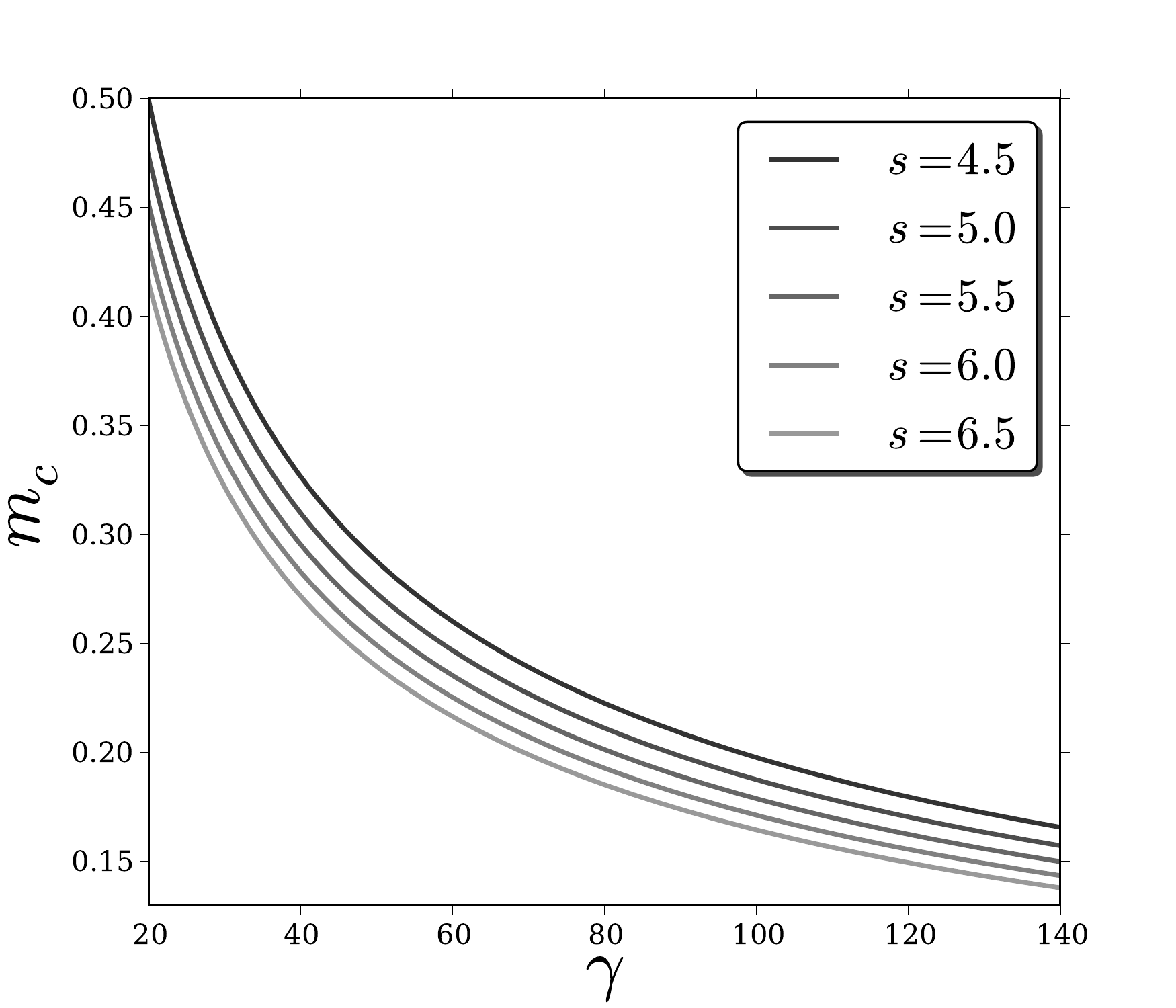}\label{sf:mcbounda}}
	\end{center}
	\caption{(a)~Plot of $\theta_b(0;m)$ versus $m$ as obtained
          from~\eqref{beq:nlephom}. (b)~The lower edge
          $m_{\textrm{low}}$ of the instability band versus $\gamma$
          for several values of the aspect ratio parameter
          $s$. From~\eqref{beq:zlow}, $m_{\textrm{low}}$ is
          proportional to ${1/\sqrt{s}}$ and $m_{\textrm{low}}$
          decreases as $\gamma$ increases. Recall from
          \eqref{eq:gabe} that $\gamma$ is inversely proportional to
          $k_2$, representing the non-dimensional auxin
          concentration at $x=0$. Re-scaled parameter set three as given in
          Table~\ref{tab:tab}.}
	\label{fig:mcbound}
\end{figure}

\begin{prop} \label{beq:nlep:rig} 
Let $\eps m\ll 1$, and let $N$ denote the number of eigenvalues of the
NLEP of~\eqref{prop:b_stripe} in $\Re(\lambda)>0$. Then, for $m$ on
the range $m\ll {\mathcal O}(\eps^{-1})$ as $\eps\to 0^{+}$, we have
\begin{itemize}
	\item (I): $\quad N=1$ if $m>m_{0\textrm{\em low}}$. The
          unique real unstable eigenvalue $\lambda_0$ satisfies
          $0<\lambda_0<\upnu_0$. Here, for $\eps\to 0$,
          $m_{0\textrm{\em low}}={z_{\textrm{\em low}}/\sqrt{s}}$ and
          $z_{\textrm{\em low}}$ is the unique root
          of~\eqref{beq:zlow}.
	\item (II): $\quad N=0$ if $0<m<m_{0\textrm{\em low}}$. 
\end{itemize}
\end{prop}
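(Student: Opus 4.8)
The plan is to exploit the even-extension device noted just before the statement, so as to reduce the boundary NLEP~\eqref{prop:b_stripe} to an interior NLEP of precisely the form already treated in Proposition~\ref{homo:nlep:rig}, and then to rerun the winding-number argument almost verbatim. First I would extend any half-line eigenfunction satisfying $\Phi_{b0\xi}(0)=0$ to an even function on $-\infty<\xi<\infty$; since $w$ is even and ${\mathcal L}_0$ preserves parity, this even extension solves the whole-line problem, the Neumann condition at $\xi=0$ is exactly the compatibility requirement for a $C^1$ even extension, and both $\int_0^\infty w\Phi_{b0}\,d\xi$ and $\int_0^\infty w^2\,d\xi$ merely double, so the ratio appearing in the nonlocal term is unchanged. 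The only even eigenvalue of ${\mathcal L}_0$ in $\Re(\lambda)\ge 0$ remains $\upnu_0=5/4$ (odd eigenfunctions carry $\int w\Phi=0$ and correspond to a Dirichlet, not Neumann, problem), so the function $\FF(\lambda)$ is identical to that of the interior case, with its single simple pole at $\upnu_0$ and the positive-axis behaviour recorded in~\eqref{Fglob} and~\eqref{eq:fprop}.

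Next I would check that the multiplier $\theta_b$ carries the same analytic structure as $\theta_{h0}$. Writing $\CC(\lambda)\equiv 1/\theta_b(\lambda;m)=(a_1+b_1\lambda)/(a_2+b_2\lambda)$ with $a_1=1-\mu_b\kappa$, $b_1=1$, $a_2=\mu_b(1-2\kappa)$, $b_2=\mu_b$, the pole sits at $\lambda=-(1-2\kappa)<0$ because $\kappa<1/2$, so $\CC$ is analytic in $\Re(\lambda)\ge 0$; on the imaginary axis $b_1a_2-b_2a_1=\mu_b\kappa(\mu_b-2)<0$ since $\mu_b<2$ (as $\mu_b(0)=2$ with ${d\mu_b/dm}<0$), which forces $\CC_I<0$ for all $\lambda_I>0$; and $\CC_R\to b_1/b_2>0$ as $\lambda_I\to\infty$. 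These are exactly the ingredients used in Proposition~\ref{homo:nlep:rig}, so with $g=\CC-\FF$ one again obtains $g_I<0$ for $\lambda_I>0$, $g_I(0)=0$, and $g_R(0)=\CC_R(0)-1$, whence $\left[\arg g\right]_{\Gamma_I^{+}}$ equals $0$ or $-\pi$ according to the sign of $\CC_R(0)-1$, and the argument principle delivers $N=1+\frac{1}{\pi}\left[\arg g\right]_{\Gamma_I^{+}}$.

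It then remains to identify the threshold. Setting $\theta_b(0;m)=1$ and using $1-2\kappa=\beta/\tau$ collapses to $\mu_b=2/(1+\beta/\tau)$, i.e.\ $\sqrt{s}\,m\tanh(\sqrt{s}\,m)=\beta/(\tau\chi_b)$; with $z=\sqrt{s}\,m$ this is precisely~\eqref{beq:zlow}, and since $z\tanh z$ increases monotonically from $0$ to $\infty$ there is a unique root $z_{\textrm{low}}$, giving $m_{0\textrm{low}}=z_{\textrm{low}}/\sqrt{s}$. For $m>m_{0\textrm{low}}$ one has $\theta_b(0;m)<1$, hence $\CC_R(0)>1$, $g_R(0)>0$, $\left[\arg g\right]_{\Gamma_I^{+}}=0$ and $N=1$; for $0<m<m_{0\textrm{low}}$ the inequality reverses and $N=0$. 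Finally, when $N=1$ the unique root is pinned to the real interval $0<\lambda_0<\upnu_0$ by the monotonicity and blow-up of $\FF$ in~\eqref{Fglob} together with $\CC(0)>1$, exactly as in the interior case.

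I do not expect a genuine obstacle here: the content is a reduction. The one place that warrants care is the legitimacy of the even extension---confirming that the Neumann data, the parity of ${\mathcal L}_0$ and $w$, and the doubling of both integrals really do reproduce the interior NLEP with the same $\FF$ and only $\theta_b$ in place of $\theta_{h0}$---together with the sign check $\mu_b<2$ that forces $\CC_I<0$, since the entire winding count hinges on $g_I$ being one-signed on the positive imaginary axis.
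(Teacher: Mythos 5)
Your proposal is correct and follows exactly the route the paper intends: the paper omits this proof, stating only that it is ``exactly the same as for the interior stripe case'' after the even-extension remark, and your argument is precisely that reduction (even extension preserving the ratio in the nonlocal term, verification that $\mu_b<2$ with the same rational structure for $\CC$, rerun of the winding-number count, and identification of $m_{0\textrm{low}}$ from $\theta_b(0;m)=1$ yielding~\eqref{beq:zlow}). The details you supply, including the threshold algebra $\mu_b=2/(1+\beta/\tau)$ and the monotonicity of $z\tanh z$, all check out.
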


\begin{figure}[t!]
	\begin{center}
		\centering
		\subfigure[]{\includegraphics[width=0.33\textwidth]{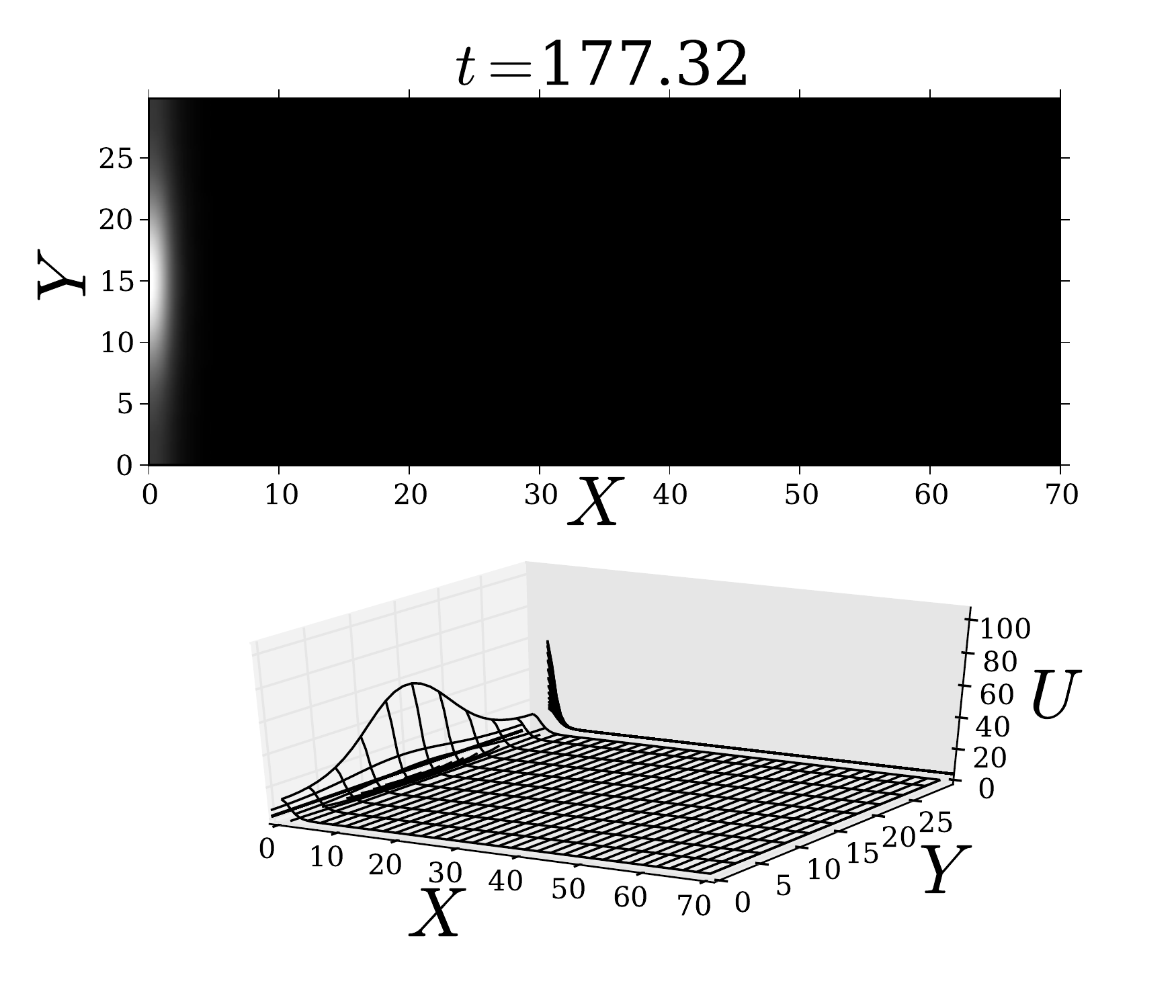}\label{sf:boundtesta}}
		\centering
		\subfigure[]{\includegraphics[width=0.33\textwidth]{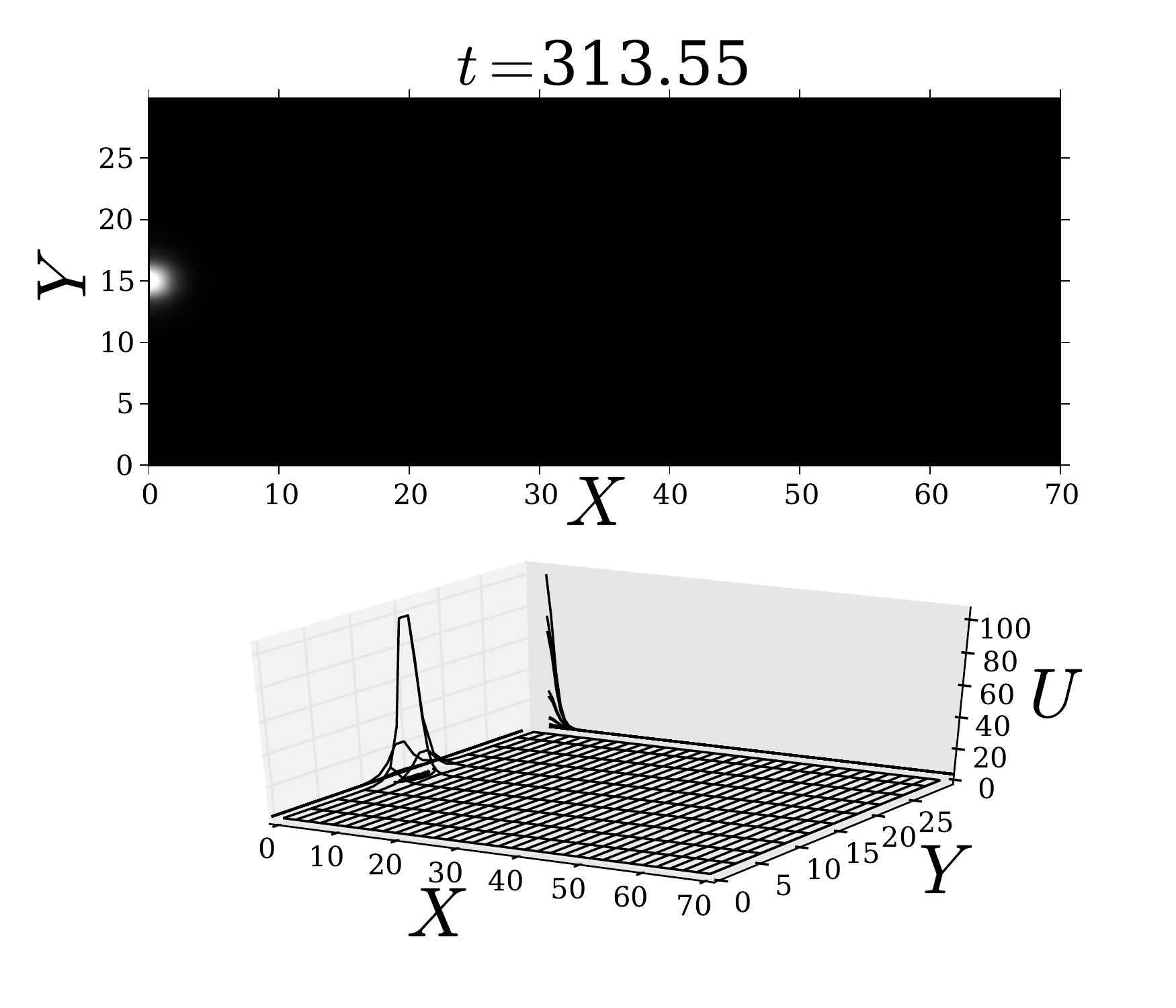}\label{sf:boundtestb}}
		\centering
		\subfigure[]{\includegraphics[width=0.33\textwidth]{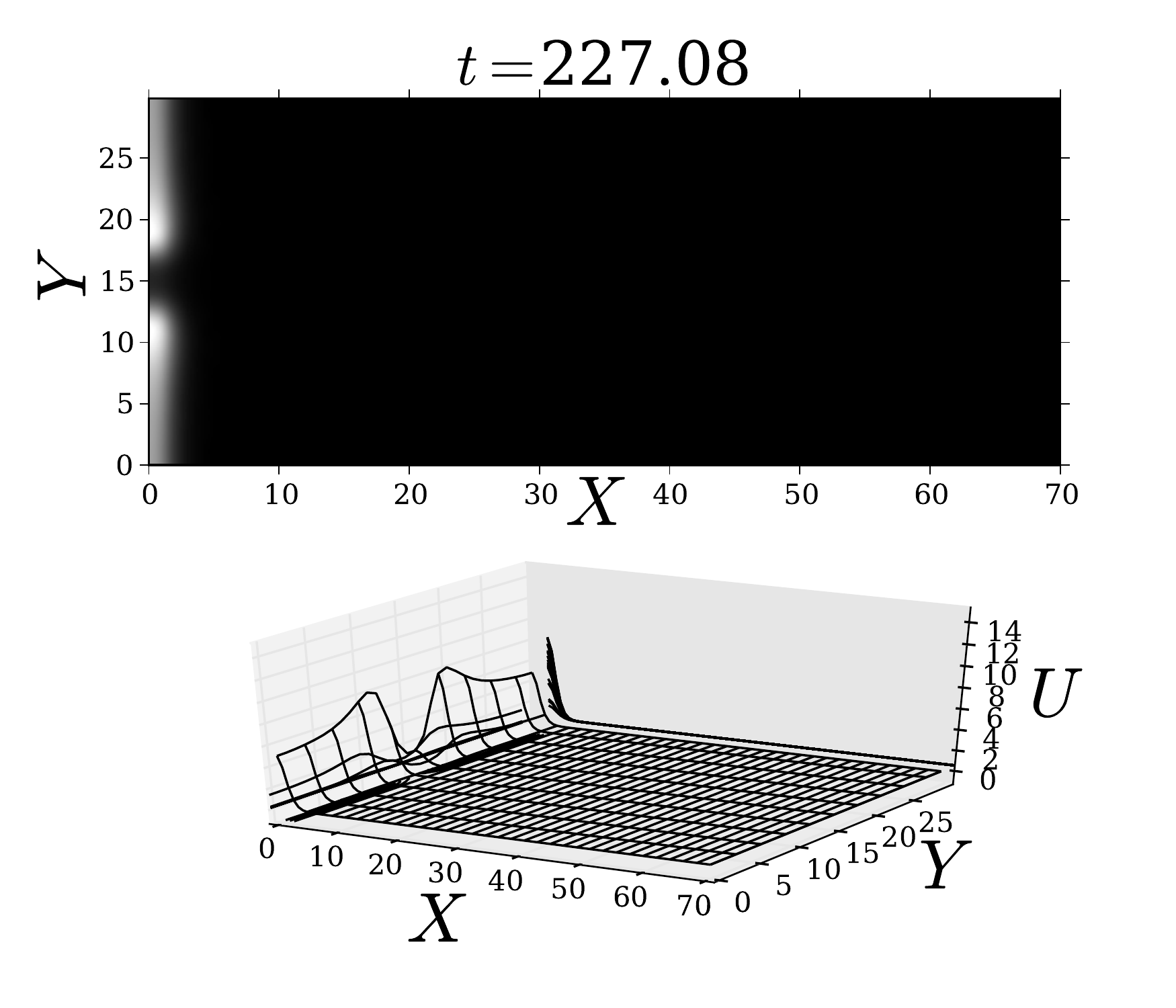}\label{sf:boundtestc}}
		\centering
		\subfigure[]{\includegraphics[width=0.33\textwidth]{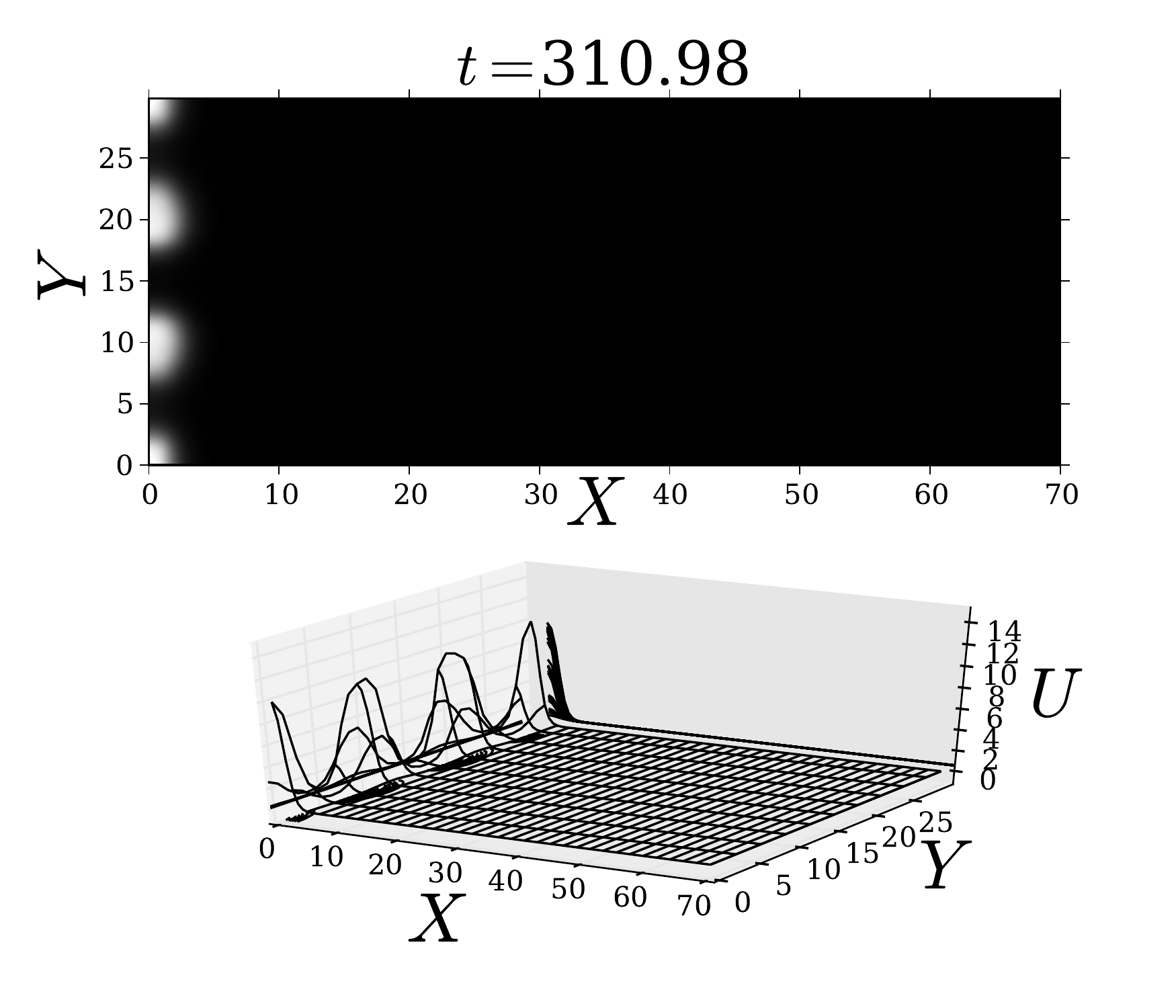}\label{sf:boundtestd}}
	\end{center}
	\caption{Breakup instability for $U$ of a boundary stripe for
          two different values of $k_2$. Initial snapping as
          (a)~$k_2=0.0013$ and (c) $k_2=0.4$; from there (b)~one
          and (d)~four spots are formed at the boundary. The original parameter
          set three, given in Table~\ref{tab:tab}, is used. The
          parameter values $k_2=0.0013$ and $k_2=0.4$ correspond
          to $\gamma=115$ and $\gamma=0.375$ respectively in terms of the re-scaled variables.}
	\label{fig:boundtest}
\end{figure}

\noindent The proof of this result is exactly the same as for the interior
stripe case, as given in Proposition~\ref{homo:nlep:rig}, and is
omitted.

In Fig.~\ref{sf:mcboundb} we plot $\theta_b(0;m)$ versus $m$ for
several values of the aspect ratio parameter $s$.  The other data
values are set as in Table~\ref{tab:tab} for the parameter set three.
As shown in the proof of Proposition~\ref{homo:nlep:rig}, the
NLEP~\eqref{prop:b_stripe} has an unstable eigenvalue when
$\theta_{b}(0;m)<1$. In Fig.~\ref{sf:mcbounda} we plot the lower edge
$m_{\textrm{low}}$ of the instability band versus $\gamma$ for several
values of $s$, as obtained from numerically determining the root of
$\theta_b(0;m)=1$ from \eqref{beq:nlephom}. For~$\eps\ll 1$, we have
that $m_{\textrm{low}}\sim m_{0\textrm{low}}\equiv
{z_{\textrm{low}}/\sqrt{s}}$, where $z_{\textrm{low}}$ is the unique
root of~\eqref{beq:zlow}. From~\eqref{beq:zlow}, we conclude that
$m_{0\textrm{low}}$ is proportional to ${1/\sqrt{s}}$ and that
$m_{0\textrm{low}}$ decreases as~$\gamma$ increases. Since $\gamma$ is
inversely proportional to the non-dimensional auxin concentration
$k_2$ at $x=0$ (see~\eqref{eq:gabe}), it follows that
$m_{\textrm{low}}$ is larger for larger values of $k_2$ when
$\eps\ll 1$. Recall that the upper edge~$m_{\textrm{up}}$ of the
instability band is independent of $k_2$ and only depends on $s$
and $\eps$. As such, we expect that the location $m_{\textrm{max}}$ of
the maximum growth rate is larger for larger $k_2$, suggesting that
as $k_2$ is increased the boundary stripe will break up into an
increasing number of spots.

To test this prediction, we solve the full RD system~\eqref{eq:ROPb}
numerically with a boundary stripe as the initial condition. Two
simulations are performed; one for a small value of $k_2=0.0013$,
corresponding to $\gamma=115$, and one with the larger value
$k_2=0.4$, for which $\gamma=0.375$. The other parameter values are
fixed as in Table~\ref{tab:tab} for the parameter set three. For $k_2=0.0013$, in Fig.~\ref{sf:boundtesta} we
show that the boundary stripe breaks up into one spot, which is
eventually formed at the midpoint of the transversal length (see
Fig.~\ref{sf:boundtestb}). In contrast, for the larger value
$k_2=0.4$, in Fig.~\ref{sf:boundtestc} we show that the boundary
stripe initially begins to break in two, ultimately leading to four
spots along the boundary, as shown in Fig.~\ref{sf:boundtestd}. These
results confirm the theoretical prediction that a boundary stripe will
break up into a larger number of spots as $k_2$ is increased.


\section{Conclusions}
\label{sec:con}

This paper has sought to make more realistic the analysis began in
\cite{bcwg} of a generalised Schnakenberg system with a spatial
gradient of the active nonlinear term. The model seeks to explain the
auxin-mediated action of ROPs in an {\em Arabidopsis} root hair cell
leading to the creation of a unique isolated patch of active ROP from
which hair formation is initiated. The choice of a rectangular 2D
domain and homogeneous auxin concentration in the $y$-direction in
this work was motivated by a compromise between more biological
realism and mathematical tractability. Realistically, the reactions we
model are thought to take place in the cytosol of the plant cell,
which in a thin domain occupying the space between the cell wall and
the cell vacuole, the high-pressure void within plant cells that
maintains turgor pressure. Modelling the portion of this space that
abuts the root epidermis, we have in reality a thin slice formed out
of a fixed circumferential arc of the space between two concentric
cylinders. We have simplified this domain in two ways. First, we have
ignored diffusion in the radial direction, although in effect this is
captured by the much larger diffusion constant of the inactive ROPs
that are free to move in all radial position compared with the active
from, that is bound to the outer wall.  Second, we have ignored
curvature, as we do not believe this is likely to affect diffusion
significantly and can be approximated by small adjustments to
diffusion constants.

The other simplification we have chosen is to assume no $y$-dependence on the 
auxin gradient. In a sense this is the simplest possible assumption given
that evidence currently in the literature so-far only supports a gradient in
the $x$-direction \cite{jones01}, with no information on $y$-dependence. 
A key test then is whether in the absence
of any $y$-gradient, a spot-like rather than a stripe-like patch 
will occur. 
 
Broadly speaking, our analysis and computations support the
conclusions reached in 1D. For low $k_2$-values (low overall auxin
concentration or short cells) there is a patch of active ROP that is
confined to the basal end of the cell. As $k_2$ is increased there is
a bifurcation into states which have increasing numbers of spots,
which correspond to either wild type (where there would be a unique interior
spot) and various multiple hair mutant
types in which auxin is increased to much higher levels.  Owing 
to the presence of fold bifurcations, there is an
overlap between the parameter intervals in which the different states
exist, which suggests hysteretic transitions upon increase and decrease
of the bifurcation parameter. In \cite{bcwg} this property was argued
to be crucial and to imply biological robustness; a cell that is in
the process of forming a single hair would not reverse this process or
start growing an extra one if the auxin concentration were to
suddenly change. Moreover, owing to the auxin gradient $k_2 \alpha (x)$,
spot-like patches first form where the auxin concentration is highest, 
that is towards the basal end of the cell, as observed in wild type. 

Another encouraging finding has been that we have found the
instability of stripes into spot-like states occurs on an $\mathcal
O(1)$ time-scale.  This means that once the boundary patch of active
ROP switches into the cell interior, it quickly, on an
$\mathcal{O}(1)$ time-scale, breaks up into spots. Note that there can
therefore be no multi-stripe interior states either.  This is an
important implication as the transition into a spot-like state can be
interpreted as a minimising energy (maximising entropy)
thermodynamical process. That is, in order to maintain a sufficient
supply of active ROP to induce localised cell wall softening, the aggregation
process follows the least energy cost.


One weakness of our results though is that the analysis of the 
$\mathcal{O}(1)$ times-cale  instability is
only really tractable due to the Neumann boundary conditions in $y$ 
and homogeneity of the auxin in the $y$-direction.
One biologically unrealistic consequence of this simplification  
is that there is no preference for spot-like patches 
to form on the lateral mid-line; there is an equal chance that half-spots can
form at the transverse edge of the domain. In reality, softening cell wall
patches always occur along the mid-line of the cell. 
It seems then that a more complete mathematical model of the root hair
patterning process would require some non-trivial $y$-dependence in order to
pin spots transversly. 
This could easily be accounted for by the nature of the transport of auxin
into neighbouring non-root-hair cells as suggested by \cite{jones01}
(see also \cite{grien01} for a modelling approach). 
The analytic
approach developed here would no longer be applicable in this case. 
An investigation of these effects is left to future work. 
This could be modelled by either allowing an auxin gradient in both directions
or by having traverse boundary conditions of Robin type. Such an analysis is
beyond the scope of this paper, and is left to future work. 

Another connection that is left for future work is the understanding
of the multi-spot solutions in terms of the theory of so-called
homoclinic snaking \cite{burke01,woods01} in which multiple localised
patterns coexist with stable periodic and homogeneous background
states.  Recently \cite{acvf} we showed that the spatially
homogeneous version of the system investigated here in 1D satisfies all
the ingredients of that theory, which explains the presence of
localised patterns of arbitrary wide spatial extent (provided the
domain is long enough). The inclusion of a gradient $\alpha(x)$
multiplying the main bifurcation parameter $k_2$, ensures that all
these localised branches do not coexist for asymptotically the same
parameter intervals, but at parameter intervals that drift as the
parameter value is changed, so called ``slanted snaking''
\cite{dawes01}. This slant occurs, in effect, because the local value of the
parameter $k_2 \alpha (x)$ decreases as the centre of the localised
pattern shifts to the right. 
An analysis of the bifurcation diagram
of localised 2D patterns in this system using such methods is left
for future work, but we note the subtleties that can occur
in rectangular domains \cite{snakeornot}.

\bigskip


{\bf Acknowledgements.} The research of V. F. B--M. for this work was
supported by a CONACyT grant from the Mexican government and
additional financial support from the UK EPSRC. M.~J.~W. was supported
by NSERC grant 81541.


\bibliographystyle{siam}
\bibliography{stripesbiblio_v03}
\end{document}